\newcommand{\EndIf}{}
\newcommand\ttbraceleft{\texttt{\symbol{123}}}
\renewcommand{\phi}{\varphi} 
\newcommand{\AAA}{\mbox{\large \boldmath $\alpha$}}
\newcommand{\FOt}{\mbox{$\mathrm{FO}^2$}}
\newcommand{\Ct}{\mbox{$\mbox{\rm C}^2$}}
\newcommand{\ExpSpace}{\textsc{ExpSpace}}
\newcommand{\NExpTime}{\textsc{NExpTime}}
\newcommand{\AExpTime}{\textsc{AExpTime}}
\newcommand{\str}[1]{{\mathfrak{#1}}}
\newcommand{\restr}{\!\!\restriction\!\!}
\newcommand{\N}{{\mathbb N}}
\newcommand{\sss}{\scriptscriptstyle}
\newif\ifdraftpaper
\newcommand{\nb}[1]{\textcolor{red}{\bf\large \#}\footnote{\textcolor{blue}{#1}}}
\newcommand{\nb}[1]{}
\newcommand{\succv}{{\downarrow}}
\newcommand{\lessv}{{\downarrow^{\scriptscriptstyle +}}}
\newcommand{\succh}{{\rightarrow}}
\newcommand{\lessh}{{\rightarrow^{\scriptscriptstyle +}}}
\newcommand{\succlessv}{{\downarrow \downarrow^+}}
\newcommand{\predgreatv}{{\uparrow \uparrow^+}}
\newcommand{\tsuccv}{\theta_{\downarrow}}
\newcommand{\tprecv}{\theta_{\uparrow}}
\newcommand{\tlessv}{\theta_{\downarrow \downarrow^+}}
\newcommand{\tgreatv}{\theta_{\uparrow \uparrow^+}}
\newcommand{\tsucch}{\theta_{\rightarrow}}
\newcommand{\tprech}{\theta_{\leftarrow}}
\newcommand{\tlessh}{\theta_{\rightrightarrows^+}}
\newcommand{\tgreath}{\theta_{\leftleftarrows^+}}
\newcommand{\tfree}{\theta_{\not\sim}}
\newcommand{\teq}{\theta_{=}}
\newcommand{\FOtall}{\mbox{$\FOt[\succv, \lessv, \succh, \lessh, \tau_{com}]$}}
\newcommand{\Ctall}{\mbox{$\Ct[\succv, \lessv, \succh, \lessh, \tau_{com}]$}}
\newcommand{\CTwoFull}{\mbox{$\Ct[\succv, \lessv, \succh, \lessh]$}}
\newcommand{\Wprop}[3]{W \hspace*{-3pt}\left\langle  #1, #2, #3 \right\rangle}
\newcommand{\Psiprop}[3]{\Psi \langle #1,#2, #3 \rangle}
\newcommand{\posone}{
\scalebox{0.8}{
\begin{tikzpicture}[baseline=0.7ex,scale=1]
\node at (0,0) {$\bullet$};
\node at (0,0.45) {$\circ$};
\node at (0,0.45) {$\times$};
\node at (0.1,0.25) {$\lessv$};
\end{tikzpicture}
}
}
\newcommand{\postwo}{
\scalebox{0.8}{
\begin{tikzpicture}[baseline=0.7ex,scale=1]
\node at (0,0) {$\bullet$};
\node at (0,0.45) {$\bullet$};
\node at (0,0.45) {$\times$};
\node at (0.1,0.25) {$\lessv$};
\end{tikzpicture}
}
}
\newcommand{\posthree}{
\scalebox{0.8}{
\begin{tikzpicture}[baseline=0.7ex,scale=1]
\node at (0,0) {$\bullet$};
\node at (0,0.45) {$\bullet$};
\node at (0.1,0.25) {$\lessv$};
\node at (-0.25,0.45) {$\rightarrow$};
\node at (-0.1,0.6) {${\sss +}$};
\node at (-0.45,0.45) {$\circ$};
\node at (-0.45,0.45) {$\times$};
\end{tikzpicture}
}
}
\newcommand{\posfour}{
\scalebox{0.8}{
\begin{tikzpicture}[baseline=0.7ex,scale=1]
\node at (0,0) {$\bullet$};
\node at (0,0.45) {$\circ$};
\node at (0.1,0.25) {$\lessv$};
\node at (-0.25,0.45) {$\rightarrow$};
\node at (-0.1,0.6) {${\sss +}$};
\node at (-0.45,0.45) {$\circ$};
\node at (-0.45,0.45) {$\times$};
\end{tikzpicture}
}
}
\newcommand{\posfive}{
\scalebox{0.8}{
\begin{tikzpicture}[baseline=0.7ex,scale=1]
\node at (0,0) {$\bullet$};
\node at (0,0.45) {$\circ$};
\node at (0,0.45) {$\times$};
\node at (0,0.25) {$\succv$};
\end{tikzpicture}
}
}
\newcommand{\possix}{
\scalebox{0.8}{
\begin{tikzpicture}[baseline=0.7ex,scale=1]
\node at (0,0) {$\bullet$};
\node at (0,0.45) {$\circ$};
\node at (0,0.45) {$\times$};
\node at (0.1,0.25) {$\lessv$};
\node at (0.17,0.2) {${\sss +}$};
\end{tikzpicture}
}
}
\newcommand{\posseven}{
\scalebox{0.8}{
\begin{tikzpicture}[baseline=0.7ex,scale=1]
\node at (0,0) {$\circ$};
\node at (0,0.45) {$\bullet$};
\node at (0,0) {$\times$};
\node at (0.1,0.25) {$\lessv$};
\node at (0.17,0.2) {${\sss +}$};
\end{tikzpicture}
}
}
\newcommand{\poseight}{
\scalebox{0.8}{
\begin{tikzpicture}[baseline=2ex,scale=1]
\node at (0,0.45) {$\bullet$};
\node at (-0.25,0.45) {$\rightarrow$};
\node at (-0.1,0.6) {${\sss +}$};
\node at (-0.45,0.45) {$\circ$};
\node at (-0.45,0.45) {$\times$};
\end{tikzpicture}
}
}
\newcommand{\posnine}{
\scalebox{0.8}{
\begin{tikzpicture}[baseline=2ex,scale=1]
\node at (0,0.45) {$\circ$};
\node at (-0.25,0.45) {$\rightarrow$};
\node at (-0.1,0.6) {${\sss +}$};
\node at (-0.45,0.45) {$\bullet$};
\node at (0,0.45) {$\times$};
\end{tikzpicture}
}
}
\newcommand{\posten}{
\scalebox{0.8}{
\begin{tikzpicture}[baseline=2ex,scale=1]
\node at (0,0.45) {$\bullet$};
\node at (-0.25,0.45) {$\rightarrow$};
\node at (-0.1,0.6) {${\sss +}$};
\node at (-0.25,0.6) {${\sss +}$};
\node at (-0.45,0.45) {$\circ$};
\node at (-0.45,0.45) {$\times$};
\end{tikzpicture}
}
}
\newcommand{\poseleven}{
\scalebox{0.8}{
\begin{tikzpicture}[baseline=2ex,scale=1]
\node at (0,0.45) {$\circ$};
\node at (-0.25,0.45) {$\rightarrow$};
\node at (-0.1,0.6) {${\sss +}$};
\node at (-0.25,0.6) {${\sss +}$};
\node at (-0.45,0.45) {$\bullet$};
\node at (0,0.45) {$\times$};
\end{tikzpicture}
}
}
\newcommand{\postwelve}{
\scalebox{0.8}{
\begin{tikzpicture}[baseline=0.7ex,scale=1]
\node at (0,0) {$\bullet$};
\node at (0,0.45) {$\bullet$};
\node at (0.1,0.25) {$\lessv$};
\node at (-0.25,0.45) {$\rightarrow$};
\node at (-0.1,0.6) {${\sss +}$};
\node at (-0.45,0.45) {$\circ$};
\node at (-0.45,0) {$\times$};
\node at (-0.7,0.45) {$\rightarrow$};
\node at (-0.9,0.45) {$\bullet$};
\node at (-0.55,0.6) {${\sss +}$};
\node at (-0.9,0) {$\bullet$};
\node at (-0.8,0.25) {$\lessv$};
\node at (-0.45,0) {$\circ$};
\node at (-0.35,0.25) {$\lessv$};
\end{tikzpicture}
}
}
\newcommand{\posthirteen}{
\scalebox{0.8}{
\begin{tikzpicture}[baseline=0.7ex,scale=1]
\node at (0,0) {$\circ$};
\node at (0,0.45) {$\bullet$};
\node at (0,0) {$\times$};
\node at (0.1,0.25) {$\lessv$};
\end{tikzpicture}
}
}
\newcommand{\posfourteen}{
\scalebox{0.8}{
\begin{tikzpicture}[baseline=0.7ex,scale=1]
\node at (0,0) {$\bullet$};
\node at (0,0.45) {$\bullet$};
\node at (0.1,0.25) {$\lessv$};
\node at (0.25,0.45) {$\rightarrow$};
\node at (0.25,0.6) {${\sss +}$};
\node at (0.45,0.45) {$\circ$};
\node at (0.45,0.45) {$\times$};
\end{tikzpicture}
}
}
\newcommand{\posfifteen}{
\scalebox{0.8}{
\begin{tikzpicture}[baseline=0.7ex]
\node at (0,0) {$\bullet$};
\node at (0,0.45) {$\circ$};
\node at (0.1,0.25) {$\lessv$};
\node at (0.25,0.45) {$\rightarrow$};
\node at (0.25,0.6) {${\sss +}$};
\node at (0.45,0.45) {$\circ$};
\node at (0.45,0.45) {$\times$};
\end{tikzpicture}
}
}
\newcommand{\possixteen}{
\scalebox{0.8}{
\begin{tikzpicture}[baseline=0.7ex,scale=1]
\node at (0,0) {$\bullet$};
\node at (0,0.45) {$\bullet$};
\node at (0.1,0.25) {$\lessv$};
\node at (-0.25,0.45) {$\rightarrow$};
\node at (-0.1,0.6) {${\sss +}$};
\node at (-0.45,0.45) {$\circ$};
\node at (-0.45,0.45) {$\times$};
\node at (-0.7,0.45) {$\rightarrow$};
\node at (-0.9,0.45) {$\bullet$};
\node at (-0.55,0.6) {${\sss +}$};
\node at (-0.9,0) {$\bullet$};
\node at (-0.8,0.25) {$\lessv$};
\end{tikzpicture}
}
}
\newcommand{\cutout}[1]{}
\newcommand{\fw}{{\not\sim}}
\newcommand{\tp}[2]{{\rm tp}^{\str{#1}}({#2})}
\newcommand{\ftp}[3]{{\rm ftp}^{\str{#1}}_{#2}({#3})}
\newcommand{\rftpa}[2]{{\rm rftp}_{{#1}}({#2})}
\newcommand{\rftpv}[3]{{\rm rftp}^{\str{#1}}_{#2}({#3})}
\newcommand{\hftp}[3]{{\rm hftp}^{\str{#1}}_{#2}({#3})}
\newcommand{\hshift}{\hspace*{20pt}}
\newcommand{\wcf}[2]{W^{#1}_{#2}}
\newcommand{\wcfp}[4]{W^{#1}_{#2}(#3, #4)}
\begin{document}
\title{Extending Two-Variable Logic on Trees
} 

\author{Bartosz Bednarczyk \and  Witold Charatonik\and Emanuel Kiero\'nski\\
Institute of Computer Science\\  University of Wroclaw, Poland \\ 
\texttt{bbednarczyk@stud.cs.uni.wroc.pl}\\ 
\texttt{\ttbraceleft wch,kiero\ttbraceright@cs.uni.wroc.pl}}

\newtheorem{definition}{Definition}
\newtheorem{example}{Example}
\newtheorem{theorem}{Theorem}
\newtheorem{lemma}{Lemma}
\newtheorem{corollary}{Corollary}
\newtheorem{proposition}{Proposition}

\maketitle

\begin{abstract} The finite satisfiability problem for the
  two-variable fragment of first-order logic interpreted over trees
  was recently shown to be \ExpSpace-complete. We consider two
  extensions of this logic. We show that adding either additional
  binary symbols or counting quantifiers to the logic does not affect
  the complexity of the finite satisfiability problem. However,
  combining the two extensions and adding both binary symbols and
  counting quantifiers leads to an explosion of this complexity.

  We also compare the expressive power of the two-variable fragment
  over trees with its extension with counting quantifiers. It turns
  out that the two logics are equally expressive, although counting
  quantifiers do add expressive power in the restricted case of
  unordered trees.
\end{abstract}

\begingroup
\let\clearpage\relax
\section{Introduction}
\paragraph*{Two-variable logics}
Two-variable logic, \FOt, is one of the most prominent decidable
fragments of first-order logic. It is important in computer science
because of its decidability and connections with other formalisms like
modal, temporal and description logics or query languages. For
example, it is known that \FOt{} over words can express the same
properties as unary temporal logic~\cite{EtessamiVW02} and \FOt{} over
trees is precisely as expressive as the navigational core of XPath,
a~query languages for XML documents~\cite{MarxDeRijke04}. The
complexity of the satisfiability problem for \FOt{} over words and
trees, respectively, is studied in \cite{EtessamiVW02}, and \cite{BBCKLMW-tocl}.
Namely, it is shown that its satisfiability problem over words is \NExpTime-complete and over trees---\ExpSpace-complete.

On the other hand, \FOt{} cannot express that a~structure is a~word or
a~tree and it cannot express that a~relation is transitive, an
equivalence or an order. This lead to extensive studies of \FOt{} over
various classes of structures, where some distinguished relational
symbols are interpreted in a~special way, e.g., as equivalences or
linear orders. The finite satisfiability problem for \FOt{} remains
decidable over structures where one \cite{KieronskiOttoLics05} or two
relation symbols \cite{KieronskiT09} are interpreted as equivalence
relations; where one
\cite{Otto2001} or two relations are interpreted as linear
orders~\cite{SchwentickZ10, Zeume16}; where two relations are interpreted as
successors of two linear orders~\cite{Manuel10, Figueira2012,
  CharatonikWitkowski-tocl}; where one relation is interpreted as
linear order, one as its successor and another one as equivalence
\cite{BojanczykDMSS11}; where one relation is
transitive~\cite{szwastT13}; where an equivalence closure can be
applied to two binary predicates~\cite{KieroMPT-lics12}; where
deterministic transitive closure can be applied to one binary
relation~\cite{CharatonikKM14}. It is known that the finite
satisfiability problem is undecidable for \FOt{} with two transitive
relations~\cite{Kieronski-csl12}, with three equivalence
relations~\cite{KieronskiOttoLics05}, with one transitive and one
equivalence relation~\cite{KieronskiT09}, with three linear
orders~\cite{Kieronski-csl11}, with two linear orders and their two
corresponding successors~\cite{Manuel10}. A summary of complexity
results for extensions of \FOt{} with binary predicates being the
order relations 
can be found in~\cite{Zeume16}.

In the context of extensions of \FOt{} it is enough to consider
relational signatures with symbols of arity at most
two~\cite{GrKoVa97}. Some of the above mentioned decidability results,
e.g.,~\cite{BBCKLMW-tocl, SchwentickZ10, Manuel10, Figueira2012,
  BojanczykDMSS11, CharatonikKM14}, are obtained under the restriction that besides the distinguished 
	binary symbols interpreted in a special way there
are no other binary predicates in the signature; some, like~\cite{
  KieronskiOttoLics05, KieronskiT09, Otto2001,
  CharatonikWitkowski-tocl, szwastT13, KieroMPT-lics12, Zeume16} are valid in
the general setting. 
In the undecidability results additional binary predicates are usually not  necessary.

Another decidable extension of \FOt{} is the two-variable fragment
with \emph{counting quantifiers}, \Ct, where quantifiers of the form
$\exists^{\leq k}$, $\exists^{=k}$, $\exists^{\geq k}$ are allowed.
The finite satisfiability problem for \Ct{} was proved to be decidable
and \NExpTime-complete (both under unary and binary encoding of
numbers in counting quantifiers) in~\cite{GradelOR97,
  PacholskiSTLics97, IPH05}. There are also decidable extensions of
\Ct{} with special interpretations of binary symbols:
in~\cite{CharatonikWitkowski-tocl} two relation symbols are
interpreted as child relations in two forests (which subsumes the case
of two successor relations on two linear orders), in
\cite{Pratt-lics14} one symbol is interpreted as equivalence relation
and in \cite{WchPwitC2LinearOrder} one symbol is interpreted as linear
order (and the case with two linear orders is undecidable).

\paragraph*{Our contribution}
In this paper we extend the main result from \cite{BBCKLMW-tocl},
namely \ExpSpace-completeness of the satisfiability problem for \FOt{}
interpreted over finite trees without additional binary symbols.  We consider
two extensions of this logic. We show that adding either additional
binary symbols or counting quantifiers to the logic does not increase
the complexity of the satisfiability problem. However, when we combine
the two extensions and add both binary symbols and counting
quantifiers then the complexity explodes and the problem is at
least as hard as the emptiness problem for vector addition tree
automata~\cite{deGrooteGS04}. Since emptiness of vector addition tree
automata is a~long-standing open problem, showing decidability of
\Ct{} over trees with additional binary symbols is rather unlikely in
nearest future.

Let us recall that the situation is similar to the case of finite words:
\FOt{} with a linear order and the induced successor relation remains
\NExpTime-complete when extended either with additional binary relations \cite{Zeume16}
or with counting quantifiers \cite{WchPwitC2LinearOrder}. Combining both additional 
ingredients gives a logic which this time is know to be decidable, but with very high complexity, as it is equivalent to 
the emptiness problem of multicounter automata \cite{WchPwitC2LinearOrder}.

We additionally compare the expressive power of the two-variable fragment over
trees with its extension with counting quantifiers. It is not
difficult to see that \FOt{} over unordered trees cannot count and
thus \Ct{} is strictly more expressive in this case. However, the
presence of order in form of sibling relations gives \FOt{} the
ability of counting and makes the two logics equally expressive.


\section{Preliminaries}

\subsection{Logics, trees and atomic types}
We work with signatures of the form $\tau=\tau_0 \cup \tau_{nav} \cup
\tau_{com}$, where $\tau_0$ is a set of unary symbols, $\tau_{nav}= \{
\succv, \lessv, \succh, \lessh \}$ is the set of \emph{navigational}
binary symbols, and $\tau_{com}$ is a set of \emph{common} binary
symbols.  Over such signatures we consider two fragments of
first-order logic: \FOt{}, i.e., the restriction of first-order logic
in which only variables $x$ and $y$ are available, and its extension
with \emph{counting quantifiers}, \Ct{}, in which quantifiers of the
form $\exists^{\ge n}$, $\exists^{\le n}$, for $n \in \N$ are allowed.
We assume that the reader is familiar with their standard semantics.

We write \FOt$[\tau_{bin}]$ or \Ct$[\tau_{bin}]$ where $\tau_{bin} \subseteq \tau_{nav} \cup \tau_{com}$ to denote that the only binary symbols that
are allowed in signatures are from $\tau_{bin}$. We will mostly work with two logics: \FOt{}$[\succv, \lessv, \succh, \lessh, \tau_{com}]$,
for $\tau_{com}$ being an arbitrary set of common binary symbols, and $\CTwoFull$, i.e., the fragment with counting quantifiers with
no common binary symbols.

We are interested in finite unranked, ordered tree structures, in which the interpretation of the symbols from $\tau_{nav}$ is fixed: 
$\succv$ is interpreted as the child relation, $\succh$ as the right sibling relation, and $\lessv$ and $\lessh$
as their respective transitive closures.  
We read $u \downarrow w$ as "$w$ is a \emph{child} of $u$" and $u \rightarrow w$ as "$w$ is the \emph{right sibling} of $u$". We will also use other standard terminology like \emph{ancestor}, \emph{descendant}, \emph{preceding-sibling}, \emph{following-sibling}, etc.

We use $x \fw y$ to abbreviate the formula stating that $x$ and $y$ are in \emph{free position},
i.e., that they are related by none of the navigational binary predicates available in the signature. 
Let us call the formulas specifying the relative position of a pair of elements in a tree with respect to binary navigational predicates \emph{order formulas}.
There are ten possible order formulas: 
$x \succv y$, $y \succv x$, $x \lessv y \wedge \neg (x \succv y)$, $y \lessv x \wedge \neg (y \succv x)$, $x \succh y$, $y \succh x$, $x \lessh y \wedge \neg (x \succh y)$, $y \lessh x \wedge \neg (y \succh x)$, $x \fw y$, $x{=}y$. 
They are denoted, respectively, as: $\tsuccv$, $\tprecv$, $\tlessv$, $\tgreatv$, 
$\tsucch$, $\tprech$, $\tlessh$, $\tgreath$, $\tfree$, $\teq$. Let $\Theta$ be the set of these ten formulas. 

We use symbol $\str{T}$ (possibly with sub- or superscripts) to denote tree structures. For a given
tree $\str{T}$ we denote by $T$ its universe. 
A \emph{tree frame} is a tree over a signature containing no unary predicates and no common binary predicates.
We will sometimes say that a tree frame $\str{T}_f$ is the tree frame \emph{of} $\str{T}$, or that $\str{T}$ is \emph{based} of $\str{T}_f$
if $\str{T}_f$ is obtained from $\str{T}$ by dropping the interpretation of all unary and common
binary symbols. We say that a formula $\phi$ is satisfiable \emph{over} a tree frame
if it has a model based on this tree frame.

Given a tree $\str{T}$, we say that a node $v \in T$ is a \emph{minimal} node (having some fixed property)
if there is no $w \in T$ (having this property)  such that $\str{T} \models w \lessv v$. 
A $\succv$-path ($\succh$-path) is a sequence of nodes $v_1, \ldots, v_k$ such that $\str{T} \models v_i \succv v_{i+1}$  ($\str{T} \models v_i \succh v_{i+1}$), for $i=1, \ldots, k-1$. Given a $\succv$-path ($\succh$-path) $P$
we say that distinct nodes $v_1, \ldots, v_l$ (having some fixed property) are $l$ \emph{smallest} elements (having this property) on $P$ if 
for any other $v \in P$ (having this property) we have $\str{T} \models v_i \lessv v$ ($\str{T} \models v_i \lessh v$) for $i=1, \ldots, l$.
Analogously we define \emph{maximal} and \emph{biggest} elements.

An (atomic) \emph{$1$-type} is a maximal satisfiable set of atoms or negated atoms with free variable $x$. Similarly,
an (atomic) \emph{$2$-type} is a maximal satisfiable set of atoms or negated atoms with free variables $x,y$. 
Note that the numbers of atomic $1$- and $2$-types are bounded exponentially in the size of the signature. 
We often identify a 
type with the conjunction of all its elements. If we work with a signature with empty $\tau_{com}$ then $1$-types
correspond to subsets of $\tau_0$.
We denote by $\AAA_\varphi$ the set of $1$-types over the signature consisting of symbols appearing in $\varphi$.

For a given $\tau$-tree $\str{T}$, and a node $v \in T$ we say that $v$
\emph{realizes} a $1$-type $\alpha$ if $\alpha$ is the unique $1$-type
such that $\str{T} \models \alpha[v]$.  We denote by $\tp{T}{v}$ 
the $1$-type realized by $v$. Similarly, for
distinct $u,v \in T$, we denote by $\tp{T}{u,v}$ the unique
$2$-type \emph{realized} by the pair $u,v$, i.e.~the type $\beta$ such
that $\str{T} \models \beta[u,v]$.

\subsection{Normal forms}

As usual when working with satisfiability of two-variable logics we employ Scott-type normal form. 
We start with its adaptation for the case of \FOtall{}.

\begin{definition}\label{def:fo2normalform}
  We say that an \FOtall{} formula $\phi$ is in \emph{normal form} if
$$\phi=\forall xy \chi(x,y) \wedge \bigwedge\limits_{i=1}^{m} \forall x (\lambda_i(x) \Rightarrow \exists y (\theta_i(x,y) \wedge \chi_i(x,y))),$$
where $\lambda_i(x)$ is an atomic formula $A(x)$ for some unary symbol
$A$, $\chi(x,y)$ and $\chi_i(x,y)$ are quantifier-free, and
$\theta_i(x,y)$ is an order formula.
\end{definition}
	
Please note that the equality symbol may be used in $\chi$, e.g., we
can enforce that a model contains at most one node satisfying $A$:
$\forall xy (A(x) \wedge A(y) \Rightarrow x{=}y)$.
The following lemma can be proved in a standard fashion (cf.~e.g.,
\cite{BBCKLMW-tocl}).

\begin{lemma} \label{l:normalformbin} Let $\phi$ be an \FOtall{}
  formula over a signature $\tau$. There exists a polynomially
  computable \FOtall{} normal form formula $\phi'$ over signature
  $\tau'$ consisting of $\tau$ and some additional unary symbols, such
  that $\phi$ and $\phi'$ are satisfiable over the same tree frames.
\end{lemma}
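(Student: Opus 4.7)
The plan is to follow the standard Scott-style renaming procedure, adapted to force each existential conjunct to begin with one of the ten order formulas in $\Theta$ and to be guarded by a single unary atom, as required by Definition~\ref{def:fo2normalform}. First we recursively eliminate nested quantifiers: working bottom-up, for each maximal subformula $\psi(x)$ of $\phi$ of the shape $\exists y\,\sigma(x,y)$ or $\forall y\,\sigma(x,y)$ with $\sigma$ quantifier-free, we introduce a fresh unary predicate $P_\psi$, replace the occurrence by $P_\psi(x)$, and conjoin the defining axiom $\forall x\,(P_\psi(x)\leftrightarrow\psi(x))$. Iterating this yields a formula equivalent to $\phi$ of the form
$$\forall xy\,\chi(x,y)\;\wedge\;\bigwedge_{j}\forall x\,\bigl(P_j(x)\leftrightarrow\exists y\,\psi_j(x,y)\bigr),$$
with $\chi$ and all $\psi_j$ quantifier-free. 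Each equivalence splits into a universal direction $\forall xy\,(\psi_j(x,y)\Rightarrow P_j(x))$, which we fold into the $\forall xy$ conjunct, and an existential direction $\forall x\,(P_j(x)\Rightarrow\exists y\,\psi_j(x,y))$.

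To bring the existential directions into the shape demanded by the definition, we exploit that $\Theta$ partitions the relative positions of two nodes in any tree frame, so $\bigvee_{\theta\in\Theta}\theta(x,y)$ is a tautology and $\exists y\,\psi_j(x,y)$ is equivalent to $\bigvee_{\theta\in\Theta}\exists y\,(\theta(x,y)\wedge\psi_j(x,y))$. For every pair $(j,\theta)$ we introduce a fresh unary predicate $P_{j,\theta}$, add the existential conjunct $\forall x\,(P_{j,\theta}(x)\Rightarrow\exists y\,(\theta(x,y)\wedge\psi_j(x,y)))$ in exactly the required shape, and fold into $\forall xy\,\chi$ the universal conjunct $\forall x\,(P_j(x)\Rightarrow\bigvee_{\theta\in\Theta}P_{j,\theta}(x))$. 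The resulting formula $\phi'$ is in normal form: each guard $P_{j,\theta}(x)$ is a single unary atom, and each existential matrix starts with some $\theta\in\Theta$.

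For correctness, we argue that any $\str{T}\models\phi$ expands to a model of $\phi'$ on the same tree frame by interpreting $P_j$ according to its defining equivalence and, for each $v\in T$ with $\str{T}\models P_j(v)$, choosing one $\theta\in\Theta$ for which a $\theta$-witness of $\psi_j$ exists and placing $v$ into $P_{j,\theta}$; conversely, the $\tau$-reduct of any model of $\phi'$ satisfies $\phi$ on the same frame, since every original existential conjunct is enforced on all elements of the corresponding $P_j$. Only $O(|\phi|)$ fresh unary predicates are introduced, multiplied by the constant $|\Theta|=10$, giving polynomial size and polynomial-time computability. The step requiring the most care is the tautological case split by $\Theta$, which simultaneously pins $\theta_i$ to $\Theta$ and keeps the guards $\lambda_i$ atomic; the rest is bookkeeping identical to the common-symbol-free Scott transformation used, e.g., in~\cite{BBCKLMW-tocl}.
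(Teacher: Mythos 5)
Your proof is correct and follows essentially the route the paper intends: the paper gives no explicit proof, merely noting that the lemma ``can be proved in a standard fashion (cf.~\cite{BBCKLMW-tocl})'', and your argument is precisely that standard Scott-type renaming, with the one genuinely non-routine adaptation --- the case split over the exhaustive set $\Theta$ of order formulas via fresh guards $P_{j,\theta}$ --- identified and handled correctly. The only detail left implicit is the treatment of sentence-level quantifiers and their Boolean combinations (e.g.\ a top-level $\exists x\,\sigma(x)$, which must be relativized to an arbitrary element and then split over $\Theta$ in the same way), but this is part of the routine bookkeeping the paper also elides.
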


Consider a conjunct $\phi_i=\forall x (\lambda_i(x) \Rightarrow
\exists y (\theta_i(x,y) \wedge \chi_i(x,y)))$ of an \FOtall{} normal
form formula $\phi$.  Let $\str{T} \models \phi$, and let $v \in T$ be an
element such that $\str{T} \models \lambda_i[v]$. Then an element $w
\in T$ such that $\str{T} \models \theta_i[v,w] \wedge \chi_i[v,w]$ is
called a \emph{witness} for $v$ and $\phi_i$.  We call $w$ an
\emph{upper} witness if $\theta_i(v, w) \models w \lessv v$, a
\emph{lower} witness if $\theta_i(v, w) \models v \lessv w$, a sibling
witness if $\theta_i (v, w) \models v \lessh w \vee w \lessh v$, and a
\emph{free} witness if $\theta_i(v, w) \models v \fw w$. We also
sometimes simply speak about $\lessh$-witnesses, $\uparrow$-witnesses,
etc.

For \Ct{} we use a similar but slightly different normal form. One
obvious difference is that it uses counting quantifiers, the other is
that its $\forall \exists$-conjuncts does not need to contain the
$\theta_i$-components, specifying the position of the required
witnesses.  Refining the normal form by incorporating those components
is possible but seems to require an exponential blow-up.

\begin{definition} \label{def:c2normalform} We say that a formula
  $\varphi \in \CTwoFull$ is in \textit{normal form}, if: $$\varphi =
  \forall x \forall y \ \chi(x,y) \wedge \bigwedge_{i=1}^{m} \left(
    \forall x \ \exists^{\bowtie_i C_i} y \ \chi_i(x,y) \right),$$
  where $\bowtie_i \in \lbrace \leq, \geq \rbrace$, each $C_i$ is a
  natural number, and $\chi(x,y)$ and all $\chi_i(x,y)$ are
  quantifier-free.
\end{definition}

\begin{lemma}[\cite{GradelOR97}]\label{lemma:normalform}
  Let $\varphi$ be a formula from $\CTwoFull$ over a
  signature~$\tau$. There exists a polynomially computable $\CTwoFull$
  formula $\varphi'$ over signature $\tau'$ consisting of $\tau$ and
  some additional unary symbols, such that $\varphi$ and $\varphi'$
  are satisfiable over the same tree frames.
\end{lemma}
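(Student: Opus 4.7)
The plan is to follow the standard Scott-normal-form renaming procedure of \cite{GradelOR97}, and verify that it does not affect the underlying tree frame. First I put $\varphi$ into negation normal form, rewriting the plain quantifiers $\exists y\, \psi$ as $\exists^{\geq 1} y\, \psi$ and $\forall y\, \psi$ as $\exists^{\leq 0} y\, \neg\psi$, so that every quantifier occurrence is a counting quantifier. Then I process quantified subformulas inside out: whenever an innermost quantified subformula $\exists^{\bowtie n} y\, \psi(x,y)$, with $\psi$ quantifier-free, occurs inside $\varphi$, I introduce a fresh unary predicate $Q(x)$, replace that subformula by $Q(x)$, and conjoin axioms that express the biconditional $Q(x) \leftrightarrow \exists^{\bowtie n} y\, \psi(x,y)$.

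The biconditional splits into two conjuncts of the desired shape: $\forall x\, (Q(x) \to \exists^{\bowtie n} y\, \psi(x,y))$ and $\forall x\, (\neg Q(x) \to \exists^{\overline{\bowtie} n'} y\, \psi(x,y))$, where $\overline{\bowtie} n'$ is the dual counting threshold, obtained from the identities $\neg \exists^{\geq n} y\, \psi \equiv \exists^{\leq n-1} y\, \psi$ and $\neg \exists^{\leq n} y\, \psi \equiv \exists^{\geq n+1} y\, \psi$. (Strictly, each such conjunct still contains a guarded atom $\lambda(x)$ on the left-hand side; in Definition \ref{def:c2normalform} this can be absorbed by an additional fresh unary, or by moving the guard into the matrix using a disjunction, so the precise syntactic shape is met without any blow-up.) Iterating the renaming until no nested quantifiers remain, and finally merging all conjuncts of the form $\forall x \forall y\, \chi_j(x,y)$ into the single $\forall x \forall y\, \chi(x,y)$ conjunct, produces a formula $\varphi'$ matching Definition \ref{def:c2normalform}.

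The number of renamings is bounded by the number of quantifier occurrences in $\varphi$, each introduces a constant-size conjunct and one fresh unary symbol, so $\varphi'$ is of polynomial size in $|\varphi|$ and can be computed in polynomial time. Equisatisfiability over the same tree frames is then immediate and is the only tree-specific part of the argument. Given a tree model $\str{T} \models \varphi$, I extend it to $\str{T}'$ by interpreting each fresh $Q$ as the set of elements $a \in T$ for which the corresponding $\exists^{\bowtie n} y\, \psi(a,y)$ holds in $\str{T}$; the biconditional axioms are then satisfied and $\str{T}' \models \varphi'$. Conversely, if $\str{T}' \models \varphi'$ then the $\tau$-reduct of $\str{T}'$ is a model of $\varphi$ (an easy induction on the renaming steps). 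In both directions the interpretation of the navigational symbols in $\tau_{nav}$ is untouched, so the tree frame is preserved.

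There is essentially no genuine obstacle: the argument is purely syntactic and inherits its correctness from the well-known $\Ct$ normal-form construction. The only mild subtlety worth checking is that the dualisation used for $\neg Q$-clauses keeps every rewritten quantifier in the counting-quantifier syntax, which is handled by the two dualities above, and that the introduction of the fresh guards $\lambda_i$ (if one chooses to keep the guarded shape) does not cascade, which it does not since each renaming is a single inside-out step.
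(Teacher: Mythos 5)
The paper offers no proof of this lemma---it is quoted from \cite{GradelOR97}---and your renaming argument is precisely the standard construction behind that citation; the frame-preservation observation (all fresh symbols are unary, so the interpretation of $\tau_{nav}$ is never touched) is the only tree-specific point and you handle it correctly. The one step that does not work as written is the parenthetical disposal of the guards: Definition~\ref{def:c2normalform} requires guard-free conjuncts $\forall x\,\exists^{\bowtie_i C_i}y\,\chi_i(x,y)$, and while $\forall x\,(Q(x)\to\exists^{\leq C}y\,\psi(x,y))$ rewrites cleanly as $\forall x\,\exists^{\leq C}y\,(Q(x)\wedge\psi(x,y))$, the analogous disjunctive rewriting of the $\geq$-direction, $\forall x\,\exists^{\geq C}y\,((Q(x)\wedge\psi(x,y))\vee\neg Q(x))$, forces the universe to contain at least $C$ elements whenever some element fails $Q$; over a tree frame with fewer than $C$ nodes this can turn a satisfiable formula into an unsatisfiable one, violating the ``same tree frames'' clause. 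This is a standard and easily repaired technicality---keep the guard in the normal form (as the paper does for \FOt{} in Definition~\ref{def:fo2normalform}), or handle trees with fewer than $\max_i C_i$ nodes separately---but as stated your ``move the guard into the matrix using a disjunction'' is not frame-preserving for the $\geq$-conjuncts.
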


As in the case of \FOtall{} we speak about \emph{witnesses}. Given a
normal for $\CTwoFull$ formula $\varphi$ and a tree $\str{T} \models
\varphi$, we say that a node $w \in T$ is a witness for $v \in T$ and
a conjunct $\forall x \ \exists^{\bowtie_i C_i} y \ \chi_i(x,y)$ of
$\varphi$ if $\str{T} \models \chi_i[v,w]$. If additionally $\str{T}
\models w \lessv v$ then $w$ is an \emph{upper} witness, if $\str{T}
\models v \lessv w$ then $w$ is a \emph{lower} witness, and so on.

In Section \ref{sec:f2bin}, when a normal form formula $\phi$ is considered
we always assume that it is as in Definition \ref{def:fo2normalform}. In 
particular we allow ourselves, wihtout explicitly recalling the shape of $\phi$, to refer to its  parameter $m$ and components
$\chi, \chi_i, \lambda_i, \theta_i$. Analogously, in Section \ref{sec:c2} we 
assume that any normal form $\phi$ is as in Definition \ref{def:c2normalform}.


\section{\texorpdfstring{$\FOt$}{F2} on trees with additional binary relations}
\label{sec:f2bin}

In this section we show that the complexity of the satisfiability
problem for \FOt$[\succv, \lessv, \succh, \lessh]$ \cite{BBCKLMW-tocl} is
retained when the logic is extended with additional, uninterpreted
binary relations.

\begin{theorem} \label{t:fotbin}
The satisfiability problem for  \FOtall{} over finite trees is \ExpSpace-complete.
\end{theorem}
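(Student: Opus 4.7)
The lower bound is inherited from \cite{BBCKLMW-tocl}, where \ExpSpace-hardness was established already for $\FOt[\succv,\lessv,\succh,\lessh]$, i.e.\ without common binary symbols. For the upper bound I propose a polynomial-time reduction from the satisfiability problem of \FOtall{} to that of $\FOt[\succv,\lessv,\succh,\lessh]$, after which one can invoke the \ExpSpace{} procedure of \cite{BBCKLMW-tocl}.

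Assume $\phi$ is in the normal form of Definition~\ref{def:fo2normalform}, which is justified by Lemma~\ref{l:normalformbin}. The key observation is that a model of $\phi$ is determined by its tree frame, by the $1$-type assigned to each node, and by a $\tau_{com}$-$2$-type for each ordered pair of distinct nodes. Because \FOt{} sees a pair of nodes only through its $1$-types and its navigational position $\theta\in\Theta$, for any pair that is not designated as a witness it suffices to have \emph{some} $\tau_{com}$-$2$-type compatible with $(\alpha,\beta,\theta)$ and with $\chi$; only witness pairs are forced into a specific 2-type by the corresponding $\chi_i$. Whether a compatible 2-type exists for each triple $(\alpha,\beta,\theta)$ is a propositional check on $\phi$ and is performed once and for all.

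The reduction then enriches $\tau_0$ with polynomially many fresh unary predicates: for each conjunct $\phi_i$, each $R\in\tau_{com}$, and each of the two directions, a marker predicate whose intended meaning at node $v$ is ``$R$ holds between $v$ and the $\phi_i$-witness chosen for $v$''. Every atom $R(x,y)$ or $R(y,x)$ appearing in $\chi_i$ is replaced by the corresponding marker of $x$, and $\chi$ is rewritten as a case split over $\theta\in\Theta$ and pairs of 1-types: witness pairs are handled via the markers, whereas free pairs are handled by the precomputed existence check. The resulting normal-form formula $\phi'$ has size polynomial in $|\phi|$, uses no predicates from $\tau_{com}$, and should be satisfiable iff $\phi$ is.

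The subtle point will be verifying the backward direction of the reduction, namely extending a model of $\phi'$ to a genuine $\tau_{com}$-model of $\phi$. For each designated witness pair the markers fix its 2-type, so the only remaining difficulty is when a single pair is the witness of both endpoints under different conjuncts, in which case the two recorded $\tau_{com}$-$2$-types must agree. I would handle this by adding a symmetric consistency clause to the universal part of $\phi'$ requiring the markers of mutually-witnessing nodes to match; remaining free pairs then receive a uniformly chosen $\tau_{com}$-$2$-type drawn from the precomputed feasibility table. Once this extension argument is settled, the \ExpSpace{} bound follows directly from applying the algorithm of \cite{BBCKLMW-tocl} to $\phi'$.
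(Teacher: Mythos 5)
Your lower bound is fine, but the proposed reduction to $\FOt[\succv,\lessv,\succh,\lessh]$ has a genuine gap that I do not see how to repair along the lines you sketch. The marker predicates record, for each conjunct $\phi_i$, which $\tau_{com}$-atoms hold between $v$ and \emph{the} witness chosen for $v$ and $\phi_i$, but they do not record \emph{which} node that witness is. Consequently the translated formula loses the information that conjuncts with incompatible $\tau_{com}$-requirements need \emph{distinct} witness nodes. Concretely, take $\phi$ containing the two conjuncts $\forall x\,(A(x)\rightarrow\exists y\,(x\fw y\wedge B(y)\wedge R(x,y)))$ and $\forall x\,(A(x)\rightarrow\exists y\,(x\fw y\wedge B(y)\wedge\neg R(x,y)))$. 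Any model of $\phi$ must give each $A$-node two distinct $B$-nodes in free position, whereas after replacing $R(x,y)$ by a marker of $x$ both conjuncts collapse to ``there is some $B$-node in free position to $x$'' plus unary constraints on $x$, so $\phi'$ is satisfied by a tree with a single $B$-node in free position to each $A$-node; the backward extension step then fails because that one node cannot carry both $R$ and $\neg R$ to the same $A$-node. Your consistency clause for mutually-witnessing pairs does not touch this case, since the conflict arises from one node needing several witnesses, not from two nodes witnessing each other. The same loss of witness identity also undermines the rewriting of the universal conjunct $\chi$: on a pair $(x,y)$ in a non-functional position such as $\lessv$ or $\fw$, the rewritten formula cannot tell whether $y$ is the witness whose $\tau_{com}$-type is recorded in $x$'s markers or an unconstrained free pair, so it does not know whether to enforce the marker-determined $2$-type or merely the precomputed feasibility condition.

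This is exactly the obstacle the paper's proof is organized around: instead of reducing away $\tau_{com}$, it re-proves the small model theorem in the presence of common binary symbols (Theorem \ref{thm:fo2binsmallmodeltheorem}), shows that all free witnesses can be relocated into a single exponential-size set $F$ that is kept in memory throughout (Lemma \ref{l:fo2binglobalfree}), and runs an alternating exponential-time procedure that guesses actual $2$-types with ancestors, siblings and $F$, together with ``promised $2$-types'' for descendants. If you wish to pursue a reduction instead, you would at minimum have to make the target formula enforce lower bounds on the number of suitable nodes in each of the ten positions (which, by the expressive-power results of the paper, \FOt{} on ordered trees can in principle express), and argue that this stays polynomial in size and sound; that is a substantial construction, not a routine normalization, and it is not present in your sketch.
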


The lower bound is inherited from \FOt$[\succv, \lessv, \succh,
\lessh]$. For the upper bound we show that any satisfiable formula
$\phi$ has a model of depth and degree bounded exponentially in
$|\phi|$. Then we show an auxiliary result allowing us to restrict
attention to models in which all elements have free witnesses in a
relatively small fragment of the tree. We finally design an
alternating exponential time procedure searching for such small
models.

\subsection{Small model property}

Let $\mathfrak{f}$ be a fixed function, which for a given normal form
\FOtall{} formula $\varphi$ returns $96m^3|\AAA_\varphi|^3$. Recall
that $m$ is the number of $\forall \exists$-conjuncts of $\varphi$ and
$\AAA_\varphi$ is the set of $1$-types over the signature of $\phi$.
We will use $\mathfrak{f}$ to estimate the length of paths and the degree
of nodes in models. Note that for a given $\phi$ the value returned by
$\mathfrak{f}$ is exponentially bounded in $|\varphi|$.  It should be
mentioned that by a more careful analysis one could obtain slightly
better bounds (still exponential in $|\varphi|$), but $\mathfrak{f}$ is
sufficient for our purposes and allows for a reasonably simple
presentation.

The following small model property is crucial for obtaining
\ExpSpace-upper bound on the complexity of the satisfiability
problem. It can be seen as an extension of Theorem 3.3 from \cite{BBCKLMW-tocl},
where a similar result was proved for \FOt{} over trees without additional binary
relations.

\begin{theorem}[Small model theorem]\label{thm:fo2binsmallmodeltheorem}
  Let $\varphi$ be a satisfiable normal form \FOtall{} formula. Then
  $\varphi$ has a model in which the length of every $\succv$-path and
  the degree of each node are bounded exponentially in $|\varphi|$ by
  $\mathfrak{f}(\varphi)$.
\end{theorem}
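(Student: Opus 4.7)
The plan is to follow the strategy of Theorem 3.3 in \cite{BBCKLMW-tocl}, which establishes the bound in the absence of $\tau_{com}$, and to lift that argument to the richer signature. Starting from an arbitrary finite tree $\str{T}\models\varphi$, I would perform two kinds of operations --- vertical path contraction and degree reduction --- each of which deletes a bounded region of the tree while preserving satisfaction of $\varphi$. Iterating them to exhaustion yields a model of depth and degree at most $\mathfrak{f}(\varphi)$. The genuine novelty over \cite{BBCKLMW-tocl} lies in coping with the additional predicates $\tau_{com}$: although their interpretation is not fixed, the universal conjunct $\forall xy\,\chi$ constrains the 2-type realized by every pair. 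The key structural observation is that in each cut-and-paste operation only a handful of surviving pairs actually change their navigational relation, and for each such pair the necessary $\tau_{com}$-atoms can be inherited from a pair in the original $\str{T}$ whose $1$-types and order formula match the new pair; this guarantees that $\chi$ continues to hold after the surgery.

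For vertical contraction I take a $\succv$-path $v_0\succv v_1\succv\cdots\succv v_k$ with $k>\mathfrak{f}(\varphi)$ and associate with each $v_j$ a \emph{profile} consisting of its $1$-type and, for every conjunct $\phi_i$, information about the location (above, on, or outside the path) of its required witness in $\str{T}$. Since the number of profiles is at most $|\AAA_\varphi|\cdot 2^{O(m)}$ and $\mathfrak{f}(\varphi)$ is substantially larger, pigeonhole yields indices $j<j'$ with matching profiles. I then delete $v_j,\ldots,v_{j'-1}$ together with their off-path subtrees and re-attach the subtree rooted at $v_{j'}$ to $v_{j-1}$ in $v_j$'s former sibling position. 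Because $v_j$ and $v_{j'}$ realize the same $1$-type and share the same pattern of ancestor-related witnesses, every surviving node still has the witnesses it needs. The only surviving pairs whose order formula changes are $(v_{j-1}, v_{j'})$ (from $\lessv$ to $\succv$) and the pairs between $v_{j'}$ and the original siblings of $v_j$ (from $\fw$ to a sibling relation); on each such new pair I copy the $\tau_{com}$-atoms from the corresponding pair involving $v_j$ in $\str{T}$, preserving $\chi$.

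A completely analogous construction, using profiles that track horizontal rather than vertical witnesses, bounds the length of every $\succh$-path by $\mathfrak{f}(\varphi)$. To bound the degree of a node $v$, I classify its children into polynomially many buckets by $1$-type and by ``witness role'' (witness for $v$, witness for one of $v$'s ancestors, supplier of a sibling or free witness for another child, etc.) and keep only polynomially many representatives from each bucket, discarding the remaining children together with their subtrees. All $\tau_{com}$-atoms on pairs that survive this pruning remain unchanged, and once again the profile matching ensures that each retained node keeps all its required witnesses.

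The most delicate step will be the joint bookkeeping of upper, lower, sibling, and free witnesses during path contraction: a node above the cut may have had its $\downarrow$-witness inside the excised region, and nodes inside the retained subtree may have relied on siblings we intend to discard. I expect to handle this by enlarging the profile so that it also records, for each ancestor of $v_j$ along the path, the set of conjuncts for which that ancestor's lower witness lies in the subtree below $v_j$; the cubic factor in $\mathfrak{f}(\varphi)=96m^3|\AAA_\varphi|^3$ provides exactly the slack needed to pigeonhole simultaneously on $1$-types, upper-witness patterns, and lower-witness demands. Once the witness structure is preserved, the $\tau_{com}$-relabelling described above introduces no further complication: the navigational shape of the splice dictates precisely which pairs need reassignment, and the matching $1$-types guarantee the existence of a $\chi$-satisfying $2$-type to copy over.
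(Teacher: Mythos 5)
There is a genuine gap, and it is the central difficulty of this theorem. Your plan is to pigeonhole on ``profiles'' of path nodes and then splice, touching the $2$-types only of the handful of pairs at the splice point ($(v_{j-1},v_{j'})$ and $v_{j'}$ versus $v_j$'s former siblings), and you assert that elsewhere ``all $\tau_{com}$-atoms on pairs that survive the pruning remain unchanged.'' But witness loss is not localized to the splice point. A node $w$ anywhere in the tree --- off the path, in free position to the excised region, or an ancestor far above the cut --- may have had \emph{all} of its free witnesses, or all of its $\lessv$- or $\predgreatv$-witnesses for some conjunct, inside the deleted subtrees. There are unboundedly many such victims, so no profile attached to the path nodes $v_j$ can encode which of them are endangered; and because a witness is a specific pair realizing a specific $2$-type over $\tau_{com}$, the mere survival of other nodes of the right $1$-type in the right position does not restore the witness. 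The repair necessarily requires \emph{modifying $2$-types of pairs whose navigational relation did not change}, which your proposal explicitly excludes. This is exactly why the paper's proof does not pigeonhole on witness information at all: it pigeonholes only on consecutive $1$-types, but first marks a reservoir $W_0$ containing, for each $1$-type $\alpha$, $m$ extremal realizations on the path and $m$ realizations with extremal projections onto the path (plus closure sets $W_1,W_2$ protecting the reservoir's own witnesses), places the cut in an interval disjoint from $W$, and then, in a case analysis over the position of each victim $w$, \emph{redirects} $w$'s lost witnesses by overwriting the $2$-types joining $w$ to surviving reservoir elements. The same redirection mechanism is what makes the degree-reduction lemma work; keeping ``representatives from each bucket'' of children does not, by itself, give a far-away node back the free witness it had inside a discarded subtree.

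Your fallback in the last paragraph --- enlarging the profile to record, for each ancestor of $v_j$, which conjuncts have their lower witness below $v_j$ --- does not close the gap: it still says nothing about off-path victims of free-witness loss, and quantitatively it records a subset of $[1,m]$ for each of up to $n$ ancestors, giving $2^{O(mn)}$ profiles, which cannot be pigeonholed within a path of length $\mathfrak{f}(\varphi)=96m^3|\AAA_\varphi|^3$. (Even your basic profile count $|\AAA_\varphi|\cdot 2^{O(m)}$ already exceeds the polynomial-in-$m$ budget that the cubic factor provides; in the paper that factor comes from $|W_2|\le m|W_1|\le m^2|W_0|$, not from any pigeonhole over witness patterns.) To repair the argument you need the marked-set-plus-redirection machinery, not a richer pigeonhole.
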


We split the proof of this theorem into two lemmas. In the first one
we show how to shorten the $\succv$-paths and in the second --- how to
reduce the degree of nodes, i.e., to shorten $\succh$-paths.

\begin{lemma} \label{lemma:fo2binsmallpaths} Let $\varphi$ be a normal
  form \FOtall{} formula and $\str{T}$ its model. Then there exists a
  tree model $\str{T}'$ for $\varphi$ whose every $\downarrow$-path
  has length at most $\mathfrak{f}(\varphi)$.
\end{lemma}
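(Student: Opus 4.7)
The plan is a pumping argument in the spirit of Theorem~3.3 of \cite{BBCKLMW-tocl}. Starting from $\str{T}\models\varphi$ and a $\downarrow$-path $p_{0},p_{1},\ldots,p_{N}$ of length $N>\mathfrak{f}(\varphi)$ with $p_{0}$ closest to the root, I would locate indices $j<k$ along the path whose local "surroundings" agree on all information required to splice, then delete $p_{j+1},\ldots,p_{k}$ together with every side subtree hanging off these nodes, and reattach $p_{k+1}$ directly as a child of $p_{j}$. This produces a tree $\str{T}'$ with a strictly shorter $\downarrow$-path; iterating on every long $\downarrow$-path yields the lemma.

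The heart of the argument is the choice of a \emph{pumping profile} $\sigma_{j}$ for each $p_{j}$, coarse enough for pigeonhole yet fine enough to justify the splice. I would let $\sigma_{j}$ record: (i) the $1$-type $\tp{T}{p_{j}}$; (ii) for each $\forall\exists$-conjunct $\phi_{i}$ with $\lambda_{i}(p_{j})$, a tag giving the direction (ancestor, descendant, sibling, or free) and the $1$-type of some chosen $\phi_{i}$-witness of $p_{j}$; and (iii) for each conjunct that can be satisfied by an upper witness, the set of $1$-types of ancestors of $p_{j}$ that realize a legal $\phi_{i}$-witness. The sets in (iii) are monotone in $j$, so they change at most $m\cdot|\AAA_{\varphi}|$ times along the path; together with the $|\AAA_{\varphi}|$ choices for (i) and the $O(m\cdot|\AAA_{\varphi}|)$ choices for (ii), the number of distinct profiles stays within $\mathfrak{f}(\varphi)=96m^{3}|\AAA_{\varphi}|^{3}$. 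Pigeonhole then supplies $j<k$ with $\sigma_{j}=\sigma_{k}$.

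Checking that $\str{T}'\models\varphi$ splits into two parts. For the universal conjunct $\forall xy\,\chi$, the only pair genuinely new to $\str{T}'$ is $(p_{j},p_{k+1})$; since $\tp{T}{p_{j}}=\tp{T}{p_{k}}$, I copy the $2$-type realized in $\str{T}$ by $(p_{k},p_{k+1})$ into that slot, so $\chi$ is preserved. For each $\forall\exists$-conjunct $\phi_{i}$ and each surviving node $v$, either the original $\phi_{i}$-witness of $v$ survives the cut, or it lay in the deleted segment; in the latter case the equality $\sigma_{j}=\sigma_{k}$ provides a substitute on the correct side of the cut — an upper witness for a node below $p_{k+1}$ is supplied by some ancestor above $p_{j}$ of the $1$-type recorded in (iii), and symmetrically for lower witnesses needed by nodes above $p_{j}$.

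The step I expect to be hardest is preserving \emph{free} witnesses (and sibling witnesses residing in side subtrees of the cut segment), because these are exactly the elements that the contraction discards. My plan is to extend $\sigma_{j}$ with a fourth, likewise monotone component recording, for every $\phi_{i}$ that may be satisfied by a free witness, the set of $1$-types of free-position candidates available outside a sufficiently wide window around $p_{j}$, and to exploit the cubic slack in $\mathfrak{f}(\varphi)$ to pick $j,k$ deep enough inside a constant stretch of this component so that every consumed free witness has a substitute elsewhere in $\str{T}'$. This is also the point at which the extra relations from $\tau_{com}$ matter: the $2$-type installed on the reattachment edge must be consistent with all $\tau_{com}$-atoms, which I enforce by only lifting $2$-types verbatim from $\str{T}$ rather than inventing new ones. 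Once this technicality is handled the contraction is routine and can be iterated until every $\downarrow$-path has length at most $\mathfrak{f}(\varphi)$.
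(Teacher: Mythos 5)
Your overall shape (splice out a segment of a long $\downarrow$-path and iterate) matches the paper's, but three steps would fail as written. First, your pigeonhole is on single nodes, so $\sigma_j=\sigma_k$ only gives $\tp{T}{p_j}=\tp{T}{p_k}$; the paper instead locates two disjoint \emph{consecutive pairs} with $\tp{T}{v_{l+i}}=\tp{T}{v_{k+i}}$ for $i=0,1$. Without $\tp{T}{p_{j+1}}=\tp{T}{p_{k+1}}$ you cannot copy the $2$-types joining $p_{k+1}$ to the surviving siblings of $p_{j+1}$ (these are new pairs too, contradicting your claim that $(p_j,p_{k+1})$ is the only new pair), and $p_j$ may irrecoverably lose a $\succv$-witness, since a child witness must be an actual child of the prescribed $1$-type. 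Second, your profile component (ii) assigns to each of the $m$ conjuncts a direction together with a $1$-type, so it ranges over roughly $(4|\AAA_\varphi|)^m$ values, not $O(m\cdot|\AAA_\varphi|)$; the pigeonhole within $\mathfrak{f}(\varphi)=96m^3|\AAA_\varphi|^3$ therefore does not go through. (For the monotone components, counting changes rather than values is the right trick, but (ii) is not monotone.)

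Third, and most seriously, the free-witness repair is only a hope. Recording the \emph{set} of $1$-types of free-position candidates is not enough: a surviving node may need up to $m$ free witnesses redirected to substitutes of a single $1$-type, and those substitutes must not already be committed as witnesses of other nodes whose $2$-types with them are frozen. The paper resolves this by marking, for each $1$-type, $m$ realizations on $P$ and $m$ realizations off $P$ at each extreme (the set $W_0$), closing under two rounds of witness-taking ($W_1$, $W_2$), cutting only inside a $W$-free interval, and then exploiting the observation that any $v\notin W_0\cup W_1$ is not required as a witness by any element of $W_0$, so its edges into $W_0$ can be rewired without conflict. Your window/monotonicity sketch supplies neither the multiplicity $m$ of substitutes nor this no-conflict guarantee, and the same problem recurs for the $\lessv$- and $\predgreatv$-witnesses of nodes far from the splice. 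These are precisely the points that the paper's $W_0,W_1,W_2,W_3$ construction and its seven-case analysis exist to handle, so the proposal as it stands has a genuine gap at its hardest step.
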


\begin{proof}
  Assume that $\str{T}$ contains a $\downarrow$-path $P = \left( v_1,
    v_2, \ldots, v_n \right)$ longer than $\mathfrak{f}(\varphi)$. We
  show that then it is possible to remove some nodes from $\str{T}$
  and obtain a smaller model $\str{T}_0$. For a node $u \in T$ we
  define its \textit{projection onto} $P$ as the smallest node $v \in
  P$, such that $\str{T} \models v \downarrow^+ u$.

  We first distinguish a set $W$ of some relevant elements of
  $\str{T}$. $W$ will consist of four disjoint sets $W_0$, $W_1$,
  $W_2$, $W_3$.  For each $1$-type $\alpha$ we mark:
\begin{itemize}
\item $m$ biggest and $m$ smallest realizations of $\alpha$ on $P$ (or
  all realizations of $\alpha$ on $P$ if there are less than $m$ of
  them)
\item $m$ realizations of $\alpha$ outside $P$ having biggest
  projections onto $P$ and $m$ realizations of $\alpha$ outside $P$
  having smallest projections onto $P$ (or all realizations of
  $\alpha$ outside $P$ if there are less than $m$ of them).
\end{itemize}
Let $W_0$ be the set consisting of all the marked elements.  Let $W_1$
be a minimal (in the sense of $\subseteq$) set of nodes of $\str{T}$
such that all the elements from $W_0$ have all the required witnesses
in $W_0 \cup W_1$.  Similarly, let $W_2$ be a minimal set of nodes of
$\str{T}$ such that all the elements from $W_1$ have all the required
witnesses in $W_0 \cup W_1 \cup W_2$.  Finally, let $W_3$ be the set
of those projections onto $P$ of elements of $W_0 \cup W_1 \cup W_2$
which are not in $W_0 \cup W_1 \cup W_2$.  Let $W := W_0 \cup W_1 \cup
W_2 \cup W_3$.  To estimate the size of $W$, observe that $|W_0| \leq
4 m |\AAA_\varphi|$, $|W_1| \leq m |W_0|$, $|W_2| \leq m|W_1|$ and
$|W_3| \le |W_0 \cup W_1 \cup W_2|$.  Thus $|W| \le 24 m^3
|\AAA_\varphi|$.

An \emph{interval} of $P$ \emph{of length} $s$ is a sequence of nodes
of the form $(v_i, v_{i+1}, \ldots, v_{i+s-1})$ for some $i,s$.  We
claim that $P$ contains an interval $I$ of length at least
$2|\AAA_\varphi|^2+2$ having no elements in $W$. To the contrary
assume that there there is no such interval. Note that the extremal
points of $P$ (which are the root and a leaf of $\str{T}$) are members
of $W$. Hence the points of $W \cap P$ determine at most $|W|-1$
maximal (possibly empty) intervals not containing elements of $W$. It
follows that $|P| \le (|W|-1)(2 |\AAA_\varphi|^2 + 1) +
|W|<|W|(2|\AAA_\varphi|^2+2)$, which by simple estimations gives $|P|
< 96m^3|\AAA_\varphi|^3$, a contradiction.

Using the pigeonhole principle we can easily see that in $I$ there are
two disjoint pairs of nodes $v_k, v_{k+1}$ and $v_l,
v_{l+1}$, for some $k < l$ such that
$\tp{T}{v_{l+i}}=\tp{T}{v_{k+i}}$, for $i=0,1$. We build
a tree $\str{T}_0$ by replacing in $\str{T}$ the subtree rooted at
$v_{k+1}$ by the subtree rooted at $v_{l+1}$, setting
$\tp{T_{\rm 0}}{v_k, v_{l+1}}:=\tp{T}{v_k, v_{k+1}}$
and for each  $v$ being a sibling of $v_{k+1}$ in $\str{T}$ setting $\tp{T_{\rm 0}}{v, v_{l+1}}:=\tp{T}{v, v_{k+1}}$ (all the
remaining $2$-types are retained from $\str{T}$).  In effect,
all the subtrees rooted at elements of $P$ between $v_{k+1}$ and $\ldots, v_l$ are removed from $\str{T}$.  Please note that all
elements of $W$ survive our surgery. This guarantees that the elements
of $W_0 \cup W_1$ retain all their witnesses. However, some nodes $v$
from $T_0 \setminus (W_0 \cup W_1)$ could lose their witnesses. We can
now reconstruct them using the nodes from $W_0$.  Let us describe this
procedure, distinguishing several cases.

{\em Case 1:} $v=v_k$. All the siblings, ancestors and elements in
free position to $v_k$ from $\str{T}$ are retained in
$\str{T}_0$. Thus $v_k$ retains all its sibling, ancestor and free
witnesses. There is also no problem with $\succv$-witnesses, as $v_k$
retains all its children except $v_{k+1}$, and $v_{k+1}$ is replaced
by $v_{l+1}$ having the same $1$-type and connected to $v_k$ exactly
as $v_{k+1}$ was. Some $\succlessv$-witnesses for $v_k$ could be lost
however.  Let $B$ be a minimal (in the sense of $\subseteq$) set of
elements providing the required $\succlessv$-witnesses for $v_k$ in
$\str{T}$. Note that $|B| \le m$. Let $\alpha$ be a $1$-type realized
in $B$.  If all elements of $1$-type $\alpha$ from $B$ are in $W_0$
then there is nothing to do: they survive, and serve as proper
$\succlessv$-witnesses for $v_k$ in $\str{T}_0$. Otherwise, there must
be at least $m$ realizations of $\alpha$ in $W_0$ (on $P$ or outside
$P$) whose projections onto $P$ in $\str{T}$ are below $v_{l+2}$. We
can modify the $2$-types joining $v_k$ with some of them securing the
required $\succlessv$-witnesses for $v_k$. This can be done without
conflicts, since $v_k \not\in W_0 \cup W_1$ and hence it is not
required as a witness by any element of $W_0$.

{\em Case 2:} $v=v_{l+1}$.  All the 
descendants of $v_{l+1}$ are retained in $\str{T}_0$. 
Thus $v_{l+1}$ retains its descendant witnesses.  There 
is no problem with sibling witnesses since $v_{l+1}$ has
the same $1$-type as $v_{k+1}$ and it is connected to
its siblings in $\str{T}_0$ exactly as $v_{k+1}$ was in $\str{T}$. 
Using arguments
similar to these from the previous case we can show that also there is
no problem with upper witnesses for $v_{l+1}$. The only missing part
is to ensure that $v_{l+1}$ has all of its required free
witnesses. Let $B$ be a minimal (in the sense of $\subseteq$) set of
free witnesses for $v_{l+1}$ in $\str{T}$ and let $\alpha$ be a
$1$-type realized in $B$. If all elements of $1$-type $\alpha$ from
$B$ are in $W_0$ then there is nothing to do.

Otherwise, $v_{l+1}$ can reconstruct its witnesses from $B$ using 
$m$ realizations of $\alpha$ in $W_0$ outside $P$ with smallest projections onto $P$.
Note that they are indeed in free position to $v_{l+1}$ (since not all elements of $B$ are 
in $W_0$ and thus at least $m$ elements of $1$-type $\alpha$ from $W_0$ have
projections onto $P$ which are smaller than $v_k$).

{\em Case 3:} $v$ is a descendant of $v_{l+1}$. In this case $v_{l+1}$
retains all its sibling, descendant, and $\uparrow$-witnesses from
$\str{T}$.  Regarding $\predgreatv$-witnesses, consider the witnesses
of $1$-type $\alpha$ in $\str{T}$; either all of them are in $W_0$, or
they can be reconstructed using $m$ smallest realizations of $\alpha$
on $P$, which must be members of $W_0$. Regarding the free witnesses,
similarly, consider the witnesses of $1$-type $\alpha$ in $\str{T}$;
if not all of them are in $W_0$, then $v_{l+1}$ can reconstruct them
using $m$ elements of $1$-type $\alpha$ from $W_0$ outside $P$ with
smallest projections on $P$.

{\em Case 4:} $v$ is a child of $v_k$ different from $v_{l+1}$. Upper and lower
witnesses for $v$ are
retained in $\str{T}_0$. There is also no problem with sibling witnesses: even
if $v$ required $v_{k+1}$ as a witness in $\str{T}$ it can now use $v_{l+1}$.
Consider the case of free witnesses. Let $B$ as a minimal set of free
witnesses for $v$ in $\str{T}$ and let $C \subseteq B$ be the subset
of $B$ containing all the vertices from $B$ which lie inside the
subtree rooted at $v_{k+1}$. Observe that all the vertices from $B
\setminus C$ survive our surgery, so they can still serve as  proper
free witnesses for $v$. On the other hand, some vertices from $C$
could be lost. Consider the witnesses of $1$-type $\alpha$ in
$C$. if not all of them are in $W_0$, then there must be at least
$m$ realizations of $\alpha$ in $W_0$ in free position to $v$:
these are either biggest realizations of $\alpha$ on $P$ or realizations
of $\alpha$ with biggest projections onto $P$.  Thus $v$ can use
them to reconstruct its witnesses.

{\em Case 5:} $v$ is a descendant of a child of $v_k$ but not of
$v_{l+1}$. Observe that all of the required witnesses for $v$ except
the free witnesses are retained in $\str{T}_0$. To reconstruct the
free witnesses for $v$ we can use the strategy described in Case 4.

{\em Case 6:} $v$ is an ancestor of $v_k$. In this case $v$ retains
all its sibling, upper and free witnesses from $\str{T}$. To deal
with the lower witnesses we can simply follow the strategy from Case 1.

{\em Case 7:} $v$ is in free position to $v_k$. Note that all of the
witnesses for $v$ except free ones survived the surgery. It's possible
that some of the free witnesses for $v$ were lost, but we find the new
free witnesses exactly as in Case 4.

After the described adjustments all the elements of $\str{T}_0$ have
appropriate witnesses. Since all the $2$-types realized in $\str{T}_0$
are also realized in $\str{T}$ this ensures that the $\forall \forall$
conjunct of $\varphi$ is not violated in $\str{T}_0$. Thus $\str{T}_0
\models \varphi$.

Note that the number of nodes of $\str{T}_0$ is strictly smaller than
the number of nodes of $\str{T}$. We can repeat the same shrinking
process starting from $\str{T}_0$, and continue it, obtaining
eventually a model $\str{T}'$ whose paths are bounded as required.
\end{proof}

\begin{lemma} \label{lemma:fo2binsmalldegfinitetree} Let $\varphi$ be
  a normal form \FOtall{} formula and $\str{T} \models \varphi$. Then
  there exists a model $\str{T}' \models \varphi$, obtained by
  removing some subtrees from $\str{T}$ such that the degree of its
  every node is bounded by $\mathfrak{f}(\phi)$.
\end{lemma}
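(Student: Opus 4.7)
The plan mirrors the shrinking argument of Lemma~\ref{lemma:fo2binsmallpaths}, with the vertical path replaced by a horizontal sibling path. Suppose some node $u$ of $\str{T}$ has more than $\mathfrak{f}(\varphi)$ children; then these children form a $\succh$-path $P=(v_1,\ldots,v_n)$ with $n>\mathfrak{f}(\varphi)$. I would locate indices $k<l$ and delete the subtrees rooted at $v_{k+1},\ldots,v_l$, so that $v_{l+1}$ becomes the new immediate right sibling of $v_k$.

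For a node $w\notin P$ lying in the subtree rooted at some $v_i$, define the \emph{projection} of $w$ onto $P$ to be that $v_i$; for nodes outside every $v_i$-subtree the projection is undefined. Following Lemma~\ref{lemma:fo2binsmallpaths} I would collect a set $W_0$ consisting, for each $1$-type $\alpha$, of the $m$ leftmost and $m$ rightmost realizations of $\alpha$ on $P$, together with the $m$ realizations of $\alpha$ outside $P$ having leftmost projections and the $m$ having rightmost projections. Closing $W_0$ under the required-witness relation to $W=W_0\cup W_1\cup W_2\cup W_3$ (with $W_3$ collecting the projections of $W_0\cup W_1\cup W_2$) yields $|W|=O(m^3|\AAA_\varphi|)$, and the same counting argument produces an interval $I$ of $P$ of length at least $2|\AAA_\varphi|^2+2$ disjoint from $W$. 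A pigeonhole in $I$ then yields $k<l$ with $v_k,v_{k+1},v_l,v_{l+1}\in I$ and $\tp{T}{v_{k+1}}=\tp{T}{v_{l+1}}$. I would form $\str{T}_0$ by deleting the subtrees rooted at $v_{k+1},\ldots,v_l$; the only pair of surviving nodes whose navigational relation is altered is $(v_k,v_{l+1})$, which moves from $\lessh$ to $\succh$. Setting $\tp{T_0}{v_k,v_{l+1}}:=\tp{T}{v_k,v_{k+1}}$ (well-defined because $\tp{T}{v_{k+1}}=\tp{T}{v_{l+1}}$) and retaining every other $2$-type from $\str{T}$ ensures that every $2$-type appearing in $\str{T}_0$ is already realized in $\str{T}$, so the $\forall\forall$-conjunct $\chi$ of $\varphi$ is preserved.

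The remaining task is to re-supply lost $\forall\exists$-witnesses by a case split parallel to Cases~1--7 of Lemma~\ref{lemma:fo2binsmallpaths}: $v_k$ reconstructs lost right-sibling witnesses from the rightmost $P$-realizations of the appropriate $1$-type in $W_0$ and lost free witnesses from the rightmost-projection $W_0$-elements; $v_{l+1}$ and its descendants symmetrically recover lost left-sibling and free witnesses from leftmost-extremal $W_0$-elements; ancestors of $u$, nodes in free position to $u$, and surviving descendants of $v_j$ for $j\notin\{k+1,\ldots,l\}$ either keep their $\str{T}$-witnesses or, for free witnesses that happened to live in a removed subtree, recover them from $W_0$ in the same extremal fashion. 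Iterating the surgery shrinks every $\succh$-sequence of children to length at most $\mathfrak{f}(\varphi)$, and combined with Lemma~\ref{lemma:fo2binsmallpaths} this yields Theorem~\ref{thm:fo2binsmallmodeltheorem}. The main delicacy I anticipate is verifying that the extremal $W_0$-replacements lie in the required navigational position---in particular, that they are genuinely in free position to the node seeking a free witness rather than accidentally a sibling, ancestor, or descendant; as in Lemma~\ref{lemma:fo2binsmallpaths}, the leftmost/rightmost-projection choices guarantee that these replacements sit in subtrees of $P$-elements well outside the removed block $\{v_{k+1},\ldots,v_l\}$, which makes the verification routine.
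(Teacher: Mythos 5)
Your proposal follows the paper's proof almost step for step (same $W_0,W_1,W_2,W_3$ decomposition, same interval/pigeonhole argument, same single $2$-type reassignment at $(v_k,v_{l+1})$, same case split for witness repair), with one deliberate deviation: for the realizations of a $1$-type $\alpha$ lying \emph{below} $P$ you mark the $m$ with leftmost projections and the $m$ with rightmost projections, transplanting the rule from Lemma~\ref{lemma:fo2binsmallpaths}, whereas the paper instead picks $m+1$ elements of $P$ whose subtrees contain an $\alpha$-realization and marks one deep $\alpha$-realization in each. The paper's choice is tailored to the horizontal setting: a node $w$ sitting in the subtree of some $v_i$ can only use marked $\alpha$'s that lie in a \emph{different} child's subtree, and with $m+1$ realizations spread over distinct subtrees at least $m$ automatically avoid $v_i$'s. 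Your extremal-projection marking still works, but the justification you give (``the replacements sit in subtrees well outside the removed block'') is not quite the right condition --- being outside the removed block does not prevent a replacement from lying in $w$'s own subtree, where it may be an ancestor or descendant of $w$ rather than in free position. The argument can be completed: since a removed subtree rooted at $v_j$ contains no marked elements, the $m$ leftmost-marked $\alpha$'s have projections strictly left of $v_j$ and the $m$ rightmost-marked ones strictly right of $v_j$, so whichever side of $v_j$ the subtree containing $w$ lies on, the opposite group of $m$ lies entirely outside $w$'s subtree and hence in free position to $w$. With that observation supplied (and noting, as the paper does, that redirecting is conflict-free because $w\notin W_0\cup W_1$), your variant is correct and yields the same bounds; also note that ancestors of the parent node lose \emph{descendant} witnesses rather than free ones, which your marked on-$P$ and below-$P$ elements repair in the same way.
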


\begin{proof}
  Assume that $\str{T}$ contains a node $v$ having more than
  $\mathfrak{f}(\varphi)$ children.  We show that then it is possible
  to remove some of these children together with the subtrees rooted
  at them and obtain a smaller model $\str{T}' \models \varphi$.  The
  process is similar to the one described in the proof of Lemma
  \ref{lemma:fo2binsmallpaths}.  Let $P=(v_1, \ldots, v_k)$ be the
  $\succh$-path in $\str{T}$ consisting of all the children of $v$.
  We first distinguish a set $W$ of some relevant elements of
  $\str{T}$. It will consist of four disjoint sets $W_0$, $W_1$,
  $W_2$, $W_3$.

  For each $1$-type $\alpha$ we mark $m$ biggest and $m$ smallest
  realizations of $\alpha$ on $P$ (or all realizations of $\alpha$ on
  $P$ if there are less than $m$ of them). Further we choose $m+1$
  elements of $P$ having a realization of $\alpha$ as a descendant (or
  all such elements if there are less than $m+1$ of them) and for each
  of them mark one descendant of $1$-type $\alpha$.  Let $W_0$ be the
  set consisting of all the marked elements.  Let $W_1$ be a minimal
  set of nodes such that all the elements from $W_0$ have all the
  required witnesses in $W_0 \cup W_1$.  Similarly, let $W_2$ be a
  minimal set of nodes such that all the elements from $W_1$ have all
  the required witnesses in $W_0 \cup W_1 \cup W_2$.  Finally, let
  $W_3$ be the set of those elements of $P$ which are not in $W_0 \cup
  W_1 \cup W_2$ but have an element from $W_0 \cup W_1 \cup W_2$ in
  their subtree.  Let $W := W_0 \cup W_1 \cup W_2 \cup W_3$.  To
  estimate the size of $W$, observe that $|W_0| \leq (3m + 1)
  |\AAA_\varphi|$, $|W_1| \leq m |W_0|$, $|W_2| \leq m|W_1|$ $|W_3|
  \le |W_0 \cup W_1 \cup W_2|$.  Thus, after simple estimations, we
  have $|W| \le 24 m^3 |\AAA_\varphi|$.

An \emph{interval} of $P$ \emph{of length} $s$ is a sequence of nodes of the form $(v_i, v_{i+1}, \ldots, v_{i+s-1})$ for some $i,s$.
Using arguments similar to those from the proof of Lemma \ref{lemma:fo2binsmallpaths} we can show that $P$ contains an interval
$I$ with no elements in $W$, in which there are two disjoint pairs of nodes $v_k, v_{k+1}$ and $v_l, v_{l+1}$, for some $k < l$ such
that $\tp{T}{v_{l+i}}=\tp{T}{v_{k+i}}$, for $i=0,1$. 
We build an auxiliary tree  $\str{T}_0$ by removing the subtrees rooted at $v_{k+1}, \ldots, v_l$ and setting $\tp{T_{\rm 0}}{v_k, v_{l+1}}:=\tp{T}{v_k, v_{k+1}}$ (all the remaining $2$-types are retained from $\str{T}$). 
Again the elements which lost their witnesses in our construction can regain them by changing their connections to elements from $W_0$.
We explain that it can be done for all elements $v$ of $\str{T}_0$ distinguishing several cases.

{\em Case 1:} $v$ lies on path $P$ (for example $v = v_k$ or $v = v_{l+1}$).
Observe that the descendants and the ancestors of $v$
survive our surgery. Also, there is no problem with $\leftarrow$ and
$\rightarrow$ witnesses for any vertex $v$ on $P$ other than $v_k$ and
$v_{l+1}$. For $v_{k}$ and $v_{l+1}$ we simply observe that in
$\str{T}_0$ the right sibling of $v_{k}$ was replaced by the node with
exactly the same $1$-type as $v_{k+1}$ in $\str{T}$. The case of
$v_{l+1}$ is symmetric. Consider now the case of 
$\leftleftarrows^+$ witnesses (the case of  $\rightrightarrows^+$ witnesses is symmetric).
Let $B$ be a minimal (in the sense of $\subseteq$) set of
elements providing the required $\rightrightarrows^+$-witnesses for
$v$ in $\str{T}$. Note that $|B| \le m$. Let $\alpha$ be a $1$-type
realized in $B$.  If all elements of $1$-type $\alpha$ from $B$ are in
$W_0$ then there is nothing to do -- they survive, and serve as proper
$\rightrightarrows^+$-witnesses for $v$ in $\str{T}_0$. Otherwise,
there must be at least $m$ maximal realizations of $\alpha$ on $P$ to
the right of $v$. We can modify the $2$-types joining $v$ with some of
them securing the required $\rightrightarrows^+$-witnesses for
$v$. This can be done without conflicts, since $v$ requires at most
$m$ $\rightrightarrows^+$-witnesses, and $v \not\in W_0 \cup W_1$ and
hence it is not required as a witness by any element of
$W_0$. Finally, we need to show that $v$ has all required free
witnesses in $\str{T}_0$. And again, we consider a set $B$ of
all necessary free witnesses for $v$ in $\str{T}$ and take a
$1$-type $\alpha$ realized in $B$. If all $\alpha$-witnesses are in
$W_0$, there is nothing to do. Otherwise there are at least $m$
realizations of $\alpha$ in $W_0$, since we marked $m+1$ deep
realizations of $\alpha$ in different subtrees rooted at nodes from
$P$. By the fact that $v \not\in W_0 \cup W_1$ the vertex $v$ is not
required as a witness for $W_0$, so we can again modify the $2$-types
of these vertices to secure the required free witnesses for $v$.

{\em Case 2:} $v$ is an ancestor of $v_k$.
In this case all the required witnesses for $v$ other than its
descendants are retained in $\str{T}_0$. Regarding
$\succlessv$-witnesses, consider the witnesses of $1$-type $\alpha$
in $\str{T}$; either all of them are in $W_0$, or they can be
reconstructed using $m$ deep realizations of $\alpha$ below  path
$P$, which must be members of $W_0$. 

{\em Case 3:} $v$ is a descendant of a vertex from path $P$. 
All the descendants, siblings and ancestors of $v$ survive
the surgery.  To ensure that $v$ has the
required free witnesses we  follow the last part of the proof of
Case $1$.

{\em Case 4:} $v$ is in free position to of $v_k$. Again, only
free witnesses could be lost but they can be reconstructed as in
the previous cases.

And again, as in the proof of Lemma \ref{lemma:fo2binsmallpaths}, the process can be continued until a model with appropriately bounded degree of nodes is obtained.
\end{proof}

\subsection{Global free witnesses}

The small model property from the previous subsection is a crucial
step towards an exponential space algorithm for satisfiability. Note
however that it allows for models having doubly exponentially many
nodes, which thus cannot be stored in memory.  In the case of \FOt{}
without additional binary relations \cite{BBCKLMW-tocl} the corresponding
algorithm traversed $\succv$-paths guessing for each node $v$ its
\emph{full type} storing the sets of $1$-types of elements above,
below, and in free position to $v$, similarly to the case of \FOt{}
with counting from Section \ref{sec:c2}. Then it took care of
\emph{realizing} such full types. This approach would not be
sufficient for our current purposes, since the presence of additional
binary relations requires us not only to ensure that appropriate
$1$-types of elements will appear above, below and in free position to
a node but also that appropriate $2$-types will be realized. This is
especially awkward when dealing with free witnesses, since for a given
node they are located on different paths.  To overcome this difficulty
we show that we always can assume that all elements have their free
witnesses in small, exponentially bounded fragment of a model.

\begin{lemma}\label{l:fo2binglobalfree}
Let $\varphi$ be a normal form \FOtall{} formula and $\str{T}$ its model. Let $h$ be the length of the longest $\succv$-path in $\str{T}$ and $d$ the 
maximal number of $\succv$-successors of a node. Then there exists a tree $\str{T}'$ and a set of nodes $F \subseteq T'$, called a \emph{global set of free
witnesses} such that:
\begin{itemize}\itemsep0pt
\item the universes, the $1$-types of all elements and the tree frames of $\str{T}$ and $\str{T}'$ are identical,
\item $\str{T}' \models \varphi$,
\item the size of $F$ is bounded by $3( m + 1)^3 h^2 d^2 |\AAA_\phi|$,
\item $F$ is closed under $\uparrow$, $\leftarrow$ and $\rightarrow$,
\item for each conjunct of $\varphi$ of the form $\phi_i =\forall x (\lambda_i(x) \rightarrow  \exists y (x \fw y \wedge \chi(x,y)))$ and each node $v \in T'$,
if $\str{T}' \models \lambda_i[v]$ then 
there is a witness for $v$ and $\phi_i$ in $F$.
\end{itemize}
\end{lemma}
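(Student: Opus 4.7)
The plan is to build $F$ in stages starting from a pool of $1$-type representatives, alternating a witness-closure step with an ancestor-and-sibling closure step, and then to obtain $\str{T}'$ from $\str{T}$ by rerouting the free-witness requirements of nodes outside $F$ via local $2$-type substitutions that leave the tree frame and all $1$-types intact.

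\emph{Construction of $F$.} For each $1$-type $\alpha \in \AAA_\varphi$, pick up to $(m+1)$ realizations of $\alpha$ in $\str{T}$, selected with sufficient structural diversity (for example, greedily, taking realizations that are pairwise in free position and spread across distinct branches) so that for every $v \in T$ at least one chosen realization of $\alpha$ lies in free position to $v$, whenever such a realization exists in $\str{T}$ at all. Let $F_0$ be the union of these choices; then $|F_0| \le (m+1)|\AAA_\varphi|$. Define $F_1$ by adjoining, for each $v \in F_0$ and each conjunct $\phi_j$ with $\str{T} \models \lambda_j[v]$, one witness for $v$ and $\phi_j$ picked from $\str{T}$; thus $|F_1| \le (m+1)|F_0|$. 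Let $F_2$ be the closure of $F_1$ under $\uparrow,\leftarrow,\rightarrow$; since each element contributes at most $h$ ancestors (each with at most $d$ siblings) and at most $d$ siblings of itself, $|F_2| \le 2hd\cdot|F_1|$. Iterate once more: let $F_3$ be $F_2$ extended with one witness for every $v \in F_2$ and every applicable conjunct, and let $F := F_4$ be the closure of $F_3$ under $\uparrow,\leftarrow,\rightarrow$. The resulting bound $|F| \le 3(m+1)^3 h^2 d^2 |\AAA_\varphi|$ absorbs the smaller intermediate sets into the leading term, and $F$ is closed under $\uparrow,\leftarrow,\rightarrow$ by its last step.

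\emph{Building $\str{T}'$ and verification.} For every $v \in T \setminus F$ and every free-witness conjunct $\phi_i$ with $\str{T} \models \lambda_i[v]$, pick a node $w \in F$ in free position to $v$ whose $1$-type matches that of the original free witness $w^*$ of $v$ for $\phi_i$, and set $\tp{T'}{v,w} := \tp{T}{v,w^*}$; all other $2$-types are inherited from $\str{T}$. The modifications only replace the $2$-type of a free-position pair by another free-position $2$-type, so the tree frame and all $1$-types are preserved; and since every re-assigned $2$-type is already realized somewhere in $\str{T}$, the $\forall\forall$-conjunct $\chi$ is preserved as well. Free-witness requirements of nodes outside $F$ are met by the explicit rerouting; those of nodes in $F$ are met because the second witness-closure already placed a free witness of each $F$-element into $F$. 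The remaining witness requirements of $F$-elements survive because the two witness-closure steps placed all their witnesses inside $F$ and we never modify any $2$-type with both endpoints in $F$.

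The main obstacle is the initial selection and the argument that $(m+1)$ pairwise-free representatives of each $1$-type suffice to cover every node's free-witness need: among pairwise-free elements, at most one can be an ancestor of a fixed $v$ and at most one a sibling of $v$, while if every representative of $\alpha$ lay in $v$'s subtree then $v$ could have no free witness of type $\alpha$ in $\str{T}$ either and no rerouting would be required. A secondary subtlety is showing that the interleaved witness-closures and ancestor-sibling closures yield a set that is simultaneously witness-sufficient and $\uparrow,\leftarrow,\rightarrow$-closed within the claimed size bound; this requires careful bookkeeping but relies on the same pigeonhole-style arguments used in the proofs of Lemmas~\ref{lemma:fo2binsmallpaths} and~\ref{lemma:fo2binsmalldegfinitetree}.
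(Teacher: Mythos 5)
Your overall architecture (representatives per $1$-type, witness closure, ancestor/sibling closure, then rerouting) matches the paper's, but the step you yourself flag as ``the main obstacle'' is resolved incorrectly, and this is precisely where the content of the lemma lies. First, $m+1$ pairwise-free realizations of a $1$-type $\alpha$ need not exist (all realizations of $\alpha$ may lie on a single $\succv$-path), so your selection rule is not even total. Second, and more importantly, your covering argument is false: pairwise-freeness rules out more than one representative being an ancestor or a sibling of a fixed $v$, but it does not rule out \emph{all} of them being \emph{descendants} of $v$. Take $v$ with a large subtree whose leaves include $m+1$ pairwise-free realizations of $\alpha$, and one further realization $w^*$ of $\alpha$ in free position to $v$ elsewhere in the tree; then $v$ may genuinely need a free witness of type $\alpha$, yet none of your chosen representatives is in free position to $v$. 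Your parenthetical claim that ``if every representative of $\alpha$ lay in $v$'s subtree then $v$ could have no free witness of type $\alpha$ in $\str{T}$ either'' confuses the chosen representatives with the set of all realizations. The paper's fix is a concrete device you are missing: it chooses $m+1$ \emph{minimal} realizations of $\alpha$ (realizations with no realization of $\alpha$ strictly above them) and closes $F_0$ itself under $\uparrow,\leftarrow,\rightarrow$ \emph{before} the witness-closure stages. Then for any $v\notin F_0$ the closure forces $v$ to be neither an ancestor nor a sibling of any chosen realization (otherwise $v$ would have been added to $F_0$), while minimality makes the chosen realizations pairwise $\lessv$-incomparable, so $v$ is a descendant of at most one of them; hence at least $m$ of the $m+1$ are in free position to $v$, enough to reroute a minimal witness set $B$ with $|B|\le m$.

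There is a second, smaller gap in your verification. You reroute only nodes $v\in T\setminus F$ and claim that nodes of $F$ keep their free witnesses ``because the second witness-closure already placed a free witness of each $F$-element into $F$''; but your final closure $F_4$ of $F_3$ under $\uparrow,\leftarrow,\rightarrow$ adds elements whose free witnesses were never secured inside $F$, and worse, those witnesses may sit at nodes of $T\setminus F$ whose $2$-types to $F$ your rerouting has just overwritten. So the last bullet of the lemma, which quantifies over all $v\in T'$, can fail for these elements. The paper avoids this by rerouting \emph{every} $v\notin F_0\cup F_1$ --- including the elements of $F\setminus(F_0\cup F_1)$ --- into $F_0$, and by checking that the modified $2$-types (pairs with one end in $T\setminus(F_0\cup F_1)$ and the other in $F_0$) are disjoint from the $2$-types that secure the witnesses of $F_0\cup F_1$, which all lie within $F_0\cup F_1\cup F_2$.
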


\begin{proof}
We say that an element $v$ is a \emph{minimal element of type} $\alpha$ in $\str{T}$ if $\tp{T}{v}=\alpha$ and there is no $w \in T$
such that $\tp{T}{w}=\alpha$ and $\str{T} \models w \lessv v$. 

We first describe a procedure which distinguishes in $\str{T}$ the
desired set $F$. This will contain three disjoint subsets $F_0, F_1,
F_2$. Start with $F_0=F_1=F_2=\emptyset$. For each $1$-type $\alpha$
choose $m+1$ minimal elements of type $\alpha$ in $\str{T}$ (or all of
them if there are less than $m+1$ such elements) and make them members
of $F_0$. Close $F_0$ under $\uparrow$, $\leftarrow$ and
$\rightarrow$, i.e., for each member of $F_0$ add to $F_0$ also all
its ancestors, siblings and all the siblings of its ancestors. This
finishes the construction of $F_0$.  Observe that $|F_0| \le (m+1)hd
|\AAA_\phi|$.

For each $v \in F_0$ and each conjunct of $\varphi$ of the form
$\phi_i =\forall x (\lambda_i(x) \rightarrow \exists y (x \fw y \wedge
\chi(x,y)))$ if $\str{T} \models \lambda_i[v]$ and there is no witness
for $v$ and $\phi_i$ in $F_0$ then find one in $\str{T}$ and add it to
$F_1$.  Similarly, For each $v \in F_1$ and each conjunct of $\varphi$
of the form $\phi_i =\forall x (\lambda_i(x) \rightarrow \exists y (x
\fw y \wedge \chi(x,y)))$ if $\str{T} \models \lambda_i[v]$ and there
is no witness for $v$ and $\phi_i$ in $F_0 \cup F_1$ then find one in
$\str{T}$ and add it to $F_2$.

Take as $F$ the smallest set containing $F_0 \cup F_1 \cup F_2$ and
closed under the relations $\uparrow$, $\leftarrow$ and
$\rightarrow$. Note that $|F_1| \le m|F_0| \le m(m{+}1)hd|\AAA_\phi|$,
and similarly 
$|F_2|\le m|F_1| \le m^2(m{+}1)hd|\AAA_\phi|$. It follows that  $|F| \le
(m{+}1)hd|\AAA_\phi| + \big(m(m{+}1)hd |\AAA_\phi|+ m^2(m{+}1)hd
|\AAA_\phi|\big)hd \le 3(m{+}1)^3h^2d^2 |\AAA_\phi|$, as required.

To obtain $\str{T}'$ we modify some $2$-types joining pairs of
elements in free position, one of which is in $T \setminus (F_0 \cup
F_1)$ and the other in $F_0$.  Consider any element $v \in T \setminus
(F_0 \cup F_1)$ and let $B$ be a minimal (with respect to $\subseteq$) 
set of elements providing the
required free witnesses for $v$ in $\str{T}$. Note that $|B| \le
m$. Let $\alpha$ be a $1$-type realized in $B$.  If all elements of
$1$-type $\alpha$ from $B$ are in $F_0$ then there is nothing to
do: we just retain the connections of $v$ with the elements of type $\alpha$ in $F_0$. 
Otherwise there are $m+1$ minimal realizations of $\alpha$ in
$F_0$, and at least $m$ of them is in free position to $v$.  Indeed,
$v$ cannot be an ancestor or a sibling of any of those $m+1$ minimal realizations of $\alpha$ (since $F_0$
is closed under $\uparrow$, $\leftarrow$ and
$\rightarrow$), so if it is not in free position to all  then
it is a descendant of one of them. But in this case it is in free position to all the other (since minimal realizations of $\alpha$ are in free position to each other).
 Thus, in this case, for any $w \in B$ of type $\alpha$ we can
choose a fresh $w'$ of type $\alpha$ in $F_0$ in free position to $v$
and set $\tp{T'}{v,w'}:= \tp{T}{v,w}$. We repeat this step
for all $1$-types of elements of $B$, thus ensuring that $v$ has all
the required free witnesses in $F_0$. We repeat this process for all
elements of $T \setminus (F_0 \cup F_1)$.

This finishes our construction of $\str{T}'$. Note that our surgery
does not affect the $2$-types inside $\str{T} \restr (F_0 \cup F_1)$
and the $2$-types joining the elements of $F_1$ with the elements of
$T \setminus (F_0 \cup F_1)$. Thus in $\str{T}'$ all elements of $F_0
\cup F_1$ retain their free witnesses in $F$ and all the remaining
elements have appropriate free witnesses in $F_0$ due to our
construction.  As we do not change the $2$-types joining the elements
which are not in free position thus all the upper, lower and sibling
witnesses are retained in $\str{T}'$.  Since $\str{T}'$ realizes only
$2$-types realized in $\str{T}$ the universal conjunct of $\forall xy
\chi(x,y)$ of $\varphi$ is satisfied in $\str{T}'$. Hence, $\str{T}'
{\models} \varphi$.
\end{proof}

\subsection{The algorithm}
We are now ready to present an alternating algorithm for the
finite satisfiability problem for \FOtall{}, working in exponential
time. Since \AExpTime=\ExpSpace{} this justifies
Thm.~\ref{t:fotbin}. Due to Lemma \ref{l:normalformbin} we can assume
that the input formula is given in normal form.

We first sketch our approach.  For a given normal form $\phi$ the
algorithm attempts to build a model $\str{T} \models \varphi$. It
first guesses its fragment $\str{F}$, of size exponentially bounded in
$|\varphi|$, intended to provide free witnesses for all elements of
$\str{T}$, and then expands it down.  Namely, it universally chooses
one of the leaves $v$ of $\str{F}$, guesses all its children $w_1,
\ldots, w_k$ (at most exponentially many), and guesses $2$-types
joining $w_i$-s with all their ancestors, with all elements of $\str{F}$,
and among each other. The algorithm verifies some consistency
conditions, and if succeeded then it universally chooses one of $w_i$
and proceeds with $w_i$ analogously like with $v$. This process is
continued until the algorithm decides that a leaf of $\str{T}$ is
reached.

We must ensure that the structure $\str{T}$ which is constructed by our algorithm is indeed a model of $\varphi$, i.e., all elements of $\str{T}$ have appropriate witnesses 
for $\forall \exists$ conjuncts, and that no pair of elements of $\str{T}$ violates the $\forall \forall$ conjunct. Note that when the algorithm inspects a node $v$ all its
siblings and ancestors are present in the memory. This allows to verify that $v$ has the required upper and sibling witnesses. Checking
the existence of  free witnesses is not problematic too, because, owing to Lemma \ref{l:fo2binglobalfree} we assume that they are provided by $\str{F}$, 
which is never removed from the memory. 
Verifying $\succv$-witnesses is also straightforward, since we guess all the children $w_1, \ldots, w_k$ of $v$ at once.
To deal with $\lessv$-witnesses the algorithm stores some additional data. Namely, together with each $w_i$ it guesses
the list of all $2$-types (called {\em promised $2$-types}) which will be assigned to the pairs consisting of $v$ or its ancestor and a descendant of $w_i$. 
This is obviously sufficient to see if $v$ will have the required $\lessv$-witnesses. The algorithm will take care of the consistency of the information about
promised types stored in various nodes, and then ensure that all the promised $2$-types will indeed be realized.

Turning to the problem of verifying that the universal conjunct of
$\varphi$ is not violated by any pair of elements of $\str{T}$ note
that it is easy for pairs of elements which are not in free position,
since at some point during the execution of the algorithm they are
both present in the memory and their $2$-type is then available. For a
pair of elements $u_1, u_2$ in free position there is an element $v$
such that $u_1$, $u_2$ are descendants of two different children of
$v$ from the list $w_1, \ldots, w_k$.  From information about the
promised $2$-types guessed together with $w_i$-s, we can extract the
list of $1$-types that will appear below each of $w_i$. Reading this
information we see that the $1$-types of $u_1$ and $u_2$ will appear
in free position, and we just need to verify that there is a $2$-type
consistent with the $\forall \forall$-conjunct which can join them.

Now we give a more detailed description of the algorithm.  It employs
a data structure, storing for each node $v$ the following components:
\begin{itemize}\itemsep0pt
\item $v$.{\tt 1{-}type} -- the $1$-type of $v$,
\item $v$.{\tt 2-type()} -- the function which for each $w$ being a sibling of $v$, an ancestor of $v$ or a member of $F$, returns the $2$-type of $(v,w)$,
\item $v$.{\tt promised-2-types()} -- a function which for each ancestor $w$ of $v$  returns a list of $2$-types, intended to contain all the $2$-types
which will be realized by $w$ with  descendants of $v$.
\end{itemize}
We assume that if a node $v$ is guessed then all the above components are constructed.

To avoid presentational clutter in the description of our algorithm we
omit some natural conditions on $2$-types guessed during its
execution, always assuming that they contain the intended navigational
atoms, i.e., the $2$-type joining an element with its child contains
$x \succv y$, with its right sibling $x \succh y$, and so on.

\medskip\noindent

\begin{algorithm}[H]
  \caption{\FOtall{}-sat-test}\label{algo:sat_test_f2}
  \begin{algorithmic}[1]

\REQUIRE a formula $\varphi$ in \FOtall{} normal form 

\STATE \textbf{guess} a tree $\str{F}$ of depth and degree of nodes bounded by $\mathfrak{f}(\varphi)$ and the number of nodes bounded by $3(m{+}1)^3 (\mathfrak{f}(\varphi))^4 |\AAA_\varphi|$
\FORALL {$v \in F$} \IF  {$v$ is not a leaf in $\str{F}$}
 \STATE \textbf{if }not \emph{consistent-with-ancestors-siblings-F}$(v)$ 
   \textbf{then reject \EndIf}
\STATE \textbf{if} not \emph{has-upper-sibling-free-witnesses}$(v)$ 
 \textbf{then reject \EndIf} 
\STATE  Let $w_1, \ldots, w_k$ be the list of the children of $v$
\STATE \textbf{if} not
  \emph{ensure-lower-witnesses}$(v,w_1,\ldots, w_k)$ \textbf{then reject \EndIf}
\STATE \textbf{if}  not \emph{propagates-promised-2-types}$(v,w_1,\ldots, w_k)$ \textbf{then reject \EndIf}
\STATE \textbf{if} not \emph{respects-universal-conjunct}$(v, w_1, \ldots, w_k)$
  \textbf{then reject \EndIf}
\ENDIF
\ENDFOR
\STATE \textbf{universally choose} a leaf $v$ of $\str{F}$; let $l$ be the depth of $v$ in $\str{F}$ 
 \WHILE {$l \le \mathfrak{f}(\varphi)$}
\STATE  \textbf{if} not \emph{consistent-with-ancestors-siblings-F}$(v)$ 
\textbf{then reject \EndIf} 
\STATE  \textbf{if} not \emph{has-upper-sibling-free-witnesses}$(v)$ \textbf{then reject \EndIf} 
\STATE \textbf{guess} a list $w_1, \ldots, w_k$ of children of $v$;  \textbf{if} $ k> \mathfrak{f}(\varphi)$
  \textbf{then reject \EndIf}
\STATE  \textbf{if} not \emph{ensure-lower-witnesses}$(v,w_1,\ldots, w_k)$
   \textbf{then reject \EndIf}
\STATE  \textbf{if}  not \emph{propagates-promised-2-types}$(v,w_1,\ldots, w_k)$
\textbf{then reject \EndIf}
\STATE \textbf{if} not \emph{respects-universal-conjunct}$(v, w_1, \ldots, w_k)$
   \textbf{then reject \EndIf}
\STATE \textbf{if} $k=0$ \textbf{then accept \EndIf} {\em \% $v$ is a leaf}
\STATE  \textbf{universally choose} $1 \le j \le k$ and set $v:=w_j$
    \ENDWHILE
\STATE \textbf{reject}
\end{algorithmic}
\end{algorithm}

The following function checks if all guessed components of $v$ are
consistent with the information about $v$'s siblings, ancestors and the
set $\str{F}$ of global free witnesses.

\floatname{algorithm}{Function}
\begin{algorithm}[H]
  \caption{\emph{consistent-with-ancestors-siblings-F}$(v)$}
  \begin{algorithmic}[1]

\FORALL{ $w$ being a sibling of $v$}
\STATE  let $\beta=v$.{\tt 2-type}$(w)$; \textbf{if} $w$.{\tt 2-type}$(v) \not=  \beta^{-1}$ 
  \textbf{then return  false \EndIf}
\ENDFOR
\IF {$v \in F$} 
\FORALL {$w \in F $}
\STATE let $\beta=v$.{\tt 2-type}$(w)$; \textbf{if} $w$.{\tt 2-type}$(v) \not=  \beta^{-1}$ 
\textbf{then return false \EndIf}
\ENDFOR
\ENDIF
\STATE \textbf{if} $v$ is the root  \textbf{then return true \EndIf}
\STATE let $u$ be the father of $v$
\FORALL { $w$ being an ancestor of $u$}
\STATE \textbf{if} $w.${\tt 2-type}$(v)$ $\not\in u.${\tt promised-2-types}$(w)$
\textbf{then return false \EndIf}
\ENDFOR
\RETURN true
  \end{algorithmic}
\end{algorithm}

The next function checks if $v$ has the required upper, sibling and
free witnesses.
\begin{algorithm}[H]
\caption{\emph{has-upper-sibling-free-witnesses(v)}}
\begin{algorithmic}
\FORALL {conjunct $\forall x (\lambda_i(x) \rightarrow \exists{y} (\theta_i(x,y) \wedge \chi_i(x,y)))$ of $\phi$\\ with $\theta_i \in \{ \tsuccv, \tlessv, \tsucch, \tprech, \tlessh, \tgreath, \tfree \}$}
\STATE\textbf{if} $v.${\tt 1-type} $\models \lambda_i(x)$ and there is no element $w$ being an ancestor or a sibling of $v$ or a memeber of $F$
  such that $v$.2-type$(w) \models \theta_i(x, y) \wedge \chi_i(x, y)$ \textbf{then return} false
\ENDFOR
\RETURN true
\end{algorithmic}
\end{algorithm}

The next function checks if the guess of $w_1, \ldots, w_k$ guarantees
lower witnesses for $v$.

\begin{algorithm}
  \begin{algorithmic}[1]
    
\caption{\emph{ensure-lower-witnesses}$(v,w_1,\ldots, w_k)$}
\FORALL{conjunct $\forall x (\lambda_i(x) \rightarrow \exists{y}\; \tsuccv (x,y) \wedge \chi_i(x,y))$ of $\phi$}
\IF {$v.${\tt 1-type} $\models \lambda_i(x)$ and there is no $w_i$ such that  $v$.{\tt 2-type}$(w_i) \models \chi_i(x, y)$} 
\RETURN false 
\ENDIF
\ENDFOR
\FORALL {conjunct $\forall x (\lambda_i(x) \rightarrow \exists{y} \;
  \tlessv (x, y) \wedge \chi_i(x,y))$ of $\phi$}
\STATE\textbf{if} $v.${\tt 1-type} $\models \lambda_i(x)$ and there is no $w_i$ such that for some $\beta \in w_i$.{\tt promised-2-types}$(v)$\\
\hshift $\beta \models \chi_i(x, y)$ \textbf{then return
  false \EndIf}
\ENDFOR
\RETURN true
  \end{algorithmic}
\end{algorithm}

The function below checks if the guess of $v$.{\tt promised-2-types()}
is propagated to the children of $v$ and consistent with $w_i$.{\tt
  promised-2-types()}.
\begin{algorithm}
  \begin{algorithmic}[1]
\caption{\emph{propagates-2-types}$(v, w_1, \ldots, w_k)$}
\FORALL{ $u$ being an ancestor of $v$}
 \STATE \textbf{if} {$v$.{\tt promised-2-types}$(u) \not= \bigcup_{i=1}^{k} \big( (\{ w_i$.{\tt 2-type}$(u))^{-1} \} \cup w_i$.{\tt promised-2-types}$(u)\big)$}\\ \textbf{then return false \EndIf}
\ENDFOR
\RETURN \textbf{true}
\end{algorithmic}
\end{algorithm}

The last function checks if the $2$-types formed by $v$ with all
elements of the constructed model (existing or promised) respect the
$\forall\forall$ conjunct.
\begin{algorithm}
  \begin{algorithmic}[1]
\caption{\emph{respects-universal-conjunct}$(v, w_1, \ldots, w_k)$}
\FORALL{ $u$ being an ancestor of $v$, a sibling of $v$, a member of $F$}
\STATE \textbf{if} $v$.{\tt 2-type}$(u) \not\models \chi(x, y)$ \textbf{then return false \EndIf}
\STATE\textbf{if} $(v.\texttt{2-type}(u))^{-1} \not\models \chi(x, y)$ \textbf{then return false \EndIf}
\ENDFOR
\STATE \textbf{if} $v$ is the root \textbf{then return true else} let $u$ be the father of $v$
\FORALL { $w_i$}  
\STATE let $desc_{w_i}:= \{\alpha: \exists \beta \in w_i.${\tt promised-2-types}$(u) \wedge \alpha = \beta \restr y \}$.\\
\% $desc_{w_i}$ \emph{is the list of promised $1$-types of descendants of $w_i$}
\ENDFOR
\FORALL {$i \not= j$}
\FORALL {$1$-type $\alpha$ from $desc_{w_i}$}
\FORALL {$1$-type $\alpha' \in desc_{w_j} \cup \{w_j.${\tt 1-type}$\}$}
\STATE \textbf{if}  there is no $2$-type $\beta$ such that $\beta \restr x=\alpha'$ and $\beta \restr y=\alpha$ and 
    $\beta(x, y) \models \tfree(x,y)\wedge\chi(x, y)$ \textbf{then return false}
\ENDFOR\ENDFOR\ENDFOR
\RETURN true
\end{algorithmic}
\end{algorithm}

\begin{lemma}
The procedure {\tt \FOtall{}-sat-test} works in alternating exponential time.
\end{lemma}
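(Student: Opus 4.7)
The plan is to verify three things: (a) the configuration (information stored at any point during the computation) has size exponential in $|\varphi|$; (b) each of the subroutines runs in exponential time; (c) the alternation depth, and hence the length of any branch of the computation tree, is bounded exponentially in $|\varphi|$. Together these yield alternating exponential time, which by the well-known equality $\AExpTime = \ExpSpace$ gives the intended complexity bound used in Theorem~\ref{t:fotbin}.

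For (a), the set $F$ has at most $3(m{+}1)^3 \mathfrak{f}(\varphi)^4 |\AAA_\varphi|$ nodes, i.e.\ singly exponentially many; each node stores a $1$-type (of polynomial size), the $2$-types joining it with its at most exponentially many ancestors, siblings and members of $F$, and for each ancestor a list \texttt{promised-2-types} of $2$-types over the signature of $\varphi$, of which there are only exponentially many. Thus each node carries a singly exponential amount of annotation, and the whole $\str{F}$ still fits into singly exponential space. During the descent in the WHILE loop we additionally keep the currently inspected $v$ together with its ancestors and their siblings---at most $\mathfrak{f}(\varphi)^2$ nodes of the same shape---so the total configuration size remains singly exponential in $|\varphi|$.

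For (b), every subroutine simply iterates over the ancestors of $v$, the siblings of $v$, the members of $F$, the children $w_1,\ldots,w_k$ of $v$, or pairs thereof, all of cardinality at most exponential, and for each such item performs a test that is polynomial in the configuration size: membership in a set of $2$-types, comparison with an inverse $2$-type, or entailment by a quantifier-free formula of size $|\varphi|$. Each test therefore takes exponential time, so every subroutine call does. The initial guess of $\str{F}$ together with all its annotations is also carried out in exponential time, since both $|F|$ and the per-node annotation are singly exponential.

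For (c), the FORALL at line~2 can be realised as a universal branch of exponential fan-out followed by a single pass through the checks (thus adding only $O(|\varphi|)$ to the alternation depth after the standard padding), and the WHILE loop performs at most $\mathfrak{f}(\varphi)$, i.e.\ exponentially many, iterations thanks to the guard $l \le \mathfrak{f}(\varphi)$, which is justified by Theorem~\ref{thm:fo2binsmallmodeltheorem}. Hence any branch executes exponentially many steps of exponential cost each, giving an overall exponential running time. The only mildly delicate point of the bookkeeping is to notice that the products of ``exponentially many ancestors'' with ``exponentially large lists of promised $2$-types'' do not inflate to a doubly exponential quantity, which follows from $2^{p(|\varphi|)} \cdot 2^{q(|\varphi|)} = 2^{p(|\varphi|)+q(|\varphi|)}$ being singly exponential; this is the step one must be most careful about.
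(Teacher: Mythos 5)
Your proof is correct and follows essentially the same route as the paper's: bound the number of constructed nodes (the guessed $\str{F}$ plus one path with siblings) by an exponential, bound the per-node annotation ($2$-types to ancestors, siblings and $F$, plus promised-$2$-type lists) by an exponential, and observe that all consistency checks are polynomial in the size of these guesses. Your write-up is merely more explicit about the alternation structure and the loop bounds, which the paper leaves implicit.
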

\begin{proof}
During its execution the algorithm guesses $\str{F}$, and builds a single path $P$ in $\str{T}$ together with the siblings of the elements from $P$.
The size of $\str{F}$ is bounded by $3(m+1)^3 (\mathfrak{f}(\varphi))^4|\AAA_\varphi|$, the length of $P$ and the degree of nodes are bounded by $\mathfrak{f}(\varphi)$, 
where $m$ is linear in $|\phi|$ and $\mathfrak{f}(\varphi)$ and $|\AAA_\varphi|$ are exponential in $|\phi|$. Thus the algorithm constructs exponentially
many nodes. For each node it guesses its $1$-type, $2$-types joining it with its siblings, ancestors and the elements of $\str{F}$ (exponentially
many in total) and promised $2$-types for each of its ancestors (again, information about the $2$-types for a single ancestor is bounded exponentially, since the total number of possible $2$-types is so bounded). The algorithm makes some consistency and correctness checking, which can be easily done in time polynomial in the size of the guesses. Hence the lemma follows.
\end{proof}

\begin{lemma}
The procedure \FOtall{}-\texttt{sat-test} accepts its input $\varphi$ iff $\varphi$ is satisfiable.
\end{lemma}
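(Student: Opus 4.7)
The plan is to prove the two directions separately.

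For completeness (if $\varphi$ is satisfiable then the algorithm accepts), I would start from a model $\str{T}\models\varphi$ and exhibit an accepting run. First, apply Theorem~\ref{thm:fo2binsmallmodeltheorem} to reduce to a model of depth and degree at most $\mathfrak{f}(\varphi)$, and then apply Lemma~\ref{l:fo2binglobalfree} to obtain a set $F\subseteq T$ of free witnesses of the required size, closed under $\uparrow$, $\leftarrow$ and $\rightarrow$. The algorithm's initial guess $\str{F}$ is then the substructure of $\str{T}$ induced by $F$. At every subsequently visited node $v$ the $1$-type and the $2$-types joining $v$ with its siblings, its ancestors and the members of $F$ are read off directly from $\str{T}$, while $v.$\texttt{promised-2-types}$(u)$ is guessed as $\{\tp{T}{u,w} : w \text{ is a proper descendant of } v \text{ in } \str{T}\}$ for each ancestor $u$ of $v$. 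The universal choices on leaves of $\str{F}$ and on children $w_j$ correspond to descending into distinct subtrees of $\str{T}$. Because all guessed information is taken directly from $\str{T}\models\varphi$, every consistency and witness check succeeds on every branch, and every branch accepts once a leaf of $\str{T}$ is reached within the depth bound $\mathfrak{f}(\varphi)$.

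For soundness (if the algorithm accepts then $\varphi$ is satisfiable), I would assemble a model $\str{T}$ from an accepting run tree. Its universe is the union of $F$ and of all nodes guessed across all universal branches, identifying occurrences of the same node along common prefixes of branches. The $1$-types and the $2$-types between elements that ever coexist in memory (siblings, ancestor--descendant pairs on a single branch, and any pair involving a member of $F$) are read off directly from the guesses. For a pair $u_1, u_2$ in free position that are descendants of two distinct children $w_i, w_j$ of a common ancestor $v$, I would assign a $2$-type whose existence is guaranteed by the test \emph{respects-universal-conjunct} executed at $v$: the check has verified, for every pair of promised $1$-types extracted from $w_i.$\texttt{promised-2-types} and $w_j.$\texttt{promised-2-types}, the existence of a suitable free $2$-type satisfying $\chi$. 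Upper, sibling and free witnesses are then provided by \emph{has-upper-sibling-free-witnesses} (with free witnesses living in $F$, which is never removed from memory), $\succv$-witnesses by the children guessed at each step, and the $\forall\forall$-conjunct by the consistency checks together with \emph{respects-universal-conjunct}.

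The hard part will be showing that every $\lessv$-promise recorded by the algorithm is actually realized in $\str{T}$: for each $\beta\in v.$\texttt{promised-2-types}$(u)$ one must exhibit a proper descendant $w$ of $v$ such that the pair $(u,w)$ realizes $\beta$. I plan to argue by induction on the depth of $v$ downwards, using \emph{propagates-promised-2-types}: it equates $v.$\texttt{promised-2-types}$(u)$ with the union over the children $w_i$ of $v$ of $\{(w_i.\texttt{2-type}(u))^{-1}\}$ together with $w_i.$\texttt{promised-2-types}$(u)$, so each promised $\beta$ is either realized by the pair $(u,w_i)$ for some child $w_i$, or is further promised by some $w_i$ and handled by the inductive hypothesis along the corresponding branch of the run. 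Termination of the induction follows from the $\mathfrak{f}(\varphi)$ bound on the length of $\succv$-paths. Once this is established, the $\forall\exists$-conjuncts with $\theta_i = \tlessv$ are satisfied and $\str{T}\models\varphi$ follows.
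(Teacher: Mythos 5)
Your proposal is correct and follows essentially the same route as the paper's own (sketched) proof: completeness via Theorem~\ref{thm:fo2binsmallmodeltheorem} and Lemma~\ref{l:fo2binglobalfree} with all guesses read off the small model, and soundness by assembling a tree from the run, using \emph{respects-universal-conjunct} for free pairs outside $F$ and \emph{propagates-promised-2-types} to discharge the $\lessv$-promises. Your explicit induction on depth for realizing promised $2$-types (bottoming out at leaves, where the propagation check forces the promise set to be empty) is in fact slightly more detailed than the paper's sketch on that point.
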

\begin{proof}(Sketch.)  Assume $\varphi$ has a model. By Theorem
  \ref{thm:fo2binsmallmodeltheorem} it has a model $\str{T}$ whose
  depth and degree of nodes are bounded by $\mathfrak{f}(\varphi)$.
  By Lemma \ref{l:fo2binglobalfree} there is a model $\str{T}'$ based
  on the same frame as $\str{T}$, in which one can distinguish a set
  $F$, of size at most $3(m+1)^3 (\mathfrak{f}(\varphi))^4
  |\AAA_\varphi|$, providing free witnesses for all elements of
  $\str{T}'$. Our algorithm can just take $\str{F}:=\str{T}' \restr F$
  and make all its guesses in accordance with $\str{T}$.

For the opposite direction assume that our algorithm has an accepting run. From this run we can naturally extract a partially defined tree structure $\str{T}$ and its substructure $\str{F}$. $\str{T}$ has defined
its tree frame, $1$-types of all nodes ($v$.{\tt 1-type} components), $2$-types of nodes not in free position and $2$-types of nodes in free position at least one of which is in $F$: the $2$-type joining $v$ and $w$ is stored in $v.${\tt 2-type}$(w)$ if $v$ is a descendant of $w$, or if $w \in F$ and $v \not\in F$, and in both $v.${\tt 2-type}$(w)$ and $w.${\tt 2-type}$(v)$ if $v$ and $w$ are siblings or $v,w \in F$. Note that the function 
\emph{consistent-with-ancestors-siblings-F} ensures that the $2$-types can be assigned without conflicts. This function, together with function \emph{propagates-2-types} ensures also the consistency of the information about promised $2$-types.

What is missing is $2$-types of pairs of elements $u_1, u_2$ in free
position none of which is in $F$. In this case there is an element $v$
such that $u_1$, $u_2$ are descendants of two different children of
$v$ from the list $w_1, \ldots, w_k$. Then, due to lines 7-10 of
the function \emph{respects-universal-conjunct}, there exists  a $2$-type
consistent with the $\forall \forall$-conjunct which can join them.

The constructed tree $\str{T}$ is indeed a model of $\varphi$:
\emph{respects-universal-conjunct} takes care of $\forall \forall$
constraint of $\varphi$, the sibling, upper and free witnesses are
ensured due to function \emph{has-upper-sibling-free-witnesses} and
lower witnesses are guaranteed by function
\emph{ensure-lower-witnesses} which uses the information about
promised-2-types.
\end{proof}


\section{\texorpdfstring{$\CTwoFull$}{C2} on trees}
\label{sec:c2}
In this section we prove that the finite satisfiability problem for
$\CTwoFull$ over trees is \ExpSpace-complete. Intuitively, the proof
is a~combination of the two proofs from~\cite{WchKieroMazowieckiFO2Trees}
and \cite{WchPwitC2LinearOrder} that solve the problem for \FOt{} on
trees and for \Ct{} on linear orders respectively (note that a~linear
order is just a~tree whose each node has at most one child). However,
the method in~\cite{WchKieroMazowieckiFO2Trees} heavily depends on the
normal form from Definition~\ref{def:fo2normalform} where each
conjunct corresponds to at most one relative position $\theta \in
\Theta$. Although it is possible to bring a~$\CTwoFull$ formula into
an analogous normal form, it requires an exponential blowup (dividing
a~set of witnesses into 10 subsets corresponding to 10 order formulas
can be done in exponentially many ways). Therefore, to keep the
complexity under control, we stay with usual, less refined normal form
from Definition~\ref{def:c2normalform}, but to compensate it we
introduce a~novel technique combining type information with witness
counting.

\subsection{Multisets}

Any element of a model of a normal form conjunct $\forall
x\exists^{\bowtie C}y\ \chi $ may require up to $C$ witnesses, 
so we are interested in \textit{multisets} counting
these witnesses. To simulate counting up to the value $k$, we use the
function $\mathit{cut}_k : \mathbb{N} \rightarrow \lbrace 0, 1, 2,
\ldots, k, \infty \rbrace$, where $\mathit{cut}_k(i) = i$ for $i \leq
k$ and $\mathit{cut}_k(i) = \infty$ otherwise.

Formally, for a given $k \in \mathbb{N}$, a $k$-multiset $M$ of
elements from a set $S$ is a function $M : S \rightarrow \{ 0, 1, 2,
\ldots, k, \infty \}$. For every element $e \in S$ we simply define
$M(e)$, called the \textit{multiplicity} of $e$ in $M$, as the number
of occurrences $e$ in the multiset $M$, counted up to $k$. We employ
standard set-theoretic operations, i.e., union $\cup$ and
intersection $\cap$ with their natural semantics defined as follows:
for given multisets $A$ and $B$ and an arbitrary element $e$ from
their domains, we define $(A \cup B)(e) = \mathit{cut}_k \left( A(e) +
  B(e) \right)$ and $(A \cap B)(e) = \min(A(e), B(e))$. Additionally,
we define the empty multiset $\emptyset$ as the function that for any
argument returns $0$ and the singleton $\{ e \}$ of $e$ as the
function such that $\{ e \}(e) = 1$ and $\{ e \}(e') = 0$ for all $e'
\neq e$.

\subsection{Full types, witness counting and reduced types}

\begin{definition}[Full type]
  A $k$-\textit{full type} $\overline{\alpha}$ (over a signature
  $\tau=\tau_0 \cup \tau_{nav}$) is a function
  $\overline{\alpha} : \Theta \rightarrow \lbrace 0, 1, 2, \ldots, k,
  \infty \rbrace^{2^{\tau_0}}$, i.e., a function which takes a
  position from $\Theta$ and returns a $k$-multiset of $1$-types over
  $\tau$ that satisfies the following conditions:
\begin{itemize}
\item $\overline{\alpha}(\theta_{\uparrow}), \
  \overline{\alpha}(\theta_{\rightarrow}), \
  \overline{\alpha}(\theta_{\leftarrow})$ is either empty or a singleton,
\item $\overline{\alpha}(\theta_{=})$ is a~singleton, and
\item if $\overline{\alpha}(\theta_{\uparrow})$ 
    (respectively, $\overline{\alpha}(\theta_{\downarrow}), \
    \overline{\alpha}(\theta_{\rightarrow}), \
    \overline{\alpha}(\theta_{\leftarrow})$) is empty, then
  also the multiset $\overline{\alpha}(\theta_{\uparrow \uparrow^{+}}
  )$ (respectively, 
    $\overline{\alpha}(\theta_{\downarrow \downarrow^{+}}),
    \overline{\alpha}(\theta_{\rightrightarrows^{+}}),
    \overline{\alpha}(\theta_{\leftleftarrows^{+}})$)  is empty.
\end{itemize}

\end{definition}

Let $C$ be the function that for a given normal form $\phi$
returns $C(\phi)=\max \{C_i \}_{1 \le i \le m}$. We work with $k$-full
types usually in contexts in which a normal form $\phi$ is fixed, and
we are then particularly interested in $C(\phi)$-full types.
The purpose of a $k$-full type is to say for a given node $v$, for
each $\theta \in \Theta$ and each $1$-type $\alpha'$, how many
vertices (counting up to $k$) of $1$-type $\alpha'$ are in position
$\theta$ to $v$. Formally:
\begin{definition} For a given tree $\str{T}$ and $v \in T$ we denote
  by $\ftp{T}{k}{v}$ the unique $k$-full type \emph{realized} by $v$,
  i.e.,
  the $k$-full type $\overline{\alpha}$ such that
  $\overline{\alpha}(\theta_=)$ contains the $1$-type of $v$ and for
  all positions $\theta \in \Theta$ and for all atomic $1$-types
  $\alpha'$ we have that $$\overline{\alpha}(\theta)(\alpha') =
  \mathit{cut}_{k} \left( \# \lbrace w \in T \ : \ \str{T} \models
    \theta[v,w] \wedge \tp{T}{w}= \alpha' \rbrace \right).$$
\end{definition}

We next define functions which for a normal form $\phi$ and a
$C(\phi)$-full type $\overline{\alpha}$ say how many witnesses a
realization of $\overline{\alpha}$ has for each of the conjuncts of
$\phi$ in all possible positions $\theta$.

\begin{definition}[Witness counting
  functions] \label{Definition:WitnesscountingFunction} Let $\phi$ be
  a normal form formula, and let $\overline{\alpha}$ be a
  $C(\phi)$-full type. Assume that $\overline{\alpha}(\theta_=)$ =
  \{$\alpha$\}.  We associate with $\phi$ and $\overline{\alpha}$ a
  function $\wcf{\phi}{\overline{\alpha}}: \{1, \ldots, m \} \times
  \Theta \rightarrow \{0, 1, \ldots, C(\phi), \infty \}$, whose values
  are defined in the following way:
  \begin{itemize}
  \item for $\theta \in \lbrace \theta_=, \theta_\rightarrow,
    \theta_\leftarrow, \theta_\downarrow, \theta_\uparrow \rbrace$ and
    any $i$:
 $$\wcfp{\phi}{\overline{\alpha}}{i}{\theta} = \left\{\begin{array}{rl}
      1 & \mbox{if~}  \overline{\alpha}(\theta) {=} \lbrace \alpha' \rbrace  \text{~and~}  \alpha(x) {\wedge} \alpha'(y) {\wedge} \theta(x, y) {\models} \chi_i(x, y) \\
      0 & \text{otherwise},
\end{array}\right.$$

\item for $\theta \in \lbrace \theta_{\leftleftarrows^+},
  \theta_{\rightrightarrows^+}, \theta_{\downarrow \downarrow^+},
  \theta_{\uparrow \uparrow^+}, \theta_{\not\sim} \rbrace$ and any
  $i$:
		$$  \wcfp{\phi}{\overline{\alpha}}{i}{\theta} =\mathit{cut}_{C(\phi)} \left( \sum_{\alpha' : \ \alpha(x)  {\wedge} 
      \alpha'(y)  {\wedge}  \theta(x, y) \models \chi_i(x, y) }
    (\overline{\alpha}(\theta))(\alpha') \right ).$$
\end{itemize}
\end{definition}

This way $\wcfp{\phi}{\overline{\alpha}}{i}{\theta}$ is the number of
witnesses (counted up to $C(\phi)$), in relative position $\theta$,
for a node of full type $\overline{\alpha}$ and the formula $\chi_i$
from $\phi$.

Now we relate the notion of full types with the satisfaction of normal
form formulas.

\begin{definition}[$\varphi$-consistency] \label{def:phiconsistent}
  Let $\varphi$ be a $\CTwoFull$ formula in normal form. Let
  $\overline{\alpha}$ be a $C(\phi)$-full type.  Assume that
  $\overline{\alpha}(\theta_=)$ consists of a $1$-type $\alpha$. We
  say that $\overline{\alpha}$ is \textit{$\varphi$-consistent} if it
  satisfies the following conditions.

\begin{itemize}
\item $\alpha(x) \models \chi(x,x)$,
\item $\alpha(x) \wedge \alpha'(y) \wedge \theta(x,y) \models
  \chi(x,y) \big($ for every $\theta \in \Theta$, $\alpha' \in
  \overline{\alpha}(\theta) \; \big)$, and
\item for all $1 \le i \le m$ 
the inequality
  $\sum_{\theta \in \Theta} 
	\wcfp{\phi}{\overline{\alpha}}{i}{\theta}
  \bowtie_i C_i$ holds.
\end{itemize}

\end{definition}

\begin{lemma} \label{lemma:ficonsistent} Assume that a formula
  $\varphi \in \CTwoFull$ is in normal form. Then $\str{T} \models
  \varphi$ iff every $C(\phi)$-full type realized in $\str{T}$ is
  $\varphi$-consistent.
\end{lemma}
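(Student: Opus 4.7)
The plan is to prove the two directions of the equivalence separately, with both resting on a single technical observation about how the $\mathit{cut}_{C(\phi)}$ truncation inside $\wcfp{\phi}{\overline{\alpha}}{i}{\theta}$ relates to the true counts of witnesses. Concretely, if $v \in T$ realizes the $C(\phi)$-full type $\overline{\alpha}$ in $\str{T}$ and $n_{i,\theta}$ denotes the actual number of $w \in T$ with $\str{T} \models \theta[v,w] \wedge \chi_i[v,w]$, then I would show, by a case split on the two levels of truncation (the one hidden in each $\overline{\alpha}(\theta)(\alpha')$ and the outer one in the definition of $\wcfp{\phi}{\overline{\alpha}}{i}{\theta}$), that $\wcfp{\phi}{\overline{\alpha}}{i}{\theta} = n_{i,\theta}$ when $n_{i,\theta} \le C(\phi)$ and $\wcfp{\phi}{\overline{\alpha}}{i}{\theta} = \infty$ otherwise.

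For the ($\Rightarrow$) direction, assume $\str{T} \models \varphi$ and let $v \in T$ realize $\overline{\alpha}$ with $\overline{\alpha}(\theta_=) = \{\alpha\}$. The first two consistency clauses follow from $\str{T} \models \forall xy\ \chi(x,y)$ together with the fact that in $\CTwoFull$ the $2$-type of a pair is completely determined by the two $1$-types and the unique relative position $\theta \in \Theta$ (no common binary symbols are available to distinguish further). The third clause follows by combining the observation above with $\str{T} \models \forall x\,\exists^{\bowtie_i C_i} y\ \chi_i(x,y)$ and $C_i \le C(\phi)$: in the $\bowtie_i = {\le}$ case each $n_{i,\theta}$ is bounded by $C_i \le C(\phi)$, so every $\wcfp{\phi}{\overline{\alpha}}{i}{\theta}$ equals $n_{i,\theta}$ and their sum equals the total witness count, hence is $\le C_i$; in the $\bowtie_i = {\ge}$ case the sum is either the total (when all summands are finite) or contains $\infty$, and both outcomes satisfy $\ge C_i$.

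For the ($\Leftarrow$) direction, assume every realized $C(\phi)$-full type is $\varphi$-consistent. To check $\forall xy\ \chi(x,y)$ at an arbitrary pair $u,v \in T$: if $u=v$, the first clause applied to $\ftp{T}{C(\phi)}{u}$ gives $\chi[u,u]$; otherwise let $\theta \in \Theta$ be the unique relative position with $\str{T} \models \theta[u,v]$, observe that $\tp{T}{v} \in \ftp{T}{C(\phi)}{u}(\theta)$, and invoke the second clause. For each counting conjunct at $v$, set $\overline{\alpha} := \ftp{T}{C(\phi)}{v}$; the key observation gives two cases. Either all $\wcfp{\phi}{\overline{\alpha}}{i}{\theta}$ are finite, in which case each equals $n_{i,\theta}$ and $\sum_\theta n_{i,\theta} \bowtie_i C_i$ follows directly from consistency. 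Or some $\wcfp{\phi}{\overline{\alpha}}{i}{\theta} = \infty$, which forces $\bowtie_i = {\ge}$ (otherwise the sum would be $\infty \not\le C_i$, contradicting consistency), and then $n_{i,\theta} > C(\phi) \ge C_i$ already supplies enough witnesses.

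The main obstacle is the key observation itself: one has to verify that the two nested occurrences of $\mathit{cut}_{C(\phi)}$ cannot conspire to erase a counting discrepancy, i.e.\ that $\wcfp{\phi}{\overline{\alpha}}{i}{\theta}$ saturates to $\infty$ exactly when the true count $n_{i,\theta}$ genuinely exceeds $C(\phi)$. Once this is pinned down, the remainder of the argument is a straightforward unfolding of Definitions~\ref{def:c2normalform}, \ref{Definition:WitnesscountingFunction} and~\ref{def:phiconsistent}.
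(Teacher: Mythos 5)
Your proof is correct and follows essentially the same route as the paper's: both directions are obtained by directly unfolding Definitions~\ref{def:c2normalform}, \ref{Definition:WitnesscountingFunction} and~\ref{def:phiconsistent}, using that in the absence of common binary symbols the $2$-type of a pair is determined by the two $1$-types and the relative position $\theta$. The paper's version is terser and leaves implicit the ``key observation'' you isolate about the nested $\mathit{cut}_{C(\phi)}$ truncations (namely that $\wcfp{\phi}{\overline{\alpha}}{i}{\theta}$ saturates to $\infty$ exactly when the true witness count exceeds $C(\phi)$), so your write-up is if anything the more careful of the two.
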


\begin{proof}

$\Longrightarrow$

Assume that $\str{T} \models \varphi$. Let $\overline{\alpha}$ be
a $C(\phi)$-full type realized in $\str{T}$. We have $\str{T} \models
\forall x \forall y \ \chi(x,y)$, and $\str{T} \models  \forall x \ \exists^{\bowtie_i C_i} y \ \chi_i(x,y) $
for all $i$. The first two conditions from
Definition~\ref{def:phiconsistent} are straightforward, because $\chi$
is true for every pair of vertices, every pair of vertices is related
by some $\theta \in \Theta$ and every vertex has it's own
$1$-type. For the third condition, take $v \in T$, such that
$\ftp{T}{C(\phi)}{v} = \overline{\alpha}$, and $i \in
\mathbb{N}$. The number of $w \in T$, such that $\str{T} \models
\chi_i[v,w]$, is $\bowtie_i C_i$, because $\str{T} \models  \forall x \ \exists^{\bowtie_i C_i} y \ \chi_i(x,y) $.

$\Longleftarrow$

Every pair of vertices is related with some $\theta \in \Theta$. Let
$\overline{\alpha}$, $\overline{\beta}$ be the $C(\phi)$-full types of nodes $v,
w \in T$ realized in $\str{T}$. By assumption
$\overline{\alpha}$, $\overline{\beta}$ are $\varphi$-consistent,
which proves (by the first and the second condition from Definition
\ref{def:phiconsistent}) that $\str{T} \models \forall x \forall
y \ \chi(x,y)$.

Fix a vertex $v \in T$, its $C(\phi)$-full type $\overline{\alpha} =
\ftp{T}{C(\phi)}{v}$ and some $i \in \mathbb{N}$. We know that
$\overline{\alpha}$ is $\varphi$-consistent, so $\sum_{\theta \in
  \Theta} \wcfp{\phi}{\overline{\alpha}}{i}{\theta} \bowtie_i C_i$. By
this fact, $\# \lbrace w \in T \ | \ \exists_{\theta \in \Theta} \
\theta(w,v) \rbrace \bowtie_i C_i$, which means that we have the right
number of witnesses for $v$ to satisfy the formula $\chi_i$. That
gives us $\str{T} \models  \forall x \ \exists^{\bowtie_i C_i} y \ \chi_i(x,y) $. We have shown that every
conjunct from $\varphi$ is true in $\str{T}$, so $\str{T}
\models \varphi$.
\end{proof}

The next notion will be used to describe information from full types 
in a (lossy) compressed form. We need this form to obtain tight complexity bounds.

\begin{definition}[$\phi$-reduced type] \label{def:reducedfulltype} Let
  $\varphi$ be a normal form $\CTwoFull$ formula.  For a given $C(\phi)$-full
  type $\overline{\alpha}$, its $\phi$-reduced form,
  $\rftpa{\varphi}{\overline{\alpha}}$,  is the tuple $\left( \alpha, \wcf{\phi}{\overline{\alpha}}, A, B, F \right)$, where $A =
  \overline{\alpha}(\theta_{\uparrow}) \cup
  \overline{\alpha}(\theta_{\uparrow \uparrow^{+}})$, $B =
  \overline{\alpha}(\theta_{\downarrow}) \cup
  \overline{\alpha}(\theta_{\downarrow \downarrow^{+}})$, $F =
  \overline{\alpha}(\theta_{\rightarrow}) \cup
  \overline{\alpha}(\theta_{\leftarrow}) \cup
  \overline{\alpha}(\theta_{\rightrightarrows^{+}}) \cup
  \overline{\alpha}(\theta_{\leftleftarrows^{+}}) \cup
  \overline{\alpha}(\theta_{\not\sim})$ and
  $\overline{\alpha}(\theta_=)$ is the singleton of the $1$-type
  $\alpha$. If the $C(\phi)$-full type $\overline{\alpha}$ is realized by
  a~vertex $v$ in $\str{T}$ then we say that 
	$\rftpa{\varphi}{\overline{\alpha}}$ is the $\phi$-reduced type of $v$.
	This reduced full type will be denoted also as $\rftpv{T}{\varphi}{v}$.
	\end{definition}

Intuitively, if a~$k$-full type $\overline{\alpha}$ is realized by a~vertex
$v$ in a~structure $\str{T}$ then the multisets $A,B,F$ in
$\rftpa{\varphi}{\overline{\alpha}}$ are respectively the
$k$-multisets of 1-types realized in $\str{T}$ above, below and in free
position to $v$.

Let $\overline{\alpha}, \overline{\beta}$ be $k$-full types. A
\textit{combined $k$-full type} is a $k$-full type $\overline{\gamma}$, such
that $\overline{\gamma}(\theta) = \overline{\alpha}(\theta)$ or
$\overline{\gamma}(\theta) = \overline{\beta}(\theta)$ for all
positions $\theta \in \Theta$.

\begin{lemma} \label{lemma:combinedfulltypes} Let $\overline{\alpha},
  \overline{\beta}$ be $\varphi$-consistent $C(\phi)$-full types such that their
  $\phi$-reduced forms are equal. Then the combined $C(\phi)$-full type
  $\overline{\gamma}$ in form $\overline{\gamma}(\theta) =
  \overline{\alpha}(\theta)$ for $\theta \in \lbrace
  \theta_{\uparrow}, \theta_{\uparrow \uparrow^+},$
  $\theta_{\rightarrow}, \theta_{\rightrightarrows^+},
  \theta_{\not\sim}, \theta_{\leftarrow}, \theta_{\leftleftarrows^+}
  \rbrace$ and $\overline{\gamma}(\theta) = \overline{\beta}(\theta)$
  for $\theta \in \lbrace \theta_=, \theta_{\downarrow},
  \theta_{\downarrow \downarrow^+} \rbrace$ is also
  $\varphi$-consistent.
\end{lemma}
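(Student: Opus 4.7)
The strategy is to verify, in turn, that $\overline{\gamma}$ is a legitimate $C(\phi)$-full type and then that it satisfies the three clauses of $\varphi$-consistency from Definition \ref{def:phiconsistent}. The key fact I will lean on throughout is that equality of the $\phi$-reduced forms of $\overline{\alpha}$ and $\overline{\beta}$ forces them to share both the centre $1$-type $\alpha$ and the whole witness-counting function, i.e., $\wcf{\phi}{\overline{\alpha}} = \wcf{\phi}{\overline{\beta}}$ as functions.

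First I would check that $\overline{\gamma}$ is a valid full type: each of $\overline{\gamma}(\theta_\uparrow), \overline{\gamma}(\theta_\rightarrow), \overline{\gamma}(\theta_\leftarrow)$ must be empty-or-singleton, $\overline{\gamma}(\theta_=)$ must be a singleton, and any empty immediate position must force the matching transitive-closure position to be empty. These conditions are preserved because the splitting in the lemma keeps each immediate relation together with its transitive closure on the same side: $\{\theta_\uparrow, \theta_{\uparrow\uparrow^+}\}$ comes from $\overline{\alpha}$, $\{\theta_\downarrow, \theta_{\downarrow\downarrow^+}\}$ from $\overline{\beta}$, and the two sibling pairs entirely from $\overline{\alpha}$. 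Moreover $\overline{\gamma}(\theta_=) = \overline{\beta}(\theta_=) = \{\alpha\}$ by reduced-form equality. Conditions 1 and 2 of $\varphi$-consistency are then inherited directly: condition 1 from $\overline{\alpha}$'s consistency (the centre is $\alpha$), and condition 2 because for every $\theta$ we have $\overline{\gamma}(\theta) \in \{\overline{\alpha}(\theta), \overline{\beta}(\theta)\}$, so the implication transfers from whichever parent supplied that multiset.

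The main step is condition 3. The critical observation, reading off Definition \ref{Definition:WitnesscountingFunction}, is that each value $\wcfp{\phi}{\overline{\alpha}}{i}{\theta}$ is computed solely from the centre type $\alpha$ and the single component $\overline{\alpha}(\theta)$, so it does not mix different positions. Consequently $\wcfp{\phi}{\overline{\gamma}}{i}{\theta}$ equals either $\wcfp{\phi}{\overline{\alpha}}{i}{\theta}$ or $\wcfp{\phi}{\overline{\beta}}{i}{\theta}$, depending on which of $\overline{\alpha}, \overline{\beta}$ contributed $\overline{\gamma}(\theta)$. But these two are equal to begin with, since $\wcf{\phi}{\overline{\alpha}} = \wcf{\phi}{\overline{\beta}}$ by reduced-form equality; hence $\wcfp{\phi}{\overline{\gamma}}{i}{\theta} = \wcfp{\phi}{\overline{\alpha}}{i}{\theta}$ pointwise in $\theta$. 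Summing over $\theta$ yields $\sum_{\theta} \wcfp{\phi}{\overline{\gamma}}{i}{\theta} = \sum_{\theta} \wcfp{\phi}{\overline{\alpha}}{i}{\theta} \bowtie_i C_i$, the last $\bowtie_i$ coming from $\varphi$-consistency of $\overline{\alpha}$.

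There is no genuine obstacle: the specific grouping of $\Theta$ chosen in the statement is precisely what is needed to make both the structural full-type conditions and the per-position witness counts transfer cleanly. The one subtlety to keep in mind is that the witness-counting function stored in the reduced type already records the per-position quantities one needs, so one does not have to try to recover individual position counts from the coarser unions $A, B, F$ (which would be impossible in general and would form the real obstacle if those unions were all that the reduced type kept).
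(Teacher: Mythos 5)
Your proposal is correct and follows essentially the same route as the paper's (very terse) proof: conditions 1 and 2 of $\varphi$-consistency transfer because every component of $\overline{\gamma}$ comes from one of the two $\varphi$-consistent parents with the same centre $1$-type, and condition 3 follows from the equality of the witness-counting components of the reduced forms, which (as you correctly note) are computed position-by-position and hence agree pointwise on $\overline{\gamma}$. Your additional check that $\overline{\gamma}$ is a well-formed full type and your closing remark on why the per-position counts (rather than the coarser unions $A,B,F$) are the essential ingredient are accurate elaborations of what the paper leaves implicit.
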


\begin{proof}
  Obviously $\overline{\gamma}$ satisfies the first two conditions
  from Definition~\ref{def:phiconsistent} because $\overline{\alpha}$
  and $\overline{\beta}$ do. The third condition is guaranteed by the
  equality of the witness counting components. 
\end{proof}

\begin{example}
  Note that the assumption about equality of $\phi$-reduced full types, and
  in particular their witness counting components, is
  essential. In~\cite[Proposition 2]{WchKieroMazowieckiFO2Trees} the
  authors prove that in the setting without counting quantifiers
  a~combined type remains $\varphi$-consistent without the assumption
  about equality of the witness-counting components. The following example
  shows that in our scenario it is no longer true.

\begin{figure}[h] \label{f:exa}
\centering
\includegraphics[scale=0.52]{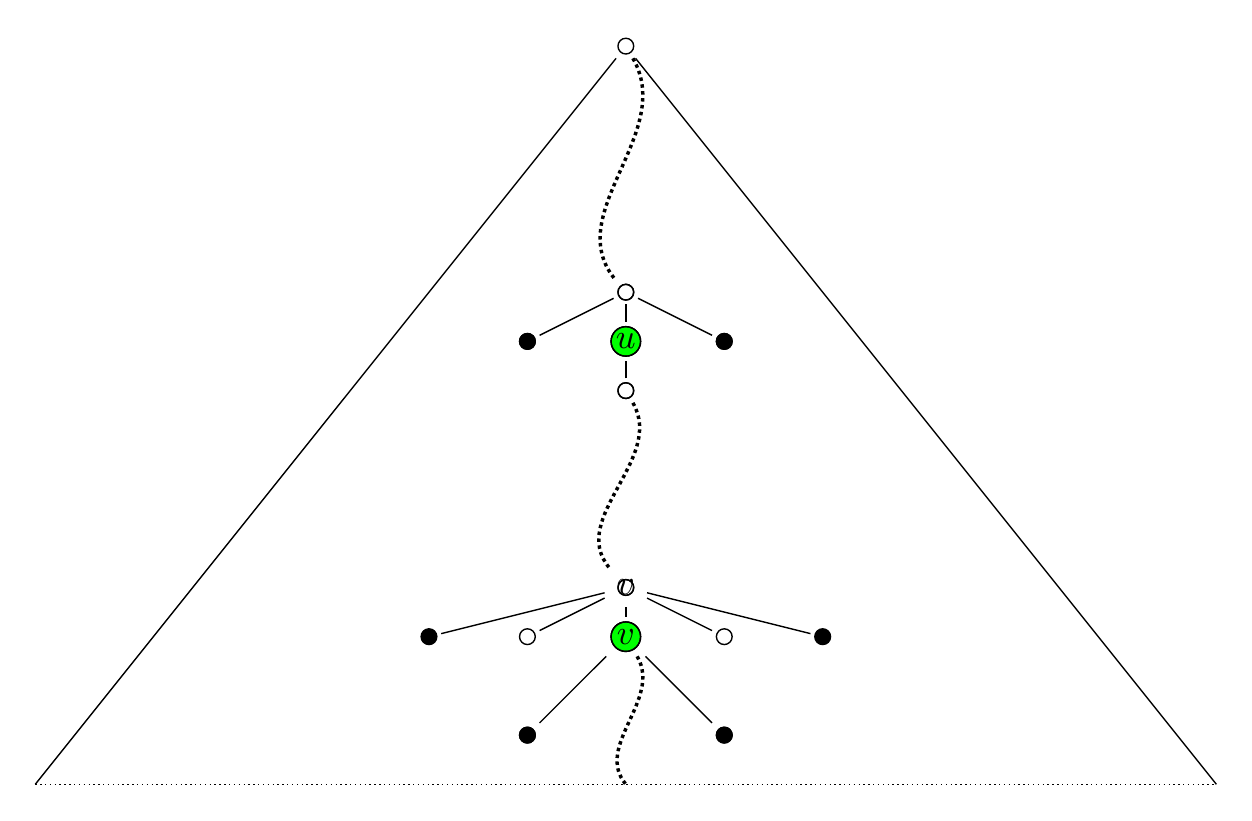}
\caption{Naive combination of full types}
\end{figure}

Let $\varphi$ be a~formula saying that  every green vertex has at most
three direct black neighbors below, on the left or on the right;
formally 
$$\varphi = \forall x \exists^{\leq 3} y \ \mathit{green}(x)
\Rightarrow \left( \mathit{black}(y) \wedge \left( x \downarrow y \vee
    y \downarrow x \vee x \rightarrow y \vee y \rightarrow x \right)
\right).$$
 Let $\str{T}$ be a tree model from Fig.~\ref{f:exa}. Denote $\overline{\alpha} = \ftp{T}{C(\phi)}{u}$ and
$\overline{\beta} = \ftp{T}{C(\phi)}{v}$. Because
$\str{T} \models \phi$, the $C(\phi)$-full types $\overline{\alpha}$ and $\overline{\beta}$ are
$\varphi$-consistent. However the combined $C(\phi)$-full type $\overline{\gamma}$, in form
described in Lemma \ref{lemma:combinedfulltypes}, is not
$\varphi$-consistent (the black nodes appear in $\overline{\gamma}$ on positions
$\theta_\downarrow$, $\theta_{\leftarrow}$, $\theta_{\rightarrow}$
four times in total).
\end{example}

\subsection{Small model theorem}
The general scheme of the decidability proof of finite satisfiability of
$\CTwoFull$ is similar to the one from Section~\ref{sec:f2bin}. Namely, we demonstrate
the small-model property of the logic, showing that every satisfiable
formula $\varphi$ has a tree model of depth and degree bounded
exponentially in $|\varphi|$. It is also obtained in a similar way, by
first shortening $\succv$-paths and then shortening the
$\succh$-paths. The technical details differ however. 

Recall that given a normal form $\phi$ we denote by $m$ the number of its
$\forall \exists$ conjuncts, and by $\AAA_\phi$ the set of $1$-types over
the signature consisting of the symbols appearing in $\phi$.

\begin{theorem}[Small model theorem]\label{thm:c2smallmodeltheorem}
  Let $\varphi$ be a formula of $\CTwoFull$ in normal form 
	If $\varphi$ is satisfiable then it has a a tree model in which every
    path has length bounded by $3 \cdot \left( C(\phi)+2 \right)^{10m + 1} \cdot |\AAA_\phi|^2$ and every vertex has degree bounded by
  $(4 C(\phi)^2 + 8C(\phi)) \cdot |\AAA_\phi|^5$.
\end{theorem}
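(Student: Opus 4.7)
I propose to prove Theorem~\ref{thm:c2smallmodeltheorem} in the same style as Theorem~\ref{thm:fo2binsmallmodeltheorem}, splitting it into two lemmas: one bounding the length of $\succv$-paths and one bounding the degree of nodes. In both, the plan is to identify two nodes on a long path whose $\phi$-reduced types coincide and then splice the tree so as to remove the intermediate portion. Lemma~\ref{lemma:combinedfulltypes} combined with Lemma~\ref{lemma:ficonsistent} then certifies that the spliced structure still satisfies $\varphi$.

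For the $\succv$-path bound I would first verify that along any $\succv$-path $v_1,\dots,v_n$ the three multiset components of the reduced type are monotone: writing $(\alpha_i,W_i,A_i,B_i,F_i)=\rftpv{T}{\varphi}{v_i}$, one has $A_{i+1}=A_i\cup\{\tp{T}{v_i}\}$ (non-decreasing), $B_{i+1}\subseteq B_i$ in the multiset sense (non-increasing), and a short case analysis shows that every element in position $\rightarrow,\leftarrow,\rightrightarrows^+,\leftleftarrows^+$ or $\not\sim$ to $v_i$ remains in such a position to $v_{i+1}$, while the siblings of $v_{i+1}$ and their subtrees migrate from $B_i$ into $F_{i+1}$; hence $F_{i+1}\supseteq F_i$. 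Since each coordinate of each of these multisets lies in the $(C(\phi)+2)$-element set $\{0,1,\dots,C(\phi),\infty\}$, the three sequences can strictly change in total at most $3(C(\phi)+1)|\AAA_\phi|$ times, partitioning the path into at most $3(C(\phi)+1)|\AAA_\phi|+1$ maximal intervals of constant $(A,B,F)$. On any such interval the reduced type is determined by $(\alpha,W)$, which ranges over at most $|\AAA_\phi|(C(\phi)+2)^{10m}$ values (since $|\Theta|=10$). A pigeonhole on an interval of length exceeding $|\AAA_\phi|(C(\phi)+2)^{10m}$ produces indices $k<l$ with $\rftpv{T}{\varphi}{v_k}=\rftpv{T}{\varphi}{v_l}$, and the splice replaces the subtree rooted at the child of $v_k$ on $P$ by the subtree rooted at the corresponding child of $v_l$, exactly as in the proof of Lemma~\ref{lemma:fo2binsmallpaths}. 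Multiplying the interval count by the pigeonhole bound yields the claimed $3(C(\phi)+2)^{10m+1}|\AAA_\phi|^2$.

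For the degree bound I would apply an analogous argument on a $\succh$-path of siblings of a fixed parent. Now $A$ is constant along the path, but $B$ and $F$ vary; crucially one has $F_i=F^*\cup\bigcup_{j\neq i}(\{\tp{T}{v_j}\}\cup B_j)$, where $F^*$ is the part of the free multiset shared by all siblings. Following the marking scheme of Lemma~\ref{lemma:fo2binsmalldegfinitetree} but scaled for counting, I would mark the $C(\phi)$ leftmost and $C(\phi)$ rightmost realizations of each $1$-type on the $\succh$-path (to preserve $\leftleftarrows^+$ and $\rightrightarrows^+$ witness counts), and $C(\phi)+1$ siblings hosting a descendant of each $1$-type (to keep the free-position counts saturated after deletions), then close this set under required witnesses through $O(C(\phi))$ layers. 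Outside the resulting distinguished set, a long interval of unmarked siblings contains two nodes whose reduced types agree, and deleting the intermediate subtrees completes the surgery. A routine counting of the distinguished set together with the pigeonhole bound then yields the stated degree bound $(4C(\phi)^2+8C(\phi))|\AAA_\phi|^5$.

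The main technical obstacle in both lemmas is showing that the splice preserves $\varphi$-consistency at every surviving node, not only at the pair that motivated it. The full types of ancestors above the splice and of nodes inside the moved subtree both shift, but because the splice points share their entire $\phi$-reduced type and lie inside a constant-$(A,B,F)$ interval, Lemma~\ref{lemma:combinedfulltypes} applies coordinate-by-coordinate to combine the $\uparrow$-components of one pole with the $\downarrow$-components of the other, while the $A,B,F$-components of all remaining nodes are affected only in coordinates that were already saturated to $\infty$, so that their witness counting functions stay unchanged. This is precisely where working with reduced types rather than mere $1$-types (as in the proof of Theorem~\ref{thm:fo2binsmallmodeltheorem}) becomes essential, and it is the place where I expect the bulk of the formal writeup to go.
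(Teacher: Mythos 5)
Your proposal follows essentially the same route as the paper: the paper proves a ``Cutting Lemma'' (if two nodes on a $\succv$-path have equal $\phi$-reduced types, replacing the subtree rooted at the upper one by the subtree rooted at the lower one preserves the model, via Lemma~\ref{lemma:combinedfulltypes} and Lemma~\ref{lemma:ficonsistent}), then derives the path bound from exactly the monotonicity of the $A,B,F$ components and the count of $(\alpha,W)$-values per constant-$(A,B,F)$ interval that you describe, and handles the degree by marking $O(C(\phi)\cdot|\AAA_\phi|)$ children per $1$-type and pigeonholing on \emph{horizontal} full types between marked nodes. Your observation that the components of other nodes change only in coordinates already saturated to $\infty$ is precisely the content of the paper's case analysis in the Cutting Lemma.

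Two points in your writeup would need repair. First, you describe the path splice as replacing the subtree rooted at the \emph{child} of $v_k$ on $P$ by the subtree rooted at the corresponding child of $v_l$, ``exactly as in Lemma~\ref{lemma:fo2binsmallpaths}''; but your pigeonhole only gives $\rftpv{T}{\varphi}{v_k}=\rftpv{T}{\varphi}{v_l}$, not equality of the reduced types of the children, so after a child-level splice the new $\theta_\downarrow$ and $\theta_{\downarrow\downarrow^+}$ multisets of $v_k$ need not be a combination of two realized consistent full types. The correct surgery (as in the paper) replaces the subtree rooted at $v_k$ itself by the subtree rooted at $v_l$, so that the new full type at that position is exactly the combination covered by Lemma~\ref{lemma:combinedfulltypes}. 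Second, in the degree argument you ``close the marked set under required witnesses through $O(C(\phi))$ layers''; this closure is an artifact of the \FOtall{} proof and is both unnecessary here (the full-type machinery shows the full types of all surviving nodes are preserved exactly, so no witnesses need to be reconstructed) and harmful to the arithmetic: each marked node can demand up to $mC(\phi)$ witnesses, so two closure layers inflate the marked set by a factor of order $m^2C(\phi)^2$ and the resulting degree bound would exceed the stated $(4C(\phi)^2+8C(\phi))\cdot|\AAA_\phi|^5$. Dropping the closure and pigeonholing on the five sibling-related components of the full type (rather than on the reduced type, whose $F$-component merges the left and right sibling information you actually need to control separately) recovers the paper's count.
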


  We split the proof of this theorem into two parts. In Section
  \ref{subsubsection:c2shortpaths} we show how to reduce the length of
  paths in a tree and in Section \ref{subsubsection:c2smalldegree} we
  show how to reduce the degree of every vertex.

\subsubsection{Short paths} \label{subsubsection:c2shortpaths}

\begin{lemma}
[Cutting lemma]
\label{thm:treecut}
  Let $\varphi \in \CTwoFull$ be a formula in normal form and
  $\str{T}$ be its tree model. If there are two vertices $u, v
  \in T$, such that $v$ is below $u$ and $\rftpv{T}{\varphi}{u}
  = \rftpv{T}{\varphi}{v}$, then the tree $\str{T}'$,
  obtained by replacing the subtree rooted at $u$ by the subtree
  rooted at $v$, is also a model of $\varphi$.
\end{lemma}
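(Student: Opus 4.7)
The plan is to invoke Lemma~\ref{lemma:ficonsistent}: it suffices to show that every $C(\phi)$-full type realized in $\str{T}'$ is $\varphi$-consistent. Let $U$ and $V$ denote the subtrees of $\str{T}$ rooted at $u$ and $v$ respectively, and let $S := U \setminus V$ be the set of nodes discarded by the surgery; thus $S$ consists of $u$, the strip of nodes strictly between $u$ and $v$ on the $\succv$-path from $u$ to $v$, and all sibling subtrees hanging off that strip. The universe of $\str{T}'$ equals $(T \setminus U) \cup V$, and the relations between surviving nodes are unchanged except that $V$ now sits at $u$'s old position.

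I would partition the nodes $w$ of $\str{T}'$ into three groups and establish $\varphi$-consistency in each. When $w = v$, a direct check shows that $v$'s parent, siblings and free-position neighbourhood in $\str{T}'$ coincide with those of $u$ in $\str{T}$, while $v$'s descendants are exactly those of $v$ in $\str{T}$. Hence the full type of $v$ in $\str{T}'$ is exactly the combined full type of $\ftp{T}{C(\phi)}{u}$ and $\ftp{T}{C(\phi)}{v}$ in the sense of Lemma~\ref{lemma:combinedfulltypes}, whose hypothesis (equality of the reduced forms) is precisely our assumption; thus $v$'s full type in $\str{T}'$ is $\varphi$-consistent. For the two remaining groups, $w \in T \setminus U$ and $w \in V \setminus \{v\}$, I would prove the stronger claim that the full type of $w$ in $\str{T}'$ coincides (as cut-at-$C(\phi)$ multisets) with the full type of $w$ in $\str{T}$, from which consistency is immediate since $\str{T} \models \varphi$.

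The common pattern for these last two groups is that, in every way $w$ is affected by the removal of $S$, the $1$-types of the lost nodes already occur with multiplicity at least $C(\phi)$ among surviving nodes playing the same role in $w$'s neighbourhood, so cutting at $C(\phi)$ makes the change invisible. For $w \in T \setminus U$ this uses the equality $B_u = B_v$ from the reduced types: each $1$-type realized in $S$ occurs $\geq C(\phi)$ times in $V \setminus \{v\}$, and every node of $V \setminus \{v\}$ survives in $\str{T}'$ in the same relative position to $w$ (descendant if $w$ is an ancestor of $u$, free otherwise) as the lost nodes occupied. For $w \in V \setminus \{v\}$ the losses split into two kinds: the ancestor losses (the strip together with $u$) are absorbed using $A_u = A_v$, since each of their $1$-types occurs $\geq C(\phi)$ times among the ancestors of $u$ in $\str{T}$, which remain the strict ancestors of $w$ above $v$ in $\str{T}'$; and the sibling-subtree losses are absorbed using $F_u = F_v$, because each of their $1$-types occurs $\geq C(\phi)$ times among those realized by the nodes of $T \setminus U$ that are not ancestors of $u$, all of which remain in free position to $w$ in $\str{T}'$.

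The main obstacle I anticipate is this last case of strict descendants of $v$: both the ancestor chain and the free-position neighbourhood of $w$ change substantially, and one must realise that it is the redundancies encoded in $A_u = A_v$ and $F_u = F_v$ (and not the witness-counting equality $W_u = W_v$, which really matters only for the case $w = v$) that absorb these changes. Correctly matching each lost node with a $C(\phi)$-fold surviving copy of the same $1$-type in the appropriate relative position to $w$ is the conceptual heart of the argument.
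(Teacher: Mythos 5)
Your proposal is correct and follows essentially the same route as the paper's proof: apply Lemma~\ref{lemma:ficonsistent}, handle the relocated root $v$ via Lemma~\ref{lemma:combinedfulltypes}, and show that every other surviving node's $C(\phi)$-full type is unchanged because the equalities of the $A$-, $B$- and $F$-components force each $1$-type realized among the deleted nodes to be saturated (multiplicity strictly greater than $C(\phi)$, not merely $\geq C(\phi)$ as you write) among surviving nodes in the same relative position. The paper organizes the ``unchanged'' nodes into four cases (above $u$, below $v$, free to $u$, sibling of $u$) rather than your two groups $T\setminus U$ and $V\setminus\{v\}$, but the arguments coincide.
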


\begin{proof}
  The proof goes by case analysis. First, observe that the $C(\phi)$-full type
  of $u$ in tree $\str{T}'$ is a combination of the $C(\phi)$-full types of
  $u$ and $v$ in $\str{T}$ and thus, by Lemma
  \ref{lemma:combinedfulltypes}, it is $\varphi$-consistent.  In the
  rest of the proof we show that for every vertex $w$ in
  $\str{T}'$ we have $\ftp{{T}}{C(\phi)}{w} =
  \ftp{{T}'}{C(\phi)}{w}$. Then Lemma~\ref{lemma:ficonsistent}
  guarantees that the obtained tree $\str{T}'$ is indeed a model
  of $\varphi$.

  Let $w$ be any vertex from $T'$, or equivalently, a vertex from $T$
  such that $w$ lies inside the tree rooted at $v$ or lies outside of
  the tree rooted at $u$. There are four possible "locations" for $w$:

\begin{enumerate}[(a)]\itemsep0pt
\item $w$ is above $u$ (cf.~Fig.~\ref{f:casea})
\item $w$ is below $v$ (cf.~Fig.~\ref{f:caseb})
\item $w$ is in a free position to $u$  (cf.~Fig.~\ref{f:casec})
\item $w$ is a sibling of $u$
\end{enumerate}

For the rest of the proof, denote $\overline{\alpha} =
\ftp{\str{T}}{C(\phi)}{w}$ and $\overline{\beta} =
\ftp{\str{T}'}{C(\phi)}{w}$.  Let us consider the four cases distinguished above.

\begin{enumerate}[(a)]

\item $w$ is above $u$
\begin{figure}[!ht] 
  \centering
  \includegraphics[scale=0.6]{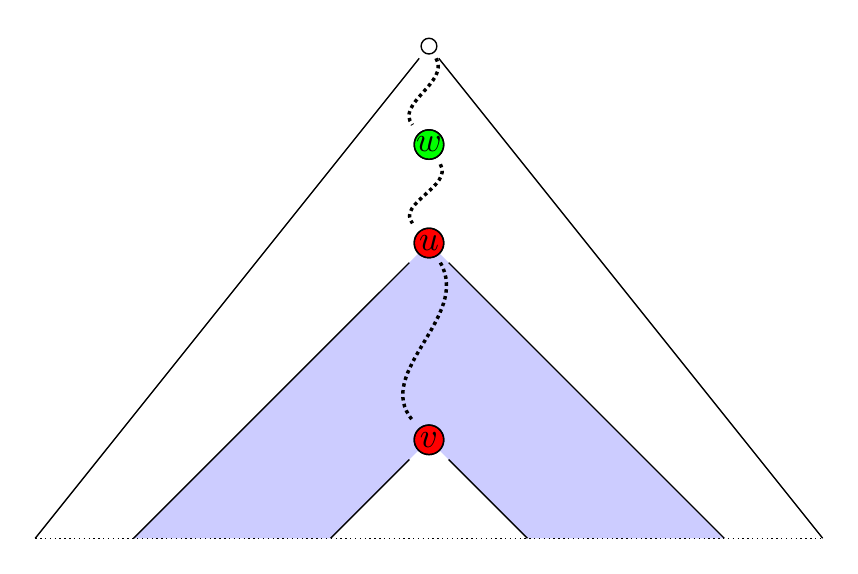}
	\caption{Case (a)} \label{f:casea}
\end{figure}

Of course cutting vertices between $u$ and $v$ does not change anything
above or in free position to $w$, so obviously for all $\theta \in \{
\teq, \tprecv, \tgreatv, \tprech, \tgreath, \tsucch, \tlessh,
\tfree\}$ we have $\overline{\alpha}(\theta) {=}
\overline{\beta}(\theta)$. The 1-type of the node immediately below
$w$ is also not changed, so $\overline{\alpha}(\tsuccv) =
\overline{\beta}(\tsuccv)$. The only missing case is equality of
multisets $\overline{\alpha}(\tlessv)$ and
$\overline{\beta}(\tlessv)$, but it follows from the equality of the
$\varphi$-reduced types $\rftpv{T}{\varphi}{u}$ and
$\rftpv{T}{\varphi}{v}$, and in particular their
$B$-components. More specifically, for a~given 1-type $\gamma$, the
number of occurrences of $\gamma$ in $\overline{\alpha}(\tlessv)$ is
(counted up to $C(\phi)$) the number of occurrences of $\gamma$ in the
subtree of $\str{T}$ rooted at $w$. We can divide this tree into
three pieces, as in Fig.~\ref{f:casea}: the upper part without the subtree
rooted at $u$, the lower part rooted at $v$ and the remaining middle
part. Now, the multiplicity of $\gamma$ in
$\overline{\alpha}(\tlessv)$ is simply the sum of
multiplicities  of $\gamma$ in each of these parts. But from the fact
that $\rftpv{T}{\varphi}{u}= \rftpv{T}{\varphi}{v}$, we
know that the multiplicity of $\gamma$ in the subtree rooted at $u$ is
the same as in the subtree rooted at $v$. It means that either there are
more than $C(\phi)$ occurrences of $\gamma$ in the subtree rooted at $v$ or
there are  no occurrences of $\gamma$ in the middle part. In both
cases the multiplicity of $\gamma$ below  $w$ is the same 
before and after the surgery.

\item $w$ is below $v$
\begin{figure}[!ht]
\centering
\includegraphics[scale=0.6]{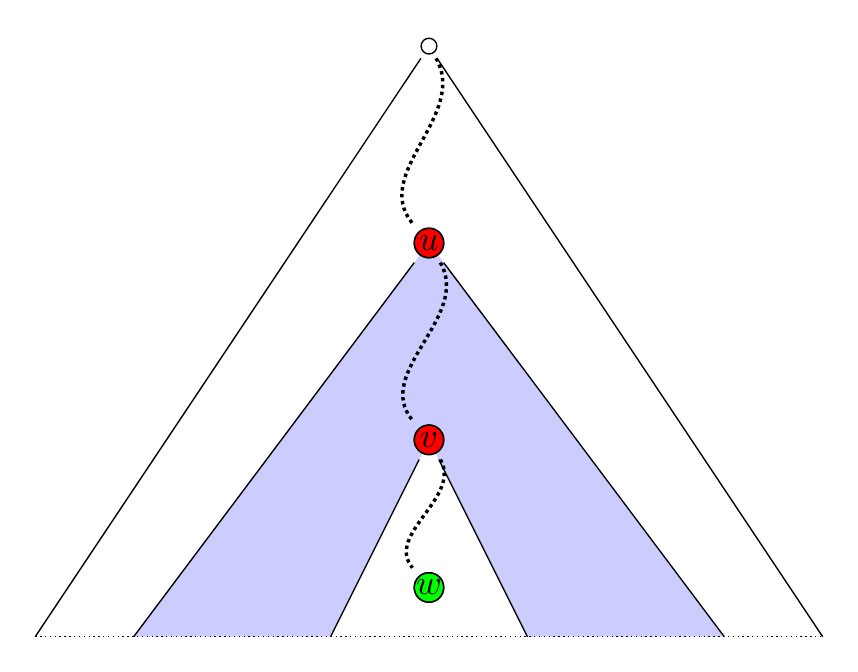}
\caption{Case (b)} \label{f:caseb}
\end{figure}

The reasoning to show that $\overline{\alpha}(\theta) =
\overline{\beta}(\theta)$ for $\theta \in \{ \teq, \tprecv, \tgreatv,
\tprech, \tgreath, \tsucch, \tlessh, \tlessv\}$ is the same or
symmetric to the previous case.

It is a~bit more tricky to prove that $\overline{\alpha}(\tfree) =
\overline{\beta} (\tfree)$. Consider an arbitrary 1-type
$\gamma$. Observe that vertices of type $\gamma$ in free position to $w$ in
$\str{T}$ are vertices of type $\gamma$ that are
\begin{itemize}
\item in free position to $w$ in the subtree rooted at $v$ 
(denote the number of them by $\# w_{\not\sim}$),
\item close and distant siblings of $v$ (denoted $\#
  v_{\rightleftarrows}$),
\item vertices in free position to $v$ in the subtree rooted at $u$
  (denoted $\# v_{\not\sim}$),
\item close and distant siblings of $u$ (denoted $\#
  u_{\rightleftarrows}$), or
\item in free position to $u$ in the whole tree ($\# u_{\not\sim}$).
\end{itemize}
This gives us the equation: 
\[ \left( \overline{\alpha}(\tfree)\right) \left( \gamma \right) = 
\mathit{cut}_{C(\phi)} \left( \# w_{\not\sim}
  {+} \# v_{\rightleftarrows} {+} \# v_{\not\sim} {+} \#
  u_{\rightleftarrows} {+} \# u_{\not\sim} \right).
\]

By the assumption that $\rftpv{T}{\varphi}{v} =
\rftpv{T}{\varphi}{u}$, the $F$-components of these $\varphi$-reduced $C(\varphi)$-full
types are equal, so $ \mathit{cut}_{C(\phi)} \left( \# v_{\rightleftarrows} {+}
  \# v_{\not\sim} {+} \# u_{\rightleftarrows} {+} \# u_{\not\sim} \right)
= \mathit{cut}_{C(\phi)} \left( \# v_{\rightleftarrows} {+} \# v_{\not\sim}
\right)$. It means that either $\mathit{cut}_{C(\phi)} \left( \#
  u_{\rightleftarrows} {+} \# u_{\not\sim} \right) = \infty$ or $\#
v_{\rightleftarrows} + \# v_{\not\sim} = 0$. In both cases, 
$\overline{\alpha}(\theta_{\not\sim})$ does not change  after removing
vertices between nodes $u$ and $v$.

\item $w$ is in a free position to $u$

\begin{figure}[!ht]
\centering
\includegraphics[scale=0.6]{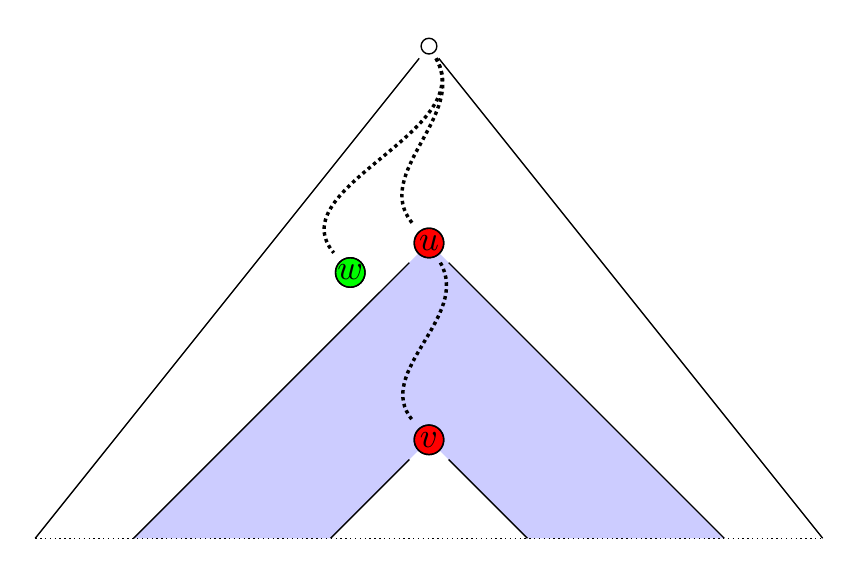}
\caption{Case (c)}  \label{f:casec}
\end{figure}

The fact that $\overline{\alpha}(\theta) = \overline{\beta}(\theta)$
for $\theta \in \lbrace \theta_{\downarrow}, \theta_{\downarrow
  \downarrow^+}, \theta_{\leftarrow}, \theta_{\leftleftarrows^+},
\theta_{\rightarrow}, \theta_{\rightrightarrows^+}, \theta_{\uparrow},
\theta_{\uparrow \uparrow^+}, \theta_= \rbrace$ is quite obvious (for
the same reason as in the previous cases).

We will show that $\overline{\alpha}(\theta_{\not\sim}) =
\overline{\beta}(\theta_{\not\sim})$. Observe that the whole tree
rooted at $u$ is in free position to $w$. $C(\phi)$-full types of all vertices
$w'$ outside this tree, such that $\str{T} \models
\theta_{\not\sim}[w,w']$ don't change, so we can concentrate only on
vertices from that tree. Note that $\rftpv{T}{\varphi}{u} =
\rftpv{T}{\varphi}{v}$, which means equality of "below"
multisets. Because multiplicity of each $1$-type $\gamma$ below $v$ is
equal to the multiplicity of $\gamma$ below $u$, we have equal
multiplicity of vertices in free position to $w$ before and after
replacing the root.

\item $w$ is a sibling of $u$

  The proof is similar to the previous ones. We only need to show that
  $\overline{\alpha}(\theta_{\not\sim}) =
  \overline{\beta}(\theta_{\not\sim})$. Observe that
  $\ftp{{T}}{C(\phi)}{u}(\theta_{\downarrow}) \ \cup \
  \ftp{{T}}{C(\phi)}{u}(\theta_{\downarrow \downarrow^+}) =
  \ftp{{T}}{C(\phi)}{v}(\theta_{\downarrow}) \ \cup \
  \ftp{{T}}{C(\phi)}{v}(\theta_{\downarrow \downarrow^+})$, so we
  don't accidentally cut any of the free witnesses of $w$, which
  proves the desired equality.
\end{enumerate}
\end{proof}

\begin{lemma} \label{lemma:c2smallpaths} Let $\varphi$ be a formula in
  normal form of $\CTwoFull$ satisfied in a finite tree. 
	Then there
  exists a tree model of $\varphi$ whose every $\downarrow$-path has
  length bounded by $3 \cdot \left( C(\phi)+2 \right)^{10m + 1} \cdot |\AAA_\phi|^2$.
\end{lemma}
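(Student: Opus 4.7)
The plan is to obtain the path bound by iterating the Cutting Lemma (Lemma~\ref{thm:treecut}). It suffices to show that in any tree model of $\varphi$, every $\downarrow$-path of length strictly exceeding $3(C(\phi)+2)^{10m+1}|\AAA_\phi|^2$ contains two distinct vertices $u$ above $v$ with $\rftpv{T}{\varphi}{u} = \rftpv{T}{\varphi}{v}$. One such repetition allows us to cut out the intermediate segment and produce a strictly smaller model of $\varphi$; since the model is finite, finitely many such cuts yield the desired small model.

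The crucial ingredient is a monotonicity observation for the multiset components $A, B, F$ of the reduced type along any $\downarrow$-path $v_0, v_1, v_2, \ldots$. The ``above'' multiset $A$ grows coordinatewise, because passing to a child strictly enlarges the set of ancestors; the ``below'' multiset $B$ shrinks coordinatewise, because passing to a child removes the child's siblings and their entire subtrees from the descendants; and the ``free'' multiset $F$ also grows coordinatewise. The last point is the only real technical obstacle: one must verify that any vertex $w$ in free position to $v_i$ remains in free position to $v_{i+1}$, i.e., that $w$ cannot become an ancestor, descendant, sibling, or equal to $v_{i+1}$. A short case analysis confirms this (every ancestor of $v_{i+1}$ is either $v_i$ or an ancestor of $v_i$; every descendant or sibling of $v_{i+1}$ is a descendant of $v_i$), so such $w$ was already related to $v_i$, contradicting freeness.

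Given this monotonicity, each of $A$, $B$, $F$ can change along the path at most $|\AAA_\phi|(C(\phi)+1)$ times, because each multiset has $|\AAA_\phi|$ coordinates and each coordinate is a monotone sequence over the $(C(\phi){+}2)$-element codomain $\{0, 1, \ldots, C(\phi), \infty\}$. Hence the path splits into at most $3|\AAA_\phi|(C(\phi)+1)+1 \leq 3|\AAA_\phi|(C(\phi)+2)$ maximal segments on which the triple $(A, B, F)$ is constant. Within any such segment the reduced type is determined by the pair $(\alpha, W)$ alone, and there are at most $|\AAA_\phi|\cdot(C(\phi)+2)^{10m}$ such pairs, since $W$ is a function on the $10m$-element domain $\{1, \ldots, m\} \times \Theta$ taking values in $\{0, \ldots, C(\phi), \infty\}$.

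Multiplying the two bounds yields at most $3|\AAA_\phi|^2(C(\phi)+2)^{10m+1}$ distinct reduced types realizable along any $\downarrow$-path. By pigeonhole, any longer path contains two vertices in the same segment sharing the same $(\alpha, W)$, hence the same reduced type, and the Cutting Lemma applies. Since every cut strictly decreases the number of vertices of the finite tree, the iteration terminates, yielding a model of $\varphi$ all of whose $\downarrow$-paths respect the desired length bound.
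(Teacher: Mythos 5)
Your proposal is correct and follows essentially the same route as the paper: iterate the Cutting Lemma, use the monotonicity of the $A$, $B$, $F$ components along a $\downarrow$-path to bound the number of segments on which they are constant by $3(C(\phi){+}2)|\AAA_\phi|$, and multiply by the $|\AAA_\phi|(C(\phi){+}2)^{10m}$ possible $(\alpha, W)$ pairs. Your explicit verification that free position is preserved when passing to a child is a welcome addition the paper leaves implicit (note only that the $F$-component also collects siblings of $v_i$, which likewise become free to $v_{i+1}$).
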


\begin{proof}
  According to Lemma \ref{thm:treecut} we can restrict attention to models
  with the property that every $\phi$-reduced full type appears only once on
  every $\downarrow$-path. Let $\str{T} \models \varphi$ be a tree
  model with this property. Let $v_1, v_2, \ldots, v_n$ be a
  $\downarrow$-path in $\str{T}$. Observe that the $\phi$-reduced full types
  on this path behave in a~monotonic way in the sense that for every
  $i$ and the $\phi$-reduced full types of the $i, (i{+}1)$-th vertices $R_i =
  (\alpha_i, W_i, A_i, B_i, F_i)$ and $R_{i+1} =
  (\alpha_{i+1}, W_{i+1}, A_{i+1}, B_{i+1}, F_{i+1})$, we have
  $A_i {\subseteq} A_{i+1}, B_{i+1} {\subseteq} B_i$ and $F_i
  {\subseteq} F_{i+1}$. A $1$-type $\alpha$ can occur in a multiset
  from $0$ to $C(\phi)$ times. If $\alpha$ appears more than $C(\phi)$ times, its
  multiplicity is $\infty$. Hence the number of modifications of
  each multiset from $A,B,F$ is bounded by $ \left( C(\phi) + 2
  \right) \cdot |\AAA_\phi|$. There are up to $|\AAA_\phi|
  \cdot (C(\phi)+2)^{10m}$ $\phi$-reduced full types with fixed multisets $A,B,F$
  (because it is the number of all possible $1$-types multiplied by the
  number of all possible witness-counting functions). Combination of
  these two observations gives us the desired estimation $\left( C(\phi)+2 \right)^{10m + 1} \cdot |\AAA_\phi|^2 \cdot 3$.
\end{proof}

\subsubsection{Small degree} \label{subsubsection:c2smalldegree}
\begin{definition} \label{def:horizontaltype} For a given vertex $v
  \in T$ and its  $C(\phi)$-full type $\ftp{T}{C(\phi)}{v} =
  \overline{\alpha}$, the \emph{horizontal $C(\phi)$-full type} of $v$ in 
  $\str{T}$ is the quintuple 
\[\hftp{\str{T}}{C(\phi)}{v} =\left(
    \overline{\alpha}(\theta_=), \
    \overline{\alpha}(\theta_{\rightarrow}), \
    \overline{\alpha}(\theta_{\rightrightarrows^+}), \
    \overline{\alpha}(\theta_{\leftarrow}), \
    \overline{\alpha}(\theta_{\leftleftarrows^+})
  \right).
\]
\end{definition}

\begin{lemma} \label{lemma:c2smalldegfinitetree} Let $\varphi$ be a
  formula in normal form of $\CTwoFull$ satisfied in a finite tree
  $\str{T}$. 
	Then there exists a tree model of
  $\varphi$, obtained by removing some subtrees from $\str{T}$, such
  that the degree of every vertex is bounded by $(4 C(\phi)^2 + 8C(\phi)) \cdot |\AAA_\phi|^5$.
\end{lemma}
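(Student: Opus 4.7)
The plan mirrors the strategy that shortens $\succv$-paths in Lemma~\ref{thm:treecut} and Lemma~\ref{lemma:c2smallpaths}, now applied horizontally. Let $u$ be a vertex of excessive degree and let $P = (v_1, \ldots, v_k)$ be the $\succh$-path formed by its children. The goal is to identify two indices $i < j$ at which a horizontal cut is safe: discarding the subtrees rooted at $v_i, v_{i+1}, \ldots, v_{j-1}$ (so that the subtree formerly at $v_j$ now occupies the $i$-th slot among $u$'s children) yields a strictly smaller tree that still models $\varphi$. Iterating this reduction produces a model in which every vertex has degree bounded as stated.

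First I would formulate a \emph{horizontal cutting lemma}, the sibling analogue of Lemma~\ref{thm:treecut}: if two siblings $v_i, v_j$ with $v_i \lessh v_j$ realise equal $\phi$-reduced types, $\rftpv{T}{\varphi}{v_i} = \rftpv{T}{\varphi}{v_j}$, then the tree $\str{T}'$ obtained by replacing the subtree rooted at $v_i$ with the subtree rooted at $v_j$ still models $\varphi$. Its proof proceeds by case analysis on the position of each surviving vertex $w$: (a) $w$ above $u$, (b) $w$ a sibling of $u$ or in free position to $u$, (c) $w$ equal to or descending from a preserved left-sibling $v_\ell$ with $\ell < i$, or (d) $w$ equal to or descending from $v_j$ or from a right-sibling $v_\ell$ with $\ell > j$. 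In each case one shows $\ftp{\str{T}'}{C(\phi)}{w} = \ftp{\str{T}}{C(\phi)}{w}$: the equality of the $A$, $B$, $F$-components and of the witness-counting function of $v_i$ and $v_j$ forces every multiplicity that changes after the cut to be either $0$ to begin with or already saturated at $\infty$, exactly as in the analysis of Lemma~\ref{thm:treecut}. Lemma~\ref{lemma:ficonsistent} then certifies $\str{T}' \models \varphi$.

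Second, I would apply a pigeonhole bound to $|P|$. As one traverses $P$ from left to right, the multiset $\overline{\alpha}(\theta_{\leftleftarrows^+})$ only grows and $\overline{\alpha}(\theta_{\rightrightarrows^+})$ only shrinks, so each admits at most $(C(\phi){+}1)\cdot|\AAA_\phi|$ transitions, bounding its distinct values by $O(C(\phi)\cdot|\AAA_\phi|)$. With these monotone components fixed, the remaining freely varying data in the $\phi$-reduced type — the 1-type $\alpha$ and the singletons $\overline{\alpha}(\theta_\leftarrow)$, $\overline{\alpha}(\theta_\rightarrow)$ — contribute at most $|\AAA_\phi|^3$ further values, while the $B$-component and the witness-counting function can only take polynomially many distinct values among children with identical monotone horizontal data before forcing a cut. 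Careful accounting yields the advertised bound $(4C(\phi)^2 + 8C(\phi))\cdot|\AAA_\phi|^5$; any $P$ longer than this must contain two siblings of equal $\phi$-reduced type, enabling the horizontal cut.

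The main obstacle is the delicate bookkeeping of the horizontal cutting lemma — the analogue of Case~(b) of the proof of Lemma~\ref{thm:treecut}, now harder because a single horizontal cut simultaneously removes several subtrees. This affects the free-position multisets of all ancestors of $u$, of siblings of $u$, and of every vertex in free position to the excised subtrees, and it removes $j-i$ contributions from $u$'s own $B$-component in one stroke. The crux is arguing that equality of the $B$- and $F$-components in the shared $\phi$-reduced type of $v_i$ and $v_j$ makes all these simultaneous changes invisible at the granularity of $C(\phi)$-full types, so that $\varphi$-consistency is preserved globally.
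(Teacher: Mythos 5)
There is a genuine gap, and it is exactly at the point you flag as the crux: the proposed horizontal cutting lemma is false. Equality of the $\phi$-reduced types of the two endpoints $v_i$ and $v_j$ gives no control whatsoever over the subtrees of the \emph{intermediate} children $v_{i+1},\dots,v_{j-1}$ that the cut discards. In the vertical Lemma~\ref{thm:treecut} the argument "either already saturated at $\infty$ or $0$ to begin with" works only because the relevant sets are \emph{nested}: $B(u)$ counts the subtree of $v$ \emph{plus} the middle part, so $B(u)=B(v)$ forces the middle contribution to be absorbed. Sibling subtrees are disjoint, so $B(v_i)=B(v_j)$ merely says the two end subtrees have the same content and says nothing about the middle; likewise $F(v_i)=F(v_j)$ compares two views that each \emph{omit a different} subtree and again leaves the middle unconstrained. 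Concretely: let $u$ have children $v_1,v_2,v_3$ of the same $1$-type, with $v_1,v_3$ leaves and $v_2$ having a single green child $g$, and let $\varphi$ force every node of $u$'s $1$-type to have at least one green non-child descendant. Then $\rftpv{T}{\varphi}{v_1}=\rftpv{T}{\varphi}{v_3}$ (their $A$-, $B$-, $F$-components and witness-counting functions all agree), yet your cut deletes $g$ and destroys $u$'s only witness; equivalently, $\ftp{T}{C(\phi)}{u}$ is not preserved. The same phenomenon scales to high-degree nodes, so the pigeonhole in your second step selects pairs at which the cut is unsound.

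The paper avoids this with an extra ingredient your proposal omits: a \emph{marking} phase. For every $1$-type $\alpha$ it marks up to $C(\phi)$ children of $1$-type $\alpha$ and up to $C(\phi)$ children possessing an $\alpha$-descendant, so that after marking, any unmarked, removable material of each $1$-type is either absent or already witnessed by enough surviving marked children to saturate all the relevant multiplicities (this is what protects $u$'s $\theta_{\downarrow}$/$\theta_{\downarrow\downarrow^+}$ multisets and everybody's free-position multisets). Only then does it cut between two \emph{unmarked} siblings with no marked sibling between them that share the same \emph{horizontal} full type, i.e., the $\theta_=,\theta_{\leftarrow},\theta_{\rightarrow},\theta_{\leftleftarrows^+},\theta_{\rightrightarrows^+}$ components -- precisely the components that are genuinely monotone along the $\succh$-path, so the nesting argument from the vertical case applies to them. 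Your monotonicity observation for $\theta_{\leftleftarrows^+}$ and $\theta_{\rightrightarrows^+}$ is correct and is indeed how the paper counts; what is missing is the marking step that makes the rest of the reduced type irrelevant to the soundness of the cut.
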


\begin{proof}
  First, we will show how to limit the degree of a single
  vertex. After that we can traverse the tree in depth-first manner
  and, by cutting unwanted vertices, obtain the desired model. Let $v$
  be a vertex from $T$. We denote by $\mathit{Children}(v)$ the set
  $\lbrace u_1, u_2, \ldots, u_N \rbrace$ of all children of $v$
  (ordered by $\rightarrow_+$). For every $1$-type $\alpha$, we are
  going to mark some elements of $\mathit{Children}(v)$ and, after
  that, limit number of vertices between two adjacent marked ones.

  Let $U_{\alpha} = \{ u_i \mid \tp{{T}}{u_i} = \alpha
  \}$ be the set of children of $v$ with the $1$-type $\alpha$ and let
  $U_{\alpha}^{\downarrow} = \{ u_i \mid \exists w {\in} T \left( u_i
    {\downarrow} w \vee u_i {\downarrow \downarrow^+} w \right) \wedge
  \tp{{T}}{w} = \alpha \}$ be the set of children of $v$
  with descendants of $1$-type $\alpha$.  For every $1$-type $\alpha$,
  we mark $\min \left( C(\phi), |U_{\alpha}| \right)$ vertices from
  $U_{\alpha}$ and $\min ( C(\phi), |U_{\alpha}^{\downarrow}| )$ vertices
  from $U_{\alpha}^{\downarrow}$ (it is important to mark $u_1$ and
  $u_N$ during this process). Marked vertices are the required witnesses for vertex
  $v$. The number of marked vertices ensure us that during the cutting
  we don't loose also any of free witnesses for any other vertex in the tree
  $\str{T}$. It's easy to see that during this process we marked
  at most $2C(\phi) \cdot |\AAA_\phi|$ vertices.

  Now the reasoning is similar to that of Lemma
  $\ref{thm:treecut}$. Let $u_i, u_j$ (where $i < j$) be two unmarked
  vertices, such that their horizontal $C(\varphi)$-full types are the same
  ($\hftp{\str{T}}{C(\phi)}{u_i} = \hftp{\str{T}}{C(\phi)}{u_j}$) and there
  are no marked vertices between them. Then the tree obtained by
  removing all vertices between $u_i$ and $u_j$, including $u_i$ and
  excluding $u_j$, together with subtrees rooted at them, is also a
  model of $\varphi$.  To prove it, first observe that cutting the
  vertices between $u_i$ and $u_j$ does not change any of vertices
  above and below $u_i$ and $u_j$. The marked vertices guarantee that
  none of vertices in the whole tree lost its free witnesses. Equality
  of the horizontal $C(\varphi)$-full types of $u_i$ and $u_j$ ensures that the
  numbers of right and left witnesses for vertices $u_i$ and $u_j$ are
  correct. Therefore the combined $C(\phi)$-full type of $u_j$, realized in the
  tree after the surgery, is $\varphi$-consistent. By Lemma
  \ref{lemma:ficonsistent}, the obtained tree is a model for the
  formula $\varphi$.

Continuing this process we can remove all vertices between the marked
pairs with the same horizontal $C(\varphi)$-full types. Observe that $\tlessh$ and $\tgreath$ components of the horizontal $C(\varphi)$-full types behave in the monotonic way. For  fixed $\tprech, \tsucch, \teq$ components of a given horizontal $C(\varphi)$-full type, the number of  its possible modifications on the path is bounded by $2 \cdot (C(\phi)+2) \cdot |\AAA_\phi|$.  This guarantees that between any adjacent marked vertices, we have at most $|\AAA_\phi|^3 \cdot 2 (C(\varphi) + 2) \cdot |\AAA_\phi|$  
  vertices. Using the
fact that we marked at most $2C(\phi) \cdot |\AAA_\phi|$
vertices and we know the upper bound on lengths of paths between
adjacent marked vertices, we can reduce the number of children of $v$
to $(4 C(\phi)^2 + 8C(\phi)) \cdot |\AAA_\phi|^5$.

By repeating this procedure as long as there are vertices of high
degree we obtain a desired model of $\varphi$.
\end{proof}

\subsection{Algorithm}

In this section we design an algorithm checking if a given formula
$\varphi \in \CTwoFull$ has a finite tree model. First, by Lemma
\ref{lemma:normalform}, we can assume that $\varphi$ is in normal
form. Second, by Theorem \ref{thm:c2smallmodeltheorem}, we can restrict attention
to models with exponentially bounded vertex degree and
$\downarrow$-path length.

We will present an alternating algorithm working in exponential space.
The idea of the algorithm is quite simple. For each vertex $v$ we will
guess its $C(\phi)$-full type and check if it is $\varphi$-consistent. If it is,
we guess the $v$'s children and their full types. After that, we check
if their $C(\phi)$-full types are locally consistent (see the procedure below),
which guarantees that we guessed correctly. The algorithm starts with
$v = $ root and works recursively with its children. The procedure
presented here is a modification of the one from
\cite{WchKieroMazowieckiFO2Trees}.

\floatname{algorithm}{Procedure}
\begin{algorithm}[H]
  \caption{Checking if given $C(\varphi)$-full types are
    locally-consistent}\label{algo:localycomp}
  \begin{algorithmic}[1]
    \REQUIRE $C(\varphi)$-Full types $\overline{\alpha}$, $\overline{\alpha_1}$, $\ldots$, $\overline{\alpha_k}$   \\
    \STATE \textbf{Return} \textbf{True} if all of the statements below are true. \textbf{Return} \textbf{False} otherwise.
	\STATE $\overline{\alpha_i}(\theta_{\leftarrow}) = \overline{\alpha_{i-1}}(\theta_{=} )$ for $i > 1$ and $\overline{\alpha_1}(\theta_{\leftarrow}) = \emptyset$
	\STATE $\overline{\alpha_i}(\theta_{\leftleftarrows^+}) = \overline{\alpha_{i-1}}(\theta_{\leftarrow} ) \cup \overline{\alpha_{i-1}}(\theta_{\leftleftarrows^+} )$ for $i > 1$ and $\overline{\alpha_1}(\theta_{\leftleftarrows^+}) = \emptyset$
	\STATE $\overline{\alpha_i}(\theta_{\rightarrow}) = \overline{\alpha_{i+1}}(\theta_{=} )$ for $i < k$ and $\overline{\alpha_k}(\theta_{\rightarrow}) = \emptyset$
	\STATE $\overline{\alpha_i}(\theta_{\rightrightarrows^+}) = \overline{\alpha_{i+1}}(\theta_{\rightarrow} ) \cup \overline{\alpha_{i+1}}(\theta_{\rightrightarrows^+} )$ for $i < k$ and $\overline{\alpha_k}(\theta_{\rightrightarrows^+}) = \emptyset$
	\STATE $\overline{\alpha}(\theta_{\downarrow}) = \bigcup_{j=1}^{k} \overline{\alpha_j}(\theta_=)$
	\STATE $\overline{\alpha}(\theta_{\downarrow \downarrow^+}) = \bigcup_{i=1}^{k} \left( \overline{\alpha_i}(\theta_{\downarrow}) \cup \overline{\alpha_i}(\theta_{\downarrow \downarrow^+ }) \right)$
	\STATE for $1 \leq i \leq k:$ $\overline{\alpha_i}(\theta_{\uparrow}) = \overline{\alpha}(\theta_=)$
	\STATE for $1 \leq i \leq k: $\\ $\overline{\alpha_i}(\theta_{\not\sim}) = \overline{\alpha}(\theta_{\not\sim}) \cup \overline{\alpha}(\theta_{\leftarrow}) \cup \overline{\alpha}(\theta_{\rightarrow}) \cup \overline{\alpha}(\theta_{\leftleftarrows^+}) \cup \overline{\alpha}(\theta_{\rightrightarrows^+}) \cup \bigcup_{j \neq i} \left( \overline{\alpha_j}(\theta_{\downarrow}) \cup \overline{\alpha_j}(\theta_{\downarrow \downarrow^+}) \right)$
  \end{algorithmic}
\end{algorithm}

\begin{algorithm}
  \caption{Satisfiability test for $\CTwoFull$}\label{algo:sat_test_c2}
  \begin{algorithmic}[1]
    \REQUIRE Formula $\varphi \in \CTwoFull$ in normal form.\\
    \STATE Let MaxDepth := $3 \cdot \left( C(\phi)+2 \right)^{10m + 1} \cdot |\AAA_\phi|^2$
    \STATE Let MaxDeg := $(4 C(\phi)^2 + 8C(\phi)) \cdot |\AAA_\phi|^5$
    \STATE Lvl := $0$.
    \STATE \textbf{guess} a $C(\varphi)$-full type $\overline{\alpha}$  s.t. $\overline{\alpha}(\theta)=\emptyset$ for $\theta \in \lbrace \theta_{\uparrow},\theta_{\uparrow \uparrow^+},\theta_{\rightarrow},\theta_{\leftarrow},\theta_{\rightrightarrows^+}, \theta_{\leftleftarrows^+},\theta_{\not\sim} \rbrace$.
    \WHILE{Lvl $<$ MaxDepth }
    	\STATE \textbf{if} $\overline{\alpha}$ is not $\varphi$-consistent \textbf{then} \textbf{reject}
    	\STATE \textbf{if} $\overline{\alpha}(\theta_{\downarrow}) = \overline{\alpha}(\theta_{\downarrow \downarrow^+}) = \emptyset$ \textbf{then} \textbf{accept}
    	\STATE \textbf{guess} an integer $1 \leq k \leq$ MaxDeg
    	\STATE \textbf{guess} $C(\phi)$-full types $\overline{\alpha_1}$, $\overline{\alpha_2}$, $\ldots$, $\overline{\alpha_k}$
    	\STATE \textbf{if not} locally-consistent$\left(\overline{\alpha}, \overline{\alpha_1}, \overline{\alpha_2}, \ldots, \overline{\alpha_k} \right)$ \textbf{then reject}
    	\STATE Lvl $:=$ Lvl $+$ $1$
    	\STATE \textbf{universally choose} $1 \leq i \leq k$; let $\overline{\alpha} = \overline{\alpha_i}$
    \ENDWHILE
    \STATE \textbf{reject}
  \end{algorithmic}
\end{algorithm}

\begin{lemma}
  Procedure \ref{algo:sat_test_c2} accepts its input $\varphi$ iff
  $\varphi$ is satisfiable.
\end{lemma}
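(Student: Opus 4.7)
The plan is to prove the two directions separately, both relying on Lemma \ref{lemma:ficonsistent} which characterises satisfiability in terms of $\varphi$-consistency of realised $C(\phi)$-full types.

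For the ``if'' direction, assume $\varphi$ is satisfiable. By Theorem \ref{thm:c2smallmodeltheorem} there is a model $\str{T} \models \varphi$ whose depth and vertex-degree meet the bounds $\mathrm{MaxDepth}$ and $\mathrm{MaxDeg}$ used in Procedure \ref{algo:sat_test_c2}. I would exhibit an accepting run of the algorithm by letting every existential guess mirror $\str{T}$: whenever the algorithm is currently inspecting a vertex $v \in T$, the guessed $\overline{\alpha}$ is taken to be $\ftp{T}{C(\phi)}{v}$, and the guessed sequence $\overline{\alpha_1},\dots,\overline{\alpha_k}$ consists of the realised full types of the actual children of $v$ in $\str{T}$. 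The root's full type satisfies the emptiness requirement of line~4 because a root has no ancestor, sibling or element in free position. The test of line~6 succeeds by Lemma \ref{lemma:ficonsistent}. Local consistency of Procedure \ref{algo:localycomp} holds tautologically because all its equations express identities between multisets assembled from the same real neighbourhood in $\str{T}$. Finally, the loop terminates with acceptance via line~7 whenever the traversed vertex is a leaf of $\str{T}$, which happens after at most $\mathrm{MaxDepth}$ iterations on every universal branch.

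For the ``only if'' direction, assume the algorithm accepts $\varphi$. An accepting computation of an alternating machine is a tree in which universal branching points retain all successors, whereas existential branchings keep one. Since the only universal branching in Procedure \ref{algo:sat_test_c2} is the child selection at the bottom of the while-loop, the accepting computation directly encodes a finite tree $\str{T}$: its vertices are the successive values of $v$ along the run, and the frame is read off from the guessed children at each configuration. I would assign to each $v \in T$ the 1-type $\alpha_v$ coming from $\overline{\alpha}_v(\theta_=)$, and each 2-type between related vertices is extracted from the locally consistent guesses (e.g., the $\theta_\downarrow$ components at the parent together with the $\theta_\uparrow$ components at the children fix types between parents and children; horizontal components fix sibling types; universal conjunct $\chi$ together with $\varphi$-consistency pins down free and descendant/ancestor pair types). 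To conclude I would show by induction on $\str{T}$ that $\ftp{T}{C(\phi)}{v} = \overline{\alpha}_v$ for every $v$: lines 6--7 force the downward components of $\overline{\alpha}_v$ to equal the union of the children's $\theta_=$, $\theta_\downarrow$ and $\theta_{\downarrow\downarrow^+}$ components; lines 2--5 guarantee that the horizontal components at siblings propagate consistently; lines 8--9 fix the $\theta_\uparrow$ and $\theta_{\not\sim}$ components from the parent context. Once this equality is established, every realised $C(\phi)$-full type coincides with a $\varphi$-consistent $\overline{\alpha}_v$ (by the line~6 test), so Lemma \ref{lemma:ficonsistent} yields $\str{T} \models \varphi$.

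The hard part will be the inductive verification that the guessed and realised $C(\phi)$-full types coincide. Because each multiset is truncated at $C(\phi)$ via the function $\mathit{cut}_{C(\phi)}$, the argument must combine a bottom-up induction (for the $\theta_\downarrow$, $\theta_{\downarrow\downarrow^+}$ components, using the behaviour of capped multiset union) with top-down propagation (for $\theta_\uparrow$, $\theta_{\uparrow\uparrow^+}$ and $\theta_{\not\sim}$, starting from the root's empty components). Particular care is required for $\theta_{\uparrow\uparrow^+}$, which is determined from the parent's $\theta_=$ and $\theta_{\uparrow\uparrow^+}$ components via an equation of the same shape as the one for $\theta_{\leftleftarrows^+}$ in line~3; and for $\theta_{\not\sim}$, where one must check that the union in line~9 indeed accounts for all siblings, cousins and descendants of them under the cap $C(\phi)$, so no genuine witness or forbidden type is missed on either side.
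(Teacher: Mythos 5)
Your proposal follows essentially the same route as the paper's (much terser) proof: soundness via Theorem~\ref{thm:c2smallmodeltheorem} plus guessing exactly the full types realized in the small model, and completeness by reconstructing a tree from the accepting run, arguing that the realized $C(\phi)$-full types coincide with the guessed ones via local consistency, and then invoking Lemma~\ref{lemma:ficonsistent}. You give considerably more detail than the paper's two-paragraph sketch, and your remark that an explicit propagation equation for $\theta_{\uparrow\uparrow^+}$ is needed is well taken --- though it should be built from the parent's $\theta_{\uparrow}$ and $\theta_{\uparrow\uparrow^+}$ components, i.e.\ $\overline{\alpha_i}(\theta_{\uparrow\uparrow^+})=\overline{\alpha}(\theta_{\uparrow})\cup\overline{\alpha}(\theta_{\uparrow\uparrow^+})$, not from $\theta_=$ (the parent itself is already accounted for by $\overline{\alpha_i}(\theta_{\uparrow})$).
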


\begin{proof}
  Assume  $\varphi$ is satisfiable. Then there
  exists a small tree model $\str{T}$ as guaranteed by
  Theorem \ref{thm:c2smallmodeltheorem}. We can run the algorithm and
  guess exactly the same $C(\phi)$-full types as in $\str{T}$. The guessed
  $C(\varphi)$-full types are locally-consistent and $\varphi$-consistent, so
  procedure \ref{algo:sat_test_c2} accepts.

  Assume that Procedure \ref{algo:sat_test_c2} accepts its input
  $\varphi$. Then we can reconstruct the tree $\str{T}$ from the
  received $C(\phi)$-full types. The guessed $C(\varphi)$-full types are
  $\varphi$-consistent, which guarantees that we have the right number
  of witnesses to satisfy the formula. Moreover, the function
  locally-consistent ensures that the $C(\phi)$-full types realized in
  $\str{T}$ are indeed as we guessed. By Lemma
  \ref{lemma:ficonsistent}, $\str{T}$ is a tree model for
  $\varphi$ and thus $\varphi$ is satisfiable.

\end{proof}

\begin{theorem}
  The satisfiability problem for $\CTwoFull$ over finite trees is
  \ExpSpace-complete.
\end{theorem}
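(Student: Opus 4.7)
The plan is to assemble the theorem from the machinery already developed in this section. For the \ExpSpace\ lower bound I would simply observe that $\FOt[\succv,\lessv,\succh,\lessh]$ embeds syntactically into $\CTwoFull$, and its finite satisfiability problem is already \ExpSpace-hard by the result of~\cite{BBCKLMW-tocl} invoked at the start of Section~\ref{sec:f2bin}; hardness therefore transfers for free.

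For the upper bound I would lean on Procedure~\ref{algo:sat_test_c2}, which the preceding lemma has shown to accept exactly the satisfiable inputs. What remains is a complexity analysis: I would argue that Procedure~\ref{algo:sat_test_c2} runs in alternating exponential time, so the \ExpSpace\ bound follows from the standard identity $\AExpTime = \ExpSpace$. Concretely, both bounds MaxDepth and MaxDeg supplied by Theorem~\ref{thm:c2smallmodeltheorem} are singly exponential in $|\varphi|$ under binary encoding of the thresholds $C_i$ (so that $\log C(\phi)$ is polynomial). A single $C(\phi)$-full type is a map from the ten-element set $\Theta$ to cut-off multisets over $\AAA_\phi$, hence admits a bit-representation of size polynomial in $|\AAA_\phi|\cdot \log C(\phi)$ -- still singly exponential in $|\varphi|$. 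Each pass of the outer while-loop guesses one parent full type together with up to MaxDeg children full types, checks $\varphi$-consistency (a constant number of arithmetic comparisons truncated at $C(\phi)$, plus quantifier-free implication tests on the $\chi_i$) and invokes the locally-consistent procedure (finitely many multiset equalities); all of this can be carried out in deterministic time exponential in $|\varphi|$. A universal branching step then selects one child and the loop runs for at most MaxDepth iterations, so each computation branch has exponential length.

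The subtlety I would most want to pin down is the handling of binary-encoded thresholds: because $C_i$ is written in binary, the value $C(\phi)$ itself can be exponential in $|\varphi|$, so it is essential that both the representation of each full type and the cost of each arithmetic comparison scale polynomially in $\log C(\phi)$ rather than in $C(\phi)$. This is exactly what the truncation $\mathit{cut}_{C(\phi)}$ secures, since every count is stored as an element of $\{0,1,\ldots,C(\phi),\infty\}$ and thus in $O(\log C(\phi))$ bits. Once this bookkeeping is verified, Procedure~\ref{algo:sat_test_c2} is an \AExpTime\ procedure, and converting \AExpTime\ to \ExpSpace\ together with the matching lower bound completes the proof.
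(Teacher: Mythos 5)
Your proposal is correct and follows essentially the same route as the paper: the lower bound is inherited from $\FOt[\succv,\lessv,\succh,\lessh]$ via \cite{BBCKLMW-tocl}, and the upper bound comes from observing that Procedure~\ref{algo:sat_test_c2} (already shown correct) runs in alternating exponential time thanks to the exponential bounds of Theorem~\ref{thm:c2smallmodeltheorem}, combined with $\AExpTime=\ExpSpace$. Your additional care about the binary encoding of the thresholds and the size of full-type representations is a welcome elaboration of details the paper leaves implicit, but it does not change the argument.
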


\begin{proof}
 Our procedure works in alternating exponential time, since maximum degree
  and path length are bounded exponentially in $| \varphi |$. The \ExpSpace{}-upper
	bound follows from the well know fact that
  \AExpTime=\ExpSpace{}. The \ExpSpace-lower bound comes from
  \cite{BBCKLMW-tocl}.
\end{proof}


\section{Expressive power}

A natural question is whether adding counting quantifiers increases
the expressive power of two-variable logic over trees.  We answer this
question concentrating on the classical scenario assuming that
signatures contain no common binary symbols.  Under this scenario
\FOt$[\succv, \lessv, \succh, \lessh]$ is known to be expressively
equivalent to the navigational core of XPath
\cite{MarxDeRijke04}. Here we show that \Ct{}$[\succv, \lessv, \succh,
\lessh]$ shares the same expressivity.

Let us note, however, that it is the presence of the sibling relations
which makes \FOt{} and \Ct{} equivalent. Indeed, over unordered trees
\FOt{} cannot count:

\begin{theorem}
\FOt$[\succv, \lessv]$ is less expressive than \Ct{}$[\succv, \lessv]$.
\end{theorem}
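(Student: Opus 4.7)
The plan is to produce a \Ct$[\succv, \lessv]$ sentence together with two finite unordered trees $\str{T}_1$, $\str{T}_2$ over the empty unary signature that are separated by that sentence but agree on every \FOt$[\succv, \lessv]$ sentence. I would take $\str{T}_1$ to be the tree with one root and two leaf children, and $\str{T}_2$ the tree with one root and three leaf children, and set
\[
\psi \;\equiv\; \exists x\, \exists^{\geq 3} y\, (x \succv y).
\]
Plainly $\str{T}_2 \models \psi$ while $\str{T}_1 \not\models \psi$, so once the two trees are shown to be \FOt$[\succv, \lessv]$-equivalent, the statement of the theorem follows.

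For the equivalence I would invoke the standard Ehrenfeucht--Fra\"iss\'e characterisation of two-variable logic: it suffices to exhibit a back-and-forth system $Z$ of partial isomorphisms of size at most~$2$ between $\str{T}_1$ and $\str{T}_2$. Define $Z$ to consist of the pair $(r_1, r_2)$ of the two roots together with all pairs $(c, c')$ such that $c$ is a leaf of $\str{T}_1$ and $c'$ is a leaf of $\str{T}_2$. Verifying the forth and back conditions reduces to a short case analysis on the atomic $2$-types over the restricted signature $\{=, \succv, \lessv\}$: every root--leaf pair realises the same ``parent/strict ancestor'' $2$-type, and every pair of distinct leaves realises the same ``distinct and mutually incomparable'' $2$-type. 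Since both trees have at least two leaves, the Duplicator can always answer a Spoiler pebble move with a fresh matching element of the same kind.

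The thing to guard against -- and the main obstacle in the argument -- is that some global \FOt{} invariant (such as a bound on the total number of elements, or on the number of incomparable pairs) might accidentally distinguish $\str{T}_1$ from $\str{T}_2$. This is ruled out by the well-known limitation that \FOt{} with only two variables cannot enforce the existence of three pairwise distinct elements, nor count the number of children of a fixed node: either would require a third variable. Because each of $\str{T}_1$, $\str{T}_2$ contains a root and at least two mutually incomparable leaves, every \FOt$[\succv, \lessv]$ sentence of the shape ``$\exists x\, \varphi$'' or ``$\exists x \exists y\, (x \neq y \wedge \varphi)$'' is satisfied in both or in neither, which is precisely the content of the back-and-forth system $Z$. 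Hence no \FOt$[\succv, \lessv]$ formula is equivalent to $\psi$, and the strict inclusion is established.
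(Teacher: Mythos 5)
Your proposal is correct and follows essentially the same route as the paper: the paper also compares the trees with a root and two versus three children, separates them with $\exists x\, \exists^{\ge 3} y\, (x \lessv y)$, and argues $\FOt[\succv,\lessv]$-indistinguishability via Duplicator's winning strategy in the two-pebble game. The only cosmetic difference is that you spell out the back-and-forth system and the relevant atomic $2$-types explicitly (and use $\succv$ instead of $\lessv$ in the separating sentence), whereas the paper leaves the game argument as an observation.
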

\begin{proof}
Let us assume that the signature contains no unary predicates and for $i \in \N$ let $\str{T}_i$ denote the tree consisting just of a root and its $i$ children. 
Obviously $\str{T}_3 \models \exists x \exists^{\ge 3} y \; x \lessv y$ while $\str{T}_2 \not\models \exists x \exists^{\ge 3} y \; x \lessv y$. 
On the other hand, $\str{T}_2$ and $\str{T}_3$ are indistinguishable in \FOt$[\succv, \lessv]$. It can be seen by observing that Duplicator
has a simple winning strategy in the standard two-pebble game of any length
played on $\str{T}_2$ and $\str{T}_3$.
\end{proof}

Now we turn to the advertised equivalence of \FOt{} and \Ct{} in the case of full navigational signature.

\begin{theorem} \label{t:equivalence}
\FOt$[\succv, \lessv, \succh, \lessh]$ and \Ct{}$[\succv, \lessv, \succh, \lessh]$ are expressively equivalent.
\end{theorem}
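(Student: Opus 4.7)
The inclusion $\FOt \subseteq \Ct$ is immediate, so we must translate every $\CTwoFull$-formula into an equivalent $\FOt[\succv,\lessv,\succh,\lessh]$-formula. After converting the input to the normal form of Definition~\ref{def:c2normalform}, it suffices to re-express a single counting quantifier $\exists^{\bowtie k} y\,\chi(x,y)$ in $\FOt$, where $\chi$ is quantifier-free. Since $\chi$ uses only unary predicates and navigational atoms, it is equivalent to a finite disjunction indexed by (i) the relative position $\theta\in\Theta$ of $y$ with respect to $x$ and (ii) the $1$-type of~$y$. Thus it is enough, for every $\theta$, every $1$-type $\alpha$ and every $k$, to build an $\FOt$-formula $\Psi_{\theta,\alpha,k}(x)$ asserting that $x$ has at least $k$ elements of $1$-type~$\alpha$ in position~$\theta$; a finite disjunction over the distributions of $k$ across $\Theta$ then yields the desired translation.

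For the four positions admitting at most one possible witness ($\theta_=,\theta_{\uparrow},\theta_{\leftarrow},\theta_{\rightarrow}$), the claim is trivial. For $\theta_{\downarrow}$, $\theta_{\uparrow\uparrow^{+}}$, $\theta_{\leftleftarrows^{+}}$ and $\theta_{\rightrightarrows^{+}}$, the candidate witnesses form a chain under $\lessh$ (respectively the ancestor order), and we count them by nesting $k$ existentials with variable reuse, in the classical $\FOt$-over-linear-order style, e.g.\ $\exists y\bigl(x\succv y\wedge\alpha(y)\wedge\exists x\bigl(y\lessh x\wedge\alpha(x)\wedge\ldots\bigr)\bigr)$, iterated to depth~$k$.

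The genuine difficulty lies in $\theta_{\downarrow\downarrow^{+}}$ (proper descendants beyond children) and $\theta_{\not\sim}$ (free position), where the candidate witnesses are not linearly ordered. For $\theta_{\downarrow\downarrow^{+}}$ we build, by strong induction on~$k$, an $\FOt$-formula $T^{\alpha}_{k}(x)$ meaning ``the subtree rooted at $x$ contains at least $k$ elements of $1$-type~$\alpha$''; the required $\Psi_{\theta_{\downarrow\downarrow^{+}},\alpha,k}(x)$ then follows by a small adjustment for $\alpha(x)$ and for $\alpha$-labelled immediate children. The base cases $T^{\alpha}_{0}\equiv\true$ and $T^{\alpha}_{1}(x)\equiv\alpha(x)\vee\exists y\,(x\lessv y\wedge\alpha(y))$ are immediate. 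For $k\ge 2$ set $T^{\alpha}_{k}(x)\equiv\exists y\,\bigl((x\lessv y\vee x=y)\wedge\Phi_{k}(y)\bigr)$, where the ``concentration-point formula'' $\Phi_{k}(y)$ is the disjunction of two cases: (A)~$\alpha(y)$ together with ``$y$'s children's subtrees jointly contain $\ge k-1$ $\alpha$-elements'', and (B)~$\neg\alpha(y)$ together with ``at least two of $y$'s children's subtrees jointly contain $\ge k$ $\alpha$-elements''. ``Jointly contain $\ge m$'' is itself a finite disjunction over distributions $(j_{1},\ldots,j_{r})$ with $\sum_{\ell}j_{\ell}\ge m$ and $j_{\ell}\ge 1$, where the $r$ children are enumerated in $\lessh$-order via the chaining trick above. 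Crucially, case~(B) forces $r\ge 2$ and hence $j_{\ell}\le k-1<k$ for every $\ell$; in case~(A), either $r=1$ with $j_{1}\le k-1$, or $r\ge 2$ with $j_{\ell}\le k-2$, again strictly less than~$k$. In every sub-call we invoke $T^{\alpha}_{j}$ only for $j<k$, which the induction hypothesis has already produced. Soundness is clear; completeness follows by descending from~$x$ into the unique $\alpha$-containing child whenever $\neg\alpha(y)$ and only one child contributes, a process that terminates because the tree is finite.

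The free-position case $\theta_{\not\sim}$ is structurally analogous but technically more delicate: every element free from $x$ lies in the subtree of an off-path child of some proper ancestor of~$x$, with the sole exception that siblings of~$x$ themselves (arising at the ancestor $z=\text{parent}(x)$) are not free. Counting such witnesses decomposes by the least common ancestor of $x$ and each witness, reducing to chaining along the ancestor axis of $x$ (using $\lessv$ with variable reuse) combined, at each ancestor, with chaining over its off-path children (using $\lessh$), where subtree contributions are measured by the $T^{\alpha}_{j}$-formulas already built. The principal obstacle throughout is precisely the non-linearity of the $\theta_{\downarrow\downarrow^{+}}$ and $\theta_{\not\sim}$ axes: witnesses there do not sit on a single linearly-ordered chain, so the elementary chaining trick used for the other positions no longer works. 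The concentration-point induction (for descendants) and its least-common-ancestor variant (for free position) overcome this by reducing each counting statement to a finite disjunction of assertions whose parameters are strictly smaller than~$k$, ensuring that the recursion on $k$ well-foundedly terminates at $T^{\alpha}_{0}\equiv\true$.
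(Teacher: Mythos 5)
Your core construction is sound and is essentially the paper's: the paper's Lemma~\ref{l:definitions} builds formulas $\Psi\langle c, pos, \psi\rangle$ by exactly the kind of recursion on the count $c$ that you describe, and its treatment of descendants likewise picks a ``concentration point'' $u$ (the deepest node whose subtree contains all $k$ chosen witnesses) and then splits the count between the subtree of the leftmost contributing child and the subtrees of its following siblings. Your variant --- a case split on whether the concentration point itself satisfies the property, plus a finite disjunction over distributions among the contributing children --- is an equivalent packaging, and your least-common-ancestor decomposition of the free position matches the paper's splitting of $\theta_{\not\sim}$ into descendants of preceding siblings, descendants of following siblings, and siblings of proper ancestors together with their descendants.

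The genuine gap is your very first step. Lemma~\ref{lemma:normalform} is only satisfiability-preserving: it introduces fresh unary predicates and guarantees that $\varphi$ and $\varphi'$ are satisfiable over the same tree frames, not that they are logically equivalent. For an expressivity statement this is fatal as written: an \FOt{} formula equivalent to $\varphi'$ lives over the extended signature, and eliminating the fresh predicates to recover a formula equivalent to $\varphi$ over the original signature would require existential second-order quantification, which \FOt{} does not have. That same step is also what lets you assume the matrix $\chi$ of a counting quantifier is quantifier-free and hence reduce to counting atomic $1$-types; without it, the innermost counting subformula $\exists^{\ge k} y\,\psi(x,y)$ of a general \Ct{} formula has a matrix containing quantified \FOt{} subformulas. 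The repair is routine and is what the paper does: process counting quantifiers from the innermost outwards on the original formula, split each matrix (after fixing the order formula $\theta$ relating $x$ and $y$) into an $x$-part and a unary $y$-part $\psi''(y)$, and parameterize the counting formulas by an arbitrary \FOt{} property $\psi''(y)$ rather than by a $1$-type --- your $T$-formulas and chaining arguments go through verbatim under this generalization. But as stated, your argument establishes only equi-satisfiability with a formula over an extended signature, not expressive equivalence.
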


We give a detailed proof. First we show that one can say in \FOt$[\succv, \lessv, \succh, \lessh]$ that for a given
node $v$ there are at least $k$ nodes in some specific \emph{position} to $v$ that have a fixed unary property $\psi$ expressible in \FOt{}. 
Let us define the set $\cal P$ of
positions we are interested in. Some of the positions correspond directly to the order formulas from $\Theta$, but for technical reasons we need to introduce also some other.
We represent the positions with help of graphical symbols. Intuitively, the crossed circle corresponds to $v$, the filled circles correspond to nodes among
which we look for those satisfying $\psi$ and the empty circles are auxiliary.
We distinguish sixteen positions:

${\cal P} = \lbrace \posone, \postwo, \posthree, \posfour, \posfourteen, \posfifteen, 
\posfive, \possix, \posthirteen, \posseven, \poseight, \posnine\\$
$\hspace*{250pt} \posten, \poseleven, \possixteen, \postwelve \rbrace.$

Let us formalize the given intuitive meaning of the introduced
symbols.  Let $\str{T}$ be a tree, $v$ its node, $c$ a natural number,
$pos \in {\cal P}$, and $\psi$ any \FOt$[\succv, \lessv, \succh,
\lessh]$ formula with one free variable, We say that $v$
\emph{satisfies} property $\Wprop{c}{pos}{\psi}$ if there are at least
$c$ nodes $w$ such that $\str{T} \models \psi[w]$ and $w$ is in
position $pos$ to $v$, i.e.,
\begin{itemize}\itemsep0pt
\item if $pos=\posone$ then $w$ is a descendant of $v$,
\item if $pos=\postwo$ then $w=v$ or $w$ is a descendant of $v$,
\item if $pos=\posthree$ then $w$ is a following-sibling of $v$ or a descendant of a following-sibling of $v$, 
\item if $pos=\posfour$ then $w$ is a descendant of a following-sibling of $v$,
\item if $pos=\posfourteen$ then $w$ is a preceding-sibling of $v$ or a descendant of a preceding-sibling of $v$,
\item if $pos=\posfifteen$ then $w$ is a descendant of a preceding-sibling of $v$,
\item if $pos=\posfive$ then $w$ is a child of $v$,
\item if $pos=\possix$ then $w$ is a descendant of $v$ but not its child,
\item if $pos=\posthirteen$ then $w$ is an ancestor of  $v$,
\item if $pos=\posseven$ then $w$ is an ancestor of $v$ but not its father,
\item if $pos=\poseight$ then $w$ is a following-sibling of $v$,
\item if $pos=\posnine$ then $w$ is a preceding-sibling of $v$,
\item if $pos=\posten$ then $w$ is a following-sibling of $v$ but not the closest one,
\item if $pos=\poseleven$ then $w$ is a preceding-sibling of $v$ but not the closest one,
\item if $pos=\possixteen$ then $w$ is a sibling of $v$ or a descendant of a sibling of $v$,
\item if $pos=\postwelve$ then $w$ a sibling of an ancestor of $v$ or a descendant of a sibling of an ancestor of $v$.
\end{itemize}

\begin{lemma}\label{l:definitions}
For any $c \in \N$,  any \FOt$[\succv, \lessv, \succh, \lessh]$ formula $\psi$ with one free variable,
and $pos \in {\cal P}$,
there is an \FOt$[\succv, \lessv, \succh, \lessh]$ formula $\Psiprop{c}{pos}{\psi}$ with one free variable, such that for any tree $\str{T}$ and $v \in T$ 
we have $\str{T} \models \Psiprop{c}{pos}{\psi}[v]$ iff $v$ satisfies $\Wprop{c}{pos}{\psi}$.
\end{lemma}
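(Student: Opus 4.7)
The plan is to proceed by induction on $c$. The base cases $c=0$ (where I set $\Psiprop{0}{pos}{\psi} := \true$) and $c=1$ (where each $\Psiprop{1}{pos}{\psi}$ is a direct existential navigation formula in $\FOt$) are immediate.

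For the inductive step $c \ge 2$, I would define all sixteen formulas at level $c$ in a carefully chosen order that guarantees the same-level dependencies are acyclic. The chain-shaped positions $\poseight, \posnine, \posten, \poseleven, \posfive, \possix, \posthirteen, \posseven$ are handled first by the standard ``next in the chain'' reduction, e.g.\
\[
\Psiprop{c}{\poseight}{\psi}(x) := \exists y\bigl(x \succh^+ y \wedge \psi(y) \wedge \Psiprop{c-1}{\poseight}{\psi}(y)\bigr),
\]
with analogous formulas for the other seven linearly ordered cases; each invocation strictly decreases $c$.

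For the tree-shaped positions I would rely on the disjoint decompositions $\posone(v) = \postwo(w) \cupdot \posthree(w)$, where $w$ is the leftmost child of $v$, and $\posthree(y) = \postwo(x) \cupdot \posthree(x)$, where $x$ is the immediate right sibling of $y$, together with their preceding-sibling analogues. The key construction is that of $\Psiprop{c}{\posone}{\psi}(v)$, which I would express as a disjunction capturing the three possible locations of the lowest common ancestor of the $c$ desired $\psi$-witnesses below $v$: (A) the LCA is $v$ itself, in which case the witnesses spread non-trivially across children of $v$ and one existentially picks a child $y$ with a split $c = i + j$, $i, j \ge 1$, certified by $\Psiprop{i}{\postwo}{\psi}(y) \wedge \Psiprop{c-i}{\posthree}{\psi}(y)$; (B) the LCA is a strict $\psi$-descendant $y$ of $v$, so $\Psiprop{c-1}{\posone}{\psi}(y)$ holds; (C) the LCA is a strict non-$\psi$ descendant of $v$ at which a child-based split as in (A) applies. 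In all three branches the recursive calls are at strictly smaller $c$. I would then define $\Psiprop{c}{\postwo}{\psi}(v)$ by a case split on $\psi(v)$, using the just-constructed $\Psiprop{c}{\posone}{\psi}$ and $\Psiprop{c-1}{\posone}{\psi}$, and $\Psiprop{c}{\posthree}{\psi}(y)$ by existentially guessing the leftmost right sibling $x$ of $y$ whose subtree contains at least one $\psi$-witness --- this forces the split $c = i + j$ with $i \ge 1$, so the recursive $\posthree$ call is at strictly smaller $c$. The positions $\posfour, \posfifteen, \posfourteen, \possixteen$ follow by symmetric or sum-based reductions, and $\postwelve$ is handled by a single ascent to the parent using the identity $\postwelve(v) = \possixteen(\mathrm{parent}(v)) \cupdot \postwelve(\mathrm{parent}(v))$, again with the ``force $i \ge 1$'' trick.

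The main obstacle I expect is establishing well-foundedness at each level of $c$: a naive disjunctive expansion of identities like $\posthree(y) = \postwo(x) \cupdot \posthree(x)$ would leave the case $i = 0, j = c$ intact and loop along the sibling chain without ever decreasing $c$. The fix of requiring $i \ge 1$ --- that is, insisting that the existentially guessed child or sibling contribute at least one $\psi$-witness --- breaks this cycle, because such a leftmost contributor always exists when the count $c \ge 1$ is truly attained. Once this is verified, the construction yields a finite $\FOt[\succv, \lessv, \succh, \lessh]$ formula for each triple $(c, pos, \psi)$, and correctness follows from the semantic soundness of each individual decomposition.
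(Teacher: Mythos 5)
Your proposal follows essentially the same route as the paper: induction on $c$ with $\Psiprop{0}{pos}{\psi}$ replaced by $\top$, the chain-shaped positions reduced along the sibling/ancestor chain with $c$ strictly decreasing, and the tree-shaped positions decomposed at the lowest common ancestor of the witnesses with a forced split $i \ge 1$ --- which is precisely the paper's $\bigvee_{i\in[1,k]}$ device for breaking the $\posthree$ and $\postwelve$ recursions. The only quibble is that $\possix$ (descendant-but-not-child) is not chain-shaped and is handled in the paper by the split $\Psiprop{i}{\posone}{\psi}(y)\wedge\Psiprop{k-i}{\posfour}{\psi}(y)$ at the leftmost contributing child rather than by a ``next in the chain'' reduction, but your split machinery covers this case verbatim.
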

\begin{proof}
The proof goes by induction on $c$. The base case $c=1$ is  straightforward:
\begin{itemize}\itemsep0pt
\item
$\Psiprop{1}{\posone}{\psi}(x) \equiv \exists y \; (x \lessv y \wedge \psi(y))$
\item 
$\Psiprop{1}{\postwo}{\psi}(x) \equiv \psi(x) \vee \exists y \; (x \lessv y \wedge \psi(y))$
\item
$\Psiprop{1}{\posthree}{\psi}(x) \equiv \exists y \; (x \lessh y \wedge \psi(y)) \vee \exists y \; (x \lessh y \wedge \exists x (y \lessv x \wedge \psi(x)))$
\item
$\Psiprop{1}{\posfour}{\psi}(x) \equiv \exists y \; (x \lessh y \wedge \exists x (y \lessv x \wedge \psi(x)))$
\item 
$\Psiprop{1}{\posfourteen}{\psi}(x)$ analogously
\item 
$\Psiprop{1}{\posfifteen}{\psi}(x)$ analogously
\item
$\Psiprop{1}{\posfive}{\psi}(x) \equiv \exists y \; (x \succv y \wedge \psi(y))$
\item
$\Psiprop{1}{\possix}{\psi}(x) \equiv \exists y \; (x \lessv y \wedge \neg (x \succv y) \wedge \psi(y))$
\item 
$\Psiprop{1}{\posthirteen}{\psi}(x) \equiv \exists y \; (y \lessv x \wedge \psi(y))$
\item
$\Psiprop{1}{\posseven}{\psi}(x) \equiv \exists y \; (y \lessv x \wedge \neg ( y \succv x) \wedge \psi(y))$
\item
$\Psiprop{1}{\poseight}{\psi}(x) \equiv \exists y \; (x \lessh y \wedge \psi(y))$
\item
$\Psiprop{1}{\posnine}{\psi}(x)$ analogously
\item
$\Psiprop{1}{\posten}{\psi}(x) \equiv \exists y \; (x \lessh y \wedge \neg (x \succv y) \wedge \psi(y))$
\item
$\Psiprop{1}{\poseleven}{\psi}(x)$ analogously 
\item
$\Psiprop{1}{\possixteen}{\psi}(x) \equiv \exists y \; ((x \lessh y \vee (y \lessh x) \wedge (\psi(y) \vee \exists x (y \lessv x \wedge \psi(x)))$
\item
$\Psiprop{1}{\postwelve}{\psi}(x) \equiv \exists y \; (y \lessv x \wedge \exists x ((y \lessh x \vee x \lessh y) \wedge (\psi(x) \vee \exists y (x \lessv y \wedge \psi(y))))) $
\end{itemize}
Assume now that the desired $\Psiprop{c}{pos}{\psi}$ formulas exist for all $1 \le c < k$. We show how
to define $\Psiprop{k}{pos}{\psi}$ using $\Psiprop{c}{pos'}{\psi}$ for $c<k$, or $c=k$ but in this case for $pos'$ defined in one of the earlier items.
If in any definition $\Psiprop{c}{pos}{\psi}$ with $c=0$ appears  it is replaced by $\top$.
\begin{itemize}\itemsep0pt
\item
$\Psiprop{k}{\posone}{\psi}(x) \equiv$\\ 
$\exists y (x \lessv y \wedge ((\psi(y) \wedge \Psiprop{k-1}{\posone}{\psi}(y)) \vee \bigvee_{i \in[1,k-1]} (\Psiprop{i}{\postwo}{\psi}(y) \wedge \Psiprop{k-i}{\posthree}{\psi}(y)) )$
\item 
$\Psiprop{k}{\postwo}{\psi}(x) \equiv$ $\psi(x) \wedge \Psiprop{k-1}{\posone}{\psi}(x) \vee \Psiprop{k}{\posone}{\psi}(x)$ 
\item
$\Psiprop{k}{\posthree}{\psi}(x) \equiv$ $\exists y (x \lessh y \wedge \bigvee_{i \in [1,k]}(\Psiprop{i}{\postwo}{\psi}(y) \wedge \Psiprop{k-i}{\posthree}{\psi}(y)  ))$
\item
$\Psiprop{k}{\posfour}{\psi}(x) \equiv$ $\exists y (x \lessh y \wedge \bigvee_{i \in [1,k]}(\Psiprop{i}{\posone}{\psi}(y) \wedge \Psiprop{k-i}{\posfour}{\psi}(y)  ))$
\item
$\Psiprop{k}{\posfourteen}{\psi}(x)$ analogously
\item
$\Psiprop{k}{\posfifteen}{\psi}(x)$ analogously
\item
$\Psiprop{k}{\posfive}{\psi}(x) \equiv$ $\exists y (x \succv y \wedge \psi(y) \wedge \Psiprop{k-1}{\poseight}{\psi}(y))$
\item
$\Psiprop{k}{\possix}{\psi}(x) \equiv$ $\exists y (x \succv y \wedge \bigvee_{i \in [1,k]}(\Psiprop{i}{\posone}{\psi}(y) \wedge \Psiprop{k-i}{\posfour}{\psi}(y)  ))$
\item
$\Psiprop{k}{\posthirteen}{\psi}(x) \equiv$ $\exists y (y \lessv x \wedge \psi(y) \wedge \Psiprop{k-1}{\posthirteen}{\psi}(y))$
\item
$\Psiprop{k}{\posseven}{\psi}(x) \equiv$ $\exists y (y \lessv x \wedge \neg(y \succv x) \wedge \psi(y) \wedge \Psiprop{k-1}{\posthirteen}{\psi}(y))$
\item
$\Psiprop{k}{\poseight}{\psi}(x) \equiv $ $\exists y (x \lessh y \wedge \psi(y) \wedge \Psiprop{k-1}{\poseight}{\psi}(y))$
\item
$\Psiprop{k}{\posnine}{\psi}(x) \equiv$  analogously
\item
$\Psiprop{k}{\posten}{\psi}(x) \equiv$  $\exists y (x \lessh y \wedge \neg(x \succh y) \wedge \psi(y) \wedge \Psiprop{k-1}{\poseight}{\psi}(y))$
\item
$\Psiprop{k}{\poseleven}{\psi} \equiv $  analogously
\item
$\Psiprop{k}{\possixteen}{\psi} \equiv $ $\bigvee_{i \in [0,k]} (\Psiprop{i}{\posthree}{\psi}(x) \wedge \Psiprop{k-i}{\posfourteen}{\psi}(x))$
\item
$\Psiprop{k}{\postwelve}{\psi} \equiv $ $\exists y (y \lessv x \wedge \bigvee_{i \in [1,k]}( \Psiprop{i}{\possixteen}{\psi}(y) \wedge \Psiprop{k-i}{\postwelve}{\psi}(y))).$
\end{itemize}
Most of the above equivalences are obvious. As an example, let us explain the first one. Assume that $\str{T} \models \Psiprop{k}{\posone}{\psi}[v]$. Choose
$k$ descendants of $v$ satisfying $\psi$. Let $u$ be the maximal element of $\str{T}$ such that all the chosen elements are in the subtree of $u$.
If $u$ is one of the chosen elements then $\str{T} \models v \lessh u \wedge \psi[u] \wedge \Psiprop{k-1}{\posone}{\psi}[u]$. Otherwise take the leftmost
child $w$ of $u$ such that the subtree of $w$ contains at least one of the chosen elements. Note that the subtree of $w$ contains at most 
$k-1$ chosen elements since otherwise it would contradict the maximality of $u$. Thus
$\str{T} \models v \lessv w \wedge \bigvee_{i \in[1,k-1]} (\Psiprop{i}{\postwo}{\psi}[w] \wedge \Psiprop{k-i}{\posthree}{\psi}[w]) )$.
The opposite direction is obvious. 
\end{proof}

We are now ready to show Theorem \ref{t:equivalence}. It follows from the following lemma. 
\begin{lemma}
Let $\phi$ be a \Ct{}$[\succv, \lessv, \succh, \lessh]$ formula with at most one free variable. There exists an \FOt{}$[\succv, \lessv, \succh, \lessh]$
formula $trans(\phi)$ such that for any tree $\str{T}$, and any $v \in T$ we have $\str{T} \models \phi[v] \iff \str{T} \models trans(\phi)[v]$.
\end{lemma}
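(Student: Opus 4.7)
I will prove the following strengthening by structural induction on $\phi$: \emph{for any $\CTwoFull$ formula with at most two free variables there exists an equivalent $\FOt{}[\succv,\lessv,\succh,\lessh]$ formula with the same free variables.} The stated lemma is the one-free-variable instance. Atomic formulas, Boolean connectives and the ordinary first-order quantifier are handled by letting $trans$ commute with the respective construct, so all the work concentrates on counting quantifiers. Since $\exists^{\le k} y\,\psi \equiv \neg \exists^{\ge k+1} y\,\psi$ (and the variable bound by the counting quantifier can be renamed), it suffices to treat the case $\phi(x) = \exists^{\ge k} y\, \psi(x,y)$.

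By the inductive hypothesis applied to $\psi$, fix an $\FOt{}$ formula $\psi'(x,y)$ equivalent to $\psi$. The crucial two-variable observation is that \emph{no subformula of $\psi'$ in which both $x$ and $y$ occur free can be headed by a quantifier}, because any quantifier rebinds one of the two variables and thereby removes it from the free variables. Consequently, at the ``both-free'' level $\psi'$ is a pure Boolean combination of navigational atoms on the pair $(x,y)$, unary $\FOt$ formulas $\alpha(x)$ or $\beta(y)$ (which may themselves carry quantifiers), and $\FOt$ sentences. Exploiting that the ten formulas of $\Theta$ partition the pairs of nodes of any tree, one rewrites this Boolean combination into disjunctive normal form
\[
\psi'(x,y)\ \equiv\ \bigvee_{j=1}^{J}\,\alpha_j(x)\wedge\beta_j(y)\wedge\theta_j(x,y),\qquad \theta_j\in\Theta,
\]
for some finite family of unary $\FOt$ formulas $\alpha_j,\beta_j$ (any sentences being absorbed into one side).

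With this normal form, for any node $v$ the $y$-witnesses for $\psi'(v,y)$ partition uniquely by the position $\theta\in\Theta$ each occupies relative to $v$. Setting $S := \{j : \alpha_j(v)\}$ and $\gamma_\theta^S(y) := \bigvee_{j \in S,\ \theta_j = \theta} \beta_j(y)$, the formula $\phi(v)$ thus holds iff there exists a tuple $(k_\theta)_{\theta\in\Theta}$ of nonnegative integers with $\sum_\theta k_\theta \ge k$ such that, for every $\theta$, at least $k_\theta$ nodes in position $\theta$ from $v$ satisfy $\gamma_\theta^S$. After a finite disjunction over the subsets $S$ (each expressible in $\FOt$ via $\bigwedge_{j\in S}\alpha_j(x)\wedge\bigwedge_{j\notin S}\neg\alpha_j(x)$) and the finitely many tuples $(k_\theta)$, it remains only to write, for each $\theta$ and each value $k_\theta$, the assertion ``at least $k_\theta$ nodes in position $\theta$ from $x$ satisfy $\gamma_\theta^S$''. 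For $\theta\in\{\teq,\tprecv,\tsucch,\tprech\}$ the count is at most $1$, so this is $\exists y(\theta(x,y)\wedge\gamma_\theta^S(y))$ (or a truth constant). For the six remaining positions $\theta\in\{\tsuccv,\tlessv,\tgreatv,\tlessh,\tgreath,\tfree\}$ Lemma~\ref{l:definitions} yields exactly $\Psiprop{k_\theta}{pos(\theta)}{\gamma_\theta^S}(x)$, where $pos(\theta)$ is the matching symbol among $\posfive,\possix,\posseven,\posten,\poseleven,\postwelve$. The principal obstacle is the disjunctive normal form decomposition itself: it rests essentially on the two-variable economy of $\FOt$, and is precisely the step that would fail in richer fragments where a quantifier could survive at the ``both-free'' level.
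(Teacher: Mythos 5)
Your proof is correct and follows essentially the same route as the paper's: reduce to $\exists^{\ge k}$, work from the innermost counting quantifier outward, bring the quantifier-free ``both-free'' level into a disjunctive normal form whose disjuncts separate an $x$-part, a $y$-part and an order formula $\theta\in\Theta$, distribute the count over the ten mutually exclusive positions, and discharge each position via Lemma~\ref{l:definitions}. The only difference is bookkeeping: you guess the set $S$ of disjuncts activated at $x$ and merge their $y$-parts into one disjunction per position, whereas the paper insists on mutually exclusive disjuncts and sums witness counts over them via the function families $\mathcal{F}$ and $\mathcal{G}$ --- both devices serve the same purpose of avoiding double-counting.
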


\begin{proof}
In our translation process we will work with formulas using both counting quantifiers and standard existential quantifiers. 
Due to the equivalence $\exists^{\le c} x \psi \equiv \neg \exists^{\ge c+1} x \psi$ we can assume that 
all counting quantifiers of the form $\exists^{\ge c}$. 
We  take a most deeply nested subformula of $\phi$ of the form 
$\exists^{\ge k} y \psi(x,y)$. Thus $\psi(x,y)$ is a boolean combination of atoms and  \FOt{}$[\succv, \lessv, \succh, \lessh]$ formulas
starting with the existential quantifier.

Let us convert $\psi(x,y)$ into disjunctive normal form, $\psi(x,y) \equiv \psi_1(x,y) \vee \ldots, \vee \psi_l(x,y)$, such that $\psi_i(x,y)$ and $\psi_j(x,y)$
are mutually exclusive for $i \not=j$.
Let ${\cal F}$ be the set of functions $f$ of type $[0,k]^{[1, l]}$ such that $\sum_{i=1}^{l} f(i)=k$. 
Intuitively, such a function specifies how many of $k$ witnesses for $\exists^{\ge k} \psi$ are witnesses for $\psi_i$. 
We can now write 
$\exists^{\ge k} y \psi(x,y)$ equivalently as $\bigvee_{f \in {\cal F}} \bigwedge_{i=1}^l \exists^{\ge f(i)} y \psi_i(x,y)$.
Here and later we assume that if a subformula starting with $\exists^{\ge 0}y$ appears in our process then it is
immediately replaced by $\top$.
Our task reduces now to translating $\exists^{\ge k} y \psi_i(x,y)$ for $\psi_i$ being a conjunction of atoms or $\FOt$ subformulas
starting with $\exists$.

Further, let us replace $\psi_i(x,y)$ by $\bigvee_{\theta \in \Theta} (\theta(x,y) \wedge \psi_i(x,y))$.
Consider the set ${\cal G}$ of functions $g$ of type $[0,k]^{\Theta}$, such that $\sum_{\theta \in \Theta} g(\theta) = k$ and
$g(\theta) \in \{0, 1 \}$ for $\theta \in \{  \tprecv, \tsucch, \tprech, \teq \}$.
Observe that  $\exists^{\ge k} y \psi_i(x,y)$ is equivalent to $\bigvee_{g \in {\cal G}} \bigwedge_{\theta \in \Theta}
\exists^{\ge g(\theta)} y (\theta(x,y) \wedge \psi_i(x,y))$.

It remains to take care of formulas of the form $\exists^{\ge k} y (\theta(x,y) \wedge \psi_i(x,y))$.
Let $\psi'_i(x,y)$ be the result of replacing in $\psi_i(x,y)$ every binary navigational atom not in the scope of $\exists$ by $\top$ if it is implied by $\theta$ and
by $\bot$ in the opposite case. Note that $\theta(x,y) \wedge \psi_i(x,y)$ is equivalent to $\theta(x,y) \wedge \psi'_i(x,y)$. 
Let us split $\psi'(x,y)$ into conjuncts with free variable $x$ and conjuncts with free variable $y$: $\psi'_i(x) = \psi''_i(x) \wedge \psi''_i(y)$. 
We can write $\exists^{\ge k} y (\theta(x,y) \wedge \psi_i(x,y))$ equivalently as $\psi''_i(x) \wedge \exists^{\ge k} y ( \theta(x,y) \wedge \psi''_i(y))$. 
Finally, our translation depends on $\theta$. If $\theta  \in \{  \tprecv, \tsucch, \tprech, \teq \}$ then by the definition of $\cal G$ we have $k=0$ or $k=1$, so the
formula can be respectively replaced by $\top$ or $\exists y (\theta(x,y) \wedge \psi_i''(y))$. All the remaining cases can be treated as follows,
using Lemma \ref{l:definitions}:

\begin{itemize}\itemsep0pt
\item $\exists^{\ge k} y ( x \succv y \wedge \psi''_i(y))$ $\equiv$ 
$\Psiprop{k}{\posfive}{\psi''}(x)$

\item $\exists^{\ge k} y (  x \lessv y \wedge \neg (x \succv y) \wedge \psi''_i(y))$ $\equiv$ 
$\Psiprop{k}{\possix}{\psi''}(x)$

\item $\exists^{\ge k} y ( y \lessv x \wedge \neg (y \succv x) \wedge \psi''_i(y))$ 
$\equiv$ 
$\Psiprop{k}{\posseven}{\psi''}(x)$

\item $\exists^{\ge k} y (  x \lessh y \wedge \neg (x \succh y) \wedge \psi''_i(y))$ 
$\equiv$ 
$\Psiprop{k}{\posten}{\psi''}(x)$
    
\item $\exists^{\ge k} y ( y \lessh x \wedge \neg (y \succh x) \wedge \psi''_i(y))$ 
$\equiv$ 
$\Psiprop{k}{\poseleven}{\psi''}(x)$

\item $\exists^{\ge k} y ( x \fw y \wedge \psi''_i(y)) $
$\equiv$ 
$\hspace{-10pt}\bigvee\limits_{s{+}t{+}u{=}k}\hspace{-10pt}  (\Psiprop{s}{\postwelve}{\psi''}(x) \wedge \Psiprop{t}{\posfour}{\psi''}(x) \wedge \Psiprop{u}{\posfifteen}{\psi''}(x)).$
\end{itemize}
This finishes the process of replacing in $\phi$ a subformula starting with $\exists^{\ge k}$ by
an equivalent \FOt{} subformula. We proceed analogously with the remaining such subformulas, moving from the
deepest to the shallowest ones, and eventually obtain the desired formula $trans(\phi)$ without counting quantifiers.
\end{proof}

\section{Combining the two extensions}
We proved that two extensions of two-variable logic on trees: the
extension $\CTwoFull$ with counting quantifiers, and the extension
\FOtall{} with additional uninterpreted binary relations remain
decidable and retain \ExpSpace-complexity of \FOt$[\succh, \lessh,
\succv, \lessv]$. It is tempting to combine both variants into a
single logic, i.e., to consider \Ctall , the two-variable logic with
counting quantifiers and additional binary relation over
trees. However, it turns out to lead to a very difficult
formalism. Namely, we can reduce to it the long standing open problem
of checking non-emptiness of vector addition tree automata.

\begin{theorem}
  The satisfiability problem for \Ctall{} is at least as hard as
  checking non-emptiness of vector addition tree automata.
\end{theorem}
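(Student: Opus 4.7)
The plan is to reduce non-emptiness of VATA to satisfiability of $\Ctall$ over finite trees. Given a VATA $\mathcal{A}$ with state set $Q$, transition relation $\delta$ and $d$ counters, I will construct in polynomial time a formula $\varphi_\mathcal{A} \in \Ctall$ whose finite tree models correspond exactly to accepting runs of $\mathcal{A}$. A run is recorded in a distinguished subtree marked by a fresh unary predicate $V$ whose nodes carry their automaton state via unary predicates $(Q_q)_{q \in Q}$. Using only the navigation predicates $\succv, \lessv, \succh, \lessh$ and counting quantifiers one already can: pick out the ordered sequence of $V$-children of a $V$-node, simulate a DFA for the (regular) left-hand side of a transition by labelling siblings with DFA states propagated through $\succh$, and impose the initial condition at leaves and the accepting condition at the root.

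The delicate part is the encoding of counter values. For each counter $i$ I introduce a unary predicate $T_i$ marking token elements placed as descendants of $V$-nodes, together with an additional binary predicate $B_i$ that records ownership of tokens by $V$-nodes. The counter value of counter $i$ at a $V$-node $v$ is declared to be the number of tokens $t$ with $B_i(v,t)$. The binary relations let us distinguish, at each $V$-node $v$: (i) tokens propagating upwards from $v$'s children (those $B_i$-linked both to $v$ and to a $V$-descendant of $v$); (ii) tokens newly contributed by the transition fired at $v$ (those $B_i$-linked to $v$ but to no child of $v$); and (iii) tokens consumed by the transition entering $v$ (those $B_i$-linked to a child of $v$ but not to $v$). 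Counting quantifiers applied to $B_i$ on either side then express the identity \emph{counter at $v$ equals the sum of counters at children plus the transition effect}, as well as the non-negativity of the resulting counter and the target condition at the root.

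The main obstacle is that the additive relation between parent and child counters is inherently three-body (parent, child, token) and cannot be expressed by navigation or counting alone within $\CTwoFull$; this is exactly the reason why $\CTwoFull$-satisfiability is in \ExpSpace{} while the combined logic is believed to be far harder. The role of the extra binary symbols $B_i$ is precisely to externalise this three-way matching as a collection of two-body assertions about tokens and $V$-nodes, after which counting quantifiers on both ends of $B_i$ are strong enough to enforce the transition equations of $\mathcal{A}$. Once the encoding is in place, correctness in both directions is routine: from an accepting run of $\mathcal{A}$ one builds a tree model by instantiating, at each $V$-node, the prescribed number of $T_i$-tokens for each counter $i$ and setting $B_i$ consistently with the propagation scheme above; conversely, from any tree model of $\varphi_\mathcal{A}$ one extracts an accepting run by reading off the state-predicates on $V$-nodes and computing counter values from $B_i$-neighbourhoods. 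Since the translation is polynomial (transition vectors being encoded in binary inside counting quantifiers), any subrecursive upper bound for $\Ctall$-satisfiability would yield one for VATA non-emptiness.
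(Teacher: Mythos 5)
There is a genuine gap at the heart of your counter encoding. You declare the value of counter $i$ at a $V$-node $v$ to be $\#\{t : B_i(v,t)\}$ and then claim that your three categories of tokens, together with counting quantifiers ``on either side'' of $B_i$, enforce the equation $\mathrm{counter}(v)=\sum_{w}\mathrm{counter}(w)+\delta$ over the children $w$ of $v$. But each of your categories is itself a three-body property: ``$t$ is $B_i$-linked both to $v$ and to a $V$-descendant of $v$'' involves $v$, $t$ \emph{and} the descendant, and with only two variables there is no way to quantify the third element without losing the reference to $v$ (navigation does not rescue this, since the token's position in the tree is independent of the child's). Moreover, even if the categories were definable, \Ct{} cannot assert an equality between the cardinalities of two definable sets such as $\{t: B_i(v,t)\}$ and the union over the children $w$ of $\{t: B_i(w,t)\}$: counting quantifiers only compare one cardinality against a fixed constant. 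So the sentence ``counting quantifiers \ldots\ express the identity'' is exactly the step that fails, and it is where all the difficulty of the problem is concentrated.

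The known way around this --- and what the paper actually does --- is to arrange matters so that the additive equation holds \emph{for free} rather than being expressed in the logic: each unit of counter $i$ is a matched \emph{pair}, an increment element and a decrement element at a strict ancestor of it, linked by a binary relation whose classes have size at most two, and $\mathrm{counter}(v)$ is \emph{defined} as the number of pairs straddling $v$. With this convention, $\mathrm{counter}(v)=\sum_{w}\mathrm{counter}(w)+(\text{increments at }v)-(\text{decrements at }v)$ is a set-theoretic identity about straddling pairs, non-negativity is automatic, and the formula only has to enforce two-body conditions (every increment has exactly one partner, located at an ancestor) together with local increment/decrement counts at each node. This is the content of Theorem~4.1 of \cite{BojanczykJACM09} for \FOt{} on data trees; the paper's entire proof consists of observing that the size-$\le 2$ data equivalence used there can be simulated in \Ctall{} by one common binary symbol $E$, forced to be reflexive and symmetric in \FOt{} and to have degree at most one by a counting quantifier. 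If you want a self-contained argument you should switch to this matched-pair encoding; the token-ownership scheme as you describe it cannot be completed in \Ctall{}.
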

\begin{proof}
  To prove the theorem we can mimic the reduction of vector addition
  tree automata to two-variable logic on \emph{data trees} given in
  Thm. 4.1 in \cite{BojanczykJACM09}.  Data trees are just trees with an
  additional, uninterpreted equivalence relation on nodes. In the
  reduction there the intended equivalence classes are of size at most
  two.  We can easily simulate this by a use a common binary symbol $E
  \in \tau_{com}$, constraining it to be reflexive and symmetric
  (which is naturally expressible in \FOt{}), and using counting
  quantifiers to enforce that each element is connected by $E$ to at
  most one other element. The remaining details of the proof remain
  unchanged. In the proof we do not need to use $\succh$ nor $\lessh$.
\end{proof}
\endgroup

\bibliography{bibliography}

\end{document}